\definecolor{myurlcolor}{rgb}{0,0,0.7}
\definecolor{myrefcolor}{rgb}{0.8,0,0}
\newcommand{\cross}{\mathbin{\tikz [x=1.6ex,y=1.6ex,line width=.15ex] \draw (0,-0.2) -- (1,0.8) (0,0.8) -- (1,-0.2);}}%
\renewcommand{\v}[1]{\ensuremath{\mathbf{#1}}} 
\newcommand{\gv}[1]{\ensuremath{\text{\boldmath$ #1 $}}}
\newcommand{\abs}[1]{\left| #1 \right|} 
\newcommand{\norm}[1]{\left\| #1 \right\|} 
\newcommand{\trace}{\mathrm{Tr}}
\newcommand{\expr}{\mathrm{expr}} 
\newcommand{\tb}{{\textsc{b}}}
\newcommand{\mS}{{\mathcal{S}}}
\newcommand{\mH}{{\mathcal{H}}}
\newcommand{\mM}{{\mathcal{M}}}
\newcommand{\mj}{{\mathrm{mj}}}
\newcommand{\frakn}{{\mathfrak{n}}}
\newcommand{\frakS}{{\mathfrak{S}}}
\newcommand{\frakK}{{\mathfrak{K}}}
\newcommand{\frakB}{{\mathfrak{B}}}
\newcommand{\frakm}{{\mathfrak{m}}}
\newcommand{\mD}{{\mathcal{D}}}
\newcommand{\mE}{{\mathcal{E}}}
\newcommand{\mT}{{\mathcal{T}}}
\newcommand{\id}{{\mathbbm{1}}}
\newcommand{\vx}{{\gv{x}}}
\newcommand{\vt}{{\gv{t}}}
\newcommand{\vm}{{\gv{m}}}
\newcommand{\vlambda}{{\gv{\lambda}_\theta}}
\newcommand{\vy}{{\gv{y}}}
\newcommand{\va}{{\gv{a}}}
\newcommand{\vb}{{\gv{b}}}
\newcommand{\vc}{{\gv{c}}}
\newcommand{\bR}{{\mathbb{R}}}
\newcommand{\sto}{\mathbb{S}_{d,D}}
\newcommand{\dsto}{\mathbb{S}_{D,D}^{\rm db}}
\newcommand{\bE}{{\mathbb{E}}}
\newcommand{\bC}{{\mathbb{C}}}
\newcommand{\bU}{{\mathbb{U}}}
\newcommand{\bP}{{\mathbb{P}}}
\renewcommand{\Re}{{\mathrm{Re}}}
\newcommand{\upp}{{\mathrm{upp}}}
\newcommand{\prob}{{\mathrm{Pr}}}
\newcommand{\congorequal}{\cong}
\newcommand{\appropto}{\mathrel{\vcenter{
  \offinterlineskip\halign{\hfil$##$\cr
    \propto\cr\noalign{\kern2pt}\sim\cr\noalign{\kern-2pt}}}}}
\newcommand{\dket}[1]{\vert {#1} \rangle \! \rangle} 
\newcommand{\dbra}[1]{\langle \! \langle {#1} \vert} 
\let\baraccent=\= 
\renewcommand{\=}[1]{\stackrel{#1}\cong} 
\newcommand{\thmref}[1]{\hyperref[#1]{Theorem~\ref{#1}}}
\newcommand{\propref}[1]{\hyperref[#1]{Proposition~\ref{#1}}}
\newcommand{\lemmaref}[1]{\hyperref[#1]{Lemma~\ref{#1}}}
\newcommand{\figref}[1]{\hyperref[#1]{Fig.~\ref{#1}}}
\newcommand{\tableref}[1]{\hyperref[#1]{Table~\ref{#1}}}
\newcommand{\figaref}[1]{\hyperref[#1]{Fig.~\ref{#1}a}}
\newcommand{\figbref}[1]{\hyperref[#1]{Fig.~\ref{#1}b}}
\newcommand{\figcref}[1]{\hyperref[#1]{Fig.~\ref{#1}c}}
\newcommand{\figdref}[1]{\hyperref[#1]{Fig.~\ref{#1}d}}
\newcommand{\figeref}[1]{\hyperref[#1]{Fig.~\ref{#1}e}}
\renewcommand{\eqref}[1]{\hyperref[#1]{Eq.~(\ref{#1})}}
\newcommand{\secref}[1]{\hyperref[#1]{Sec.~\ref{#1}}}
\newcommand{\secsref}[2]{\hyperref[#1]{Sec.~\ref{#1}-\ref{#2}}}
\newcommand{\eqsref}[2]{\hyperref[#1]{Eqs.~(\ref{#1})-(\ref{#2})}}
\newcommand{\appref}[1]{\hyperref[#1]{Appx.~\ref{#1}}}
\newtheorem{theorem}{Theorem}
\newtheorem{proposition}[theorem]{Proposition}
\newtheorem{lemma}[theorem]{Lemma}
\definecolor{ginger}{rgb}{0.69, 0.4, 0.0}
\definecolor{blue-pigment}{rgb}{0.2, 0.2, 0.6}
\definecolor{blue-violet}{rgb}{0.54, 0.17, 0.89}
\begin{document}

\title{Optimal protocols for quantum metrology with noisy measurements}

\author{Sisi Zhou}\email{sisi.zhou26@gmail.com}
\affiliation{Institute for Quantum Information and Matter, California Institute of Technology, Pasadena, CA 91125, USA}
\affiliation{Perimeter Institute for Theoretical Physics, Waterloo, Ontario N2L 2Y5, Canada}

\author{Spyridon Michalakis}\email{spiros@caltech.edu}
\affiliation{Institute for Quantum Information and Matter, California Institute of Technology, Pasadena, CA 91125, USA}

\author{Tuvia Gefen}\email{tgefen@caltech.edu}
\affiliation{Institute for Quantum Information and Matter, California Institute of Technology, Pasadena, CA 91125, USA}

\date{\today}

\begin{abstract}

Measurement noise is a major source of noise in quantum metrology. Here, we explore preprocessing protocols that apply quantum controls to the quantum sensor state prior to the final noisy measurement (but after the unknown parameter has been imparted), aiming to maximize the estimation precision. We define the quantum preprocessing-optimized Fisher information, which determines the ultimate precision limit for quantum sensors under measurement noise, and conduct a thorough investigation into optimal preprocessing protocols. First, we formulate the preprocessing optimization problem as a biconvex optimization using the error observable formalism, based on which we prove that unitary controls are optimal for pure states and derive analytical solutions of the optimal controls in several practically relevant cases. Then we prove that for classically mixed states (whose eigenvalues encode the unknown parameter) under commuting-operator measurements, coarse-graining controls are optimal, while unitary controls are suboptimal in certain cases. Finally, we demonstrate that in multi-probe systems where noisy measurements act independently on each probe, the noiseless precision limit can be asymptotically recovered using global controls for a wide range of quantum states and measurements. Applications to noisy Ramsey interferometry and thermometry are presented, as well as explicit circuit constructions of optimal controls.

\end{abstract}

\maketitle

\section{Introduction} 
\label{sec:intro}

Quantum metrology is one of the pillars of quantum science and technology~\cite{degen2017quantum, pezze2018quantum, pirandola2018advances, giovannetti2011advances,giovannetti2006quantum}. This field deals with fundamental precision limits of parameter estimation imposed by quantum physics. Notably, it seeks to use non-classical effects to enhance the estimation precision of unknown parameters in quantum systems, which has led to the development of improved sensing protocols in various experimental platforms~\cite{caves1981quantum,leibfried2004toward,mitchell2004super,tsang2016quantum,kaubruegger2021quantum,marciniak2022optimal}.
To characterize the metrological limit of quantum sensors, the quantum Cram\'{e}r--Rao bound (QCRB)~\cite{holevo1982probabilistic,helstrom1976quantum}, which is saturable for large number of experiments, is conventionally used. It is defined using the quantum Fisher information (QFI)~\cite{braunstein1994statistical,liu2019quantum,petz1996geometries}, which is one of the most useful and celebrated tools in quantum metrology, with a considerable amount of research focused on developing better ways to calculate and bound it~\cite{fujiwara2008fibre,escher2011general,demkowicz2012elusive,yuan2017fidelity,sone2021generalized,altherr2021quantum}. 

Although the QCRB and the QFI apply extensively in quantum sensing, they are defined assuming that arbitrary quantum measurements can be applied on quantum states to extract information about the unknown parameter. However, in actual experimental platforms, such as nitrogen-vacancy centers~\cite{dutt2007quantum,hanson2008coherent,taylor2008high,jiang2009repetitive,doherty2013nitrogen,unden2016quantum,schmitt2021optimal}, superconducting qubits~\cite{krantz2019quantum}, trapped ions~\cite{myerson2008high,bruzewicz2019trapped}, and more, measurements are often noisy and time-expensive, rendering the sensitivity of practical quantum devices far from the theoretical limits given by the QCRB. 
In particular, measurement noise remains a significant source of noise in quantum sensing experiments. Other sources of noise, such as system evolution and state preparation, have been studied extensively, with methods developed to mitigate their effect~\cite{fujiwara2008fibre,viola1999dynamical,
kessler2014quantum,dur2014improved,arrad2014increasing,liu2017quantum,yamamoto2021error,albarelli2019restoring,verstraete2009quantum,jiang2009preparation,johnsson2020geometric,zhou2018achieving,zhou2021asymptotic,liu2023optimal}. 

To tackle the effect of measurement noise on quantum metrology, interaction-based readouts were proposed~\cite{davis2016approaching,frowis2016detecting,macri2016loschmidt,nolan2017optimal,haine2018using,koppenhofer2023squeezed} and demonstrated experimentally~\cite{hosten2016quantum,linnemann2016quantum,li2023improving}, where bespoke inter-particle interactions that enhance phase estimation precision in spin ensembles are applied before the noisy measurement step and after the probing step. 
The idea of employing unitary controls in a preprocessing manner, i.e. after the unknown parameter has been imparted but prior to the final measurement, was later formulated as the imperfect (or noisy) QFI problem~\cite{haine2018using,len2021quantum}, where the preprocessing is optimized over all unitary operations. 
Classical post-processing methods, such as measurement error mitigation~\cite{maciejewski2020mitigation,geller2020rigorous,bravyi2021mitigating}, can then work in complement to the quantum preprocessing method for parameter estimation under noisy measurements.

Apart from a few specific cases, such as 
qubit sensors with lossy photon detection~\cite{len2021quantum}, setting the metrological limit under measurement noise by computing imperfect QFI has been difficult, limiting its practical application. In this work, we propose a more general measurement optimization scheme, where arbitrary quantum controls (i.e., general quantum channels that can be implemented utilizing unitary gates and ancillas) are applied before the noisy measurement. The goal is to identify the FI optimized over all quantum preprocessing channels for general quantum states and measurements, that we call the quantum preprocessing-optimized FI (QPFI) and quantifies the ultimate power of quantum sensors with measurement noise, and to obtain the corresponding optimal controls, that can be applied to achieve the optimal sensitivity in practical experiments. 

We systematically study the QPFI, along with the corresponding optimal preprocessing controls in this work. 
In \secref{sec:def}, we first define the QPFI and review related concepts. We then introduce the concept of {\it error observables} in \secref{sec:opt}, and use it to demonstrate that the QPFI problem can be cast as a biconvex optimization problem~\cite{gorski2007biconvex}. In turn, this allows us to find analytical conditions for optimality, and to identify optimal controls saturating the QPFI in the setting of commuting measurements applied to pure states (see \secref{sec:pure}). The case of classically mixed states (i.e., states for which the unknown parameter is encoded in the eigenvalues) is studied in \secref{sec:classical}. Besides analytical solutions, we also manage to prove that unitary controls are optimal for pure states under general measurements, and that coarse-graining controls are optimal for classically mixed states under commuting-operator measurements, with a counterexample illustrating the non-optimality of unitary controls. 
For general mixed states, we further prove useful bounds on the QPFI in \secref{sec:general}. In terms of the asymptotic behavior of identical local measurements acting on multi-probe systems, 
in \secref{sec:asymptotic}, we identify a sufficient condition for the convergence of the QPFI to the QFI using an optimal encoding protocol based on the Holevo--Schumacher--Westmoreland (HSW) theorem~\cite{holevo1998capacity,schumacher1997sending}. We show that the relevant condition is satisfied by a generic class of quantum states, including low-rank states, permutation-invariant states, and Gibbs states (with an unknown temperature), while previously only the pure state case was proven~\cite{len2021quantum}. 

Our results provide a theoretically-accessible precision bound for quantum metrology under noisy measurements, along with a roadmap towards preprocessing optimization in sensing experiments.

\section{Definitions}
\label{sec:def}

Given a quantum state $\rho_\theta$ as a function of an unknown parameter $\theta$, the procedure to estimate $\theta$ goes as follows (see \figaref{fig:measurement}): (1) Perform a quantum measurement $\{M_i\}$ on $\rho_\theta$, which gives a measurement outcome $i$ with probability $p_{i,\theta} = \trace(\rho_\theta M_i)$; (2) Infer the value of $\theta$ using an estimator $\hat{\theta}$, which is a function of the measurement outcome $i$; (3) Repeat the above two steps multiple times and use the average of $\hat{\theta}$ over many trials as the final estimate of $\theta$.  
Here, the quantum measurement $\{M_i\}$ is mathematically formulated as a positive operator-valued measure (POVM)~\cite{nielsen2002quantum} that satisfies $M_i \geq 0$ and $\sum_i M_i = \id$ (we use $A \geq 0$ to indicate an operator $A$ that is positive semidefinite). 
We also assume in this work that $\rho_\theta$ and $M_i$ lie in finite-dimensional Hilbert spaces, with measurement outcomes contained in a finite set.

\begin{figure}[tb]
    \centering
    \includegraphics[width=0.45\textwidth]{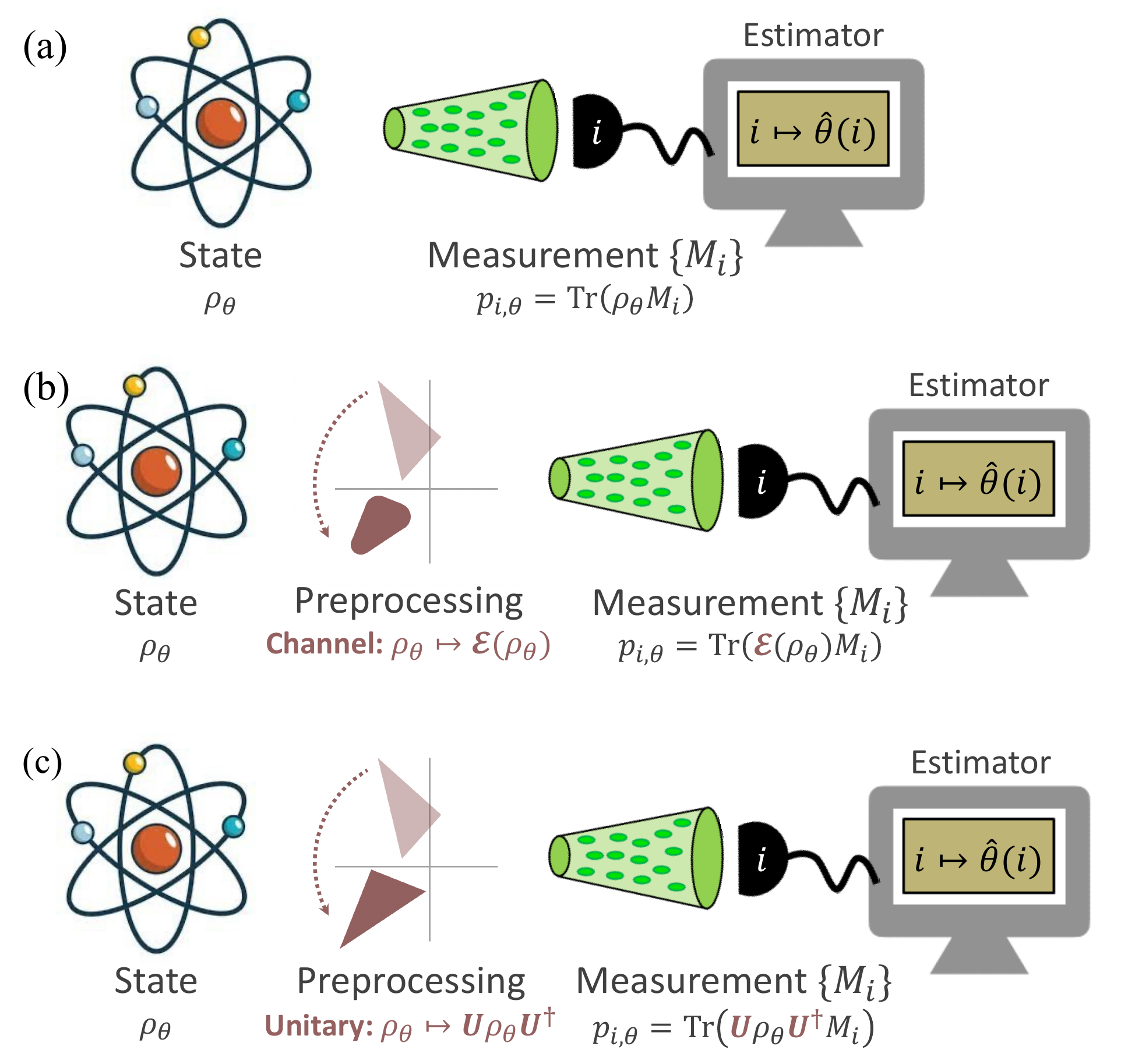}
    \caption{(a)~Standard parameter estimation procedure of a quantum state $\rho_\theta$ using a quantum measurement $\{M_i\}$. The estimation of $\theta$ is through an unbiased estimator $\hat{\theta}$ as a function of measurement outcomes $i$. The CRB states $\Delta \hat \theta \geq 1/\sqrt{N_\expr F(\rho_\theta,\{M_i\})}$. 
    (b)~Preprocessing protocols where the measurement device is fixed, and the quantum control acting before the measurement is optimized over all quantum channels. The CRB states $\Delta \hat \theta \geq 1/\sqrt{N_\expr  F^\bP(\rho_\theta,\{M_i\})}$. (c)~Preprocessing protocols where the measurement device is fixed, and the quantum control acting before the measurement is optimized over all unitary channels. The CRB states $\Delta \hat \theta \geq 1/\sqrt{N_\expr  F^\bU(\rho_\theta,\{M_i\})}$.  
    Different types of FIs discussed in this work satisfy $F(\rho_\theta,\{M_i\}) \leq F^\bU(\rho_\theta,\{M_i\}) \leq F^\bP(\rho_\theta,\{M_i\}) \leq  J(\rho_\theta)$ (and each inequality can be strict). 
    }
    \label{fig:measurement}
\end{figure}

In estimation theory, the Cram\'er--Rao bound (CRB)~\cite{kay1993fundamentals,lehmann2006theory,kobayashi2011probability} provides a lower bound on the estimation error for any locally unbiased estimator $\hat{\theta}$ at a local point $\theta_0$ where $\rho_\theta$ is differentiable, satisfying 
\begin{equation}
\label{eq:locally-unbiased}
\bE[\hat{\theta}|\theta_0] = \theta_0,\quad \text{and} \quad 
\frac{\partial}{\partial \theta}\bE[\hat{\theta}|\theta] \bigg|_{\theta=\theta_0} = 1,
\end{equation}
where we use $\bE[\cdot|\theta]$ to denote the conditional expectation over the probability distribution $\{p_{i,\theta}\}$. The above condition indicates that locally unbiased estimators $\hat\theta$ provide an unbiased estimation of $\theta$ at the point $\theta_0$, which is also precise up to first order in its neighborhood. Note that in the following we will implicitly use $\bE[\cdot]$ to represent $\bE[\cdot|\theta]$ and consider locally unbiased estimators at a local point $\theta$.
The CRB states that the estimation error $\Delta\hat\theta$ (i.e., the standard deviation of the estimator $\hat{\theta}$) has the following lower bound: 
\begin{equation}
\label{eq:CRB}
    \Delta\hat\theta := (\bE[(\hat{\theta}-\theta)^2])^{\frac{1}{2}} \geq \frac{1}{\sqrt{N_{\rm expr} F(\rho_{\theta},\{M_i\})}}, 
\end{equation}
where $N_{\rm expr}$ is the number of experiments performed, and $F(\rho_{\theta},\{M_i\})$ is the FI of the probability distribution $\{p_{i,\theta} = \trace(\rho_\theta M_i)
\}$~\cite{kay1993fundamentals,lehmann2006theory,kobayashi2011probability}, defined by 
\begin{equation}
\label{eq:FI}
    F(\rho_{\theta},\{M_i\}):= \sum_{i:\trace(\rho_\theta M_i)\neq 0}\frac{(\trace(\partial_\theta\rho_\theta M_i))^2}{\trace(\rho_\theta M_i)}. 
\end{equation}
The CRB is often saturable asymptotically (i.e., when $N_{\rm expr} \rightarrow \infty$) using the maximum likelihood estimator~\cite{kay1993fundamentals,lehmann2006theory,kobayashi2011probability} and therefore the FI, which is inversely proportional to the variance of the estimator, serves as a good measure of the degree of sensitivity of $\{p_{i,\theta}\}$ with respect to $\theta$. One caveat is the CRB only applies to locally unbiased estimators and can be violated by biased estimators. Additionally, there exist singular cases where maximum likelihood estimators are no longer necessarily asymptotically unbiased, e.g., when the support of $\{p_{i,\theta}\}$ varies in the neighborhood of $\theta$, and the CRB may not apply to them~\cite{seveso2019discontinuity}. However, for self-consistency, this paper will focus only on optimizing the FI, regardless of the limitations of the CRB. 

The QFI of $\rho_\theta$ is the FI maximized over all possible quantum measurements on $\rho_\theta$ (see \appref{app:QFI-attainable} for further details) and we will refer to the optimal measurements as QFI-attainable measurements. 
Formally, the QFI is defined by~\cite{helstrom1976quantum,holevo1982probabilistic,braunstein1994statistical}
\begin{equation}
\label{eq:QFI}
    J(\rho_\theta) = \max_{\{M_i\}} F(\rho_{\theta},\{M_i\}), 
\end{equation}
giving rise to the QCRB 
\begin{equation}
\label{eq:QCRB}
    \Delta\hat\theta \geq \frac{1}{\sqrt{N_{\rm expr} J(\rho_\theta)}}, 
\end{equation}
which characterizes the ultimate lower bound on the estimation error. Going forward, we will also overload the notation and write 
\begin{equation}
\label{eq:classical-FI}
     J(\{p_{i,\theta}\}):= \sum_{i:p_{i,\theta}\neq 0} \frac{(\partial_\theta p_{i,\theta})^2}{p_{i,\theta}} , 
\end{equation} 
to denote the FI of a classical probability distribution $\{p_{i,\theta}\}$, satisfying $p_{i,\theta} \geq 0$ and $\sum_i p_{i,\theta} = 1$. Note that, from now on, we will implicitly assume that the summation is taken over terms with non-zero denominators. 

In practice, the optimal measurements achieving the QFI are not always implementable, restricting the range of applications of the QCRB. For example, the projective measurement onto the basis of the symmetric logarithmic operators, which is usually a correlated measurement among multiple probes, is known to be optimal~\cite{braunstein1994statistical}, while quantum measurements in experiments are usually noisy and not exactly projective. Here, we consider a metrological protocol in which arbitrary quantum controls can be implemented, after the unknown parameter $\theta$ has been imparted to the quantum sensor state $\rho_\theta$ and before a fixed quantum measurement is performed (see \figbref{fig:measurement}). We call this additional step ``preprocessing'', ``pre-measurement-processing'' in full. Note that the idea of implementing preprocessing quantum controls to improve sensitivity goes beyond the FI formalism and applies to other figures of merit of quantum sensors~\cite{alderete2022inference}. This model effectively describes quantum experiments where the measurement error is dominant, while the gate implementation error and the state preparation error is relatively small, a noise model that arises naturally in modern quantum devices such as nitrogen-vacancy centers~\cite{dutt2007quantum,hanson2008coherent,taylor2008high,jiang2009repetitive} and superconducting qubits~\cite{krantz2019quantum}. 

To quantify the sensitivity of estimating $\theta$ on $\rho_\theta$ with the measurement $\{M_i\}$ fixed, we define the {FI optimized over all preprocessing quantum channels}, or the \emph{quantum preprocessing-optimized Fisher information (QPFI)}, to be 
\begin{equation}
\label{eq:FI-A}
    F^\bP(\rho_{\theta},\{M_i\}) = \sup_{\mE}F(\mE(\rho_{\theta}),\{M_i\}),
\end{equation}
where $\mE$ is an arbitrary quantum channel (or a CPTP map~\cite{watrous2018theory}). See~\appref{app:property} for mathematical properties of the QPFI. In particular, when the  quantum measurement is fixed, the CRB induced by the QPFI, i.e., 
\begin{equation}
\label{eq:CRB-A}
    \Delta\hat\theta \geq \frac{1}{\sqrt{N_\expr F^\bP(\rho_{\theta},\{M_i\})}},
\end{equation}
provides a practical and tighter Cram\'{e}r--Rao-type bound, compared to the QCRB, for parameter estimation under noisy measurements. We assume in the following discussions that all measurements are non-trivial (i.e., $\exists M_i \not\propto \id$, for all $\{M_i\}$) and $\partial_\theta \rho_\theta \neq 0$ so that the QPFI is always positive. 

Unless stated otherwise, we will denote the systems that $\rho_\theta$ and $\{M_i\}$ act on by $\mH_S$ and $\mH_{S'}$, respectively, and we will refer to $\mH_S$ as the input system and $\mH_{S'}$ as the output system.  We do not assume $\mH_S \congorequal \mH_{S'}$ here.  
This broader context is of particular interest when the quantum state $\rho_\theta$ cannot be directly measured (e.g., 
readout of superconducting qubits 
via a resonator~\cite{krantz2019quantum} 
and readout of nuclear spins via an electron spin in a nitrogen-vacancy center~\cite{gefen2018quantum, cujia2019tracking,pfender2019high,cohen2020achieving}); or when the quantum state is restricted to a subsystem of the entire system while quantum measurement can be performed globally. 

Note that for generic noisy measurements, the supremum in \eqref{eq:FI-A} is usually attainable, i.e., there exists an optimal $\mE$ such that $F(\mE(\rho_{\theta}),\{M_i\})$ is maximized (see \appref{app:singular}). However, there exist singular cases where $F(\mE(\rho_{\theta}),\{M_i\})$ has no maximum, due to the singularity of the FI at the point $\trace(\mE(\rho_\theta) M_i) = 0$ (see \secref{sec:binary} for an example). In such cases, there still exist near-optimal quantum controls that attain $\sup_{\mE}F(\mE(\rho_{\theta}),\{M_i\}) - \eta$ for any small $\eta > 0$. In fact, we prove in \appref{app:singular} that: 
\begin{restatable}{theorem}{thmattain}
\label{thm:attainability}
Let $M_i^{(\epsilon)} = (1-\epsilon)M_i + \epsilon\trace(M_i)\frac{\id}{d}$, where $d = \dim(\mH_{S'})$ and $0 < \epsilon < 1$. Then 
\begin{equation}
    F^\bP(\rho_{\theta},\{M_i\}) = \lim_{\epsilon \rightarrow 0^+} F^\bP(\rho_{\theta},\{M_i^{(\epsilon)}\}), 
\end{equation}
and the QPFI $F^\bP(\rho_{\theta},\{M_i^{(\epsilon)}\})$ is attainable for any $\epsilon \in (0,1]$.
\end{restatable}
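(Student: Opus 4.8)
The plan is to reduce everything to a single algebraic identity that trades the regularization of the measurement for a depolarization of the state. Write $q_i := \trace(M_i)/d$, so that $M_i^{(\epsilon)} = (1-\epsilon)M_i + \epsilon q_i\id$ and $\{M_i^{(\epsilon)}\}$ is a valid POVM (using $\sum_i q_i = \trace(\id)/d = 1$). Let $\mD_\epsilon(X) := (1-\epsilon)X + \epsilon\,\trace(X)\,\id/d$ be the depolarizing channel on $\mH_{S'}$. First I would record that for any state $\sigma$ and any traceless Hermitian $\tau$ on $\mH_{S'}$ one has $\trace(\sigma M_i^{(\epsilon)}) = \trace(\mD_\epsilon(\sigma)M_i)$ and $\trace(\tau M_i^{(\epsilon)}) = (1-\epsilon)\trace(\tau M_i) = \trace(\mD_\epsilon(\tau)M_i)$, the last step using $\trace(\tau)=0$. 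Since $\mE$ is trace preserving, $\trace(\mE(\partial_\theta\rho_\theta)) = \partial_\theta\trace(\rho_\theta) = 0$, so taking $\sigma = \mE(\rho_\theta)$, $\tau = \mE(\partial_\theta\rho_\theta)$ and comparing the defining sums in \eqref{eq:FI} term by term yields the key identity
\begin{equation}
F(\mE(\rho_\theta),\{M_i^{(\epsilon)}\}) = F\big((\mD_\epsilon\circ\mE)(\rho_\theta),\{M_i\}\big),
\end{equation}
valid for every channel $\mE$ and every $\epsilon\in(0,1]$; the two index sets coincide because $\trace(\mE(\rho_\theta)M_i^{(\epsilon)})\geq\epsilon q_i$ is nonzero exactly when $M_i\neq 0$.

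For attainability at fixed $\epsilon\in(0,1]$, I would parametrize $\mE$ by its Choi matrix, so the domain is the compact set of CPTP maps $\mH_S\to\mH_{S'}$. Because each surviving denominator obeys $\trace(\mE(\rho_\theta)M_i^{(\epsilon)})\geq\epsilon q_i>0$ uniformly in $\mE$, every summand of $F(\mE(\rho_\theta),\{M_i^{(\epsilon)}\})$ is a ratio of polynomials in the Choi matrix with non-vanishing denominator, hence continuous; a continuous function on a compact set attains its maximum, giving attainability.

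For the convergence I would prove two inequalities. The upper bound is immediate from the identity: since $\mD_\epsilon\circ\mE$ is itself a quantum channel, $F(\mE(\rho_\theta),\{M_i^{(\epsilon)}\}) = F((\mD_\epsilon\circ\mE)(\rho_\theta),\{M_i\}) \leq F^\bP(\rho_\theta,\{M_i\})$, and taking the supremum over $\mE$ gives $F^\bP(\rho_\theta,\{M_i^{(\epsilon)}\})\leq F^\bP(\rho_\theta,\{M_i\})$ for all $\epsilon$, hence $\limsup_{\epsilon\to0^+}F^\bP(\rho_\theta,\{M_i^{(\epsilon)}\})\leq F^\bP(\rho_\theta,\{M_i\})$. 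For the lower bound I would fix an arbitrary channel $\mE'$, use it as a feasible point so that $F^\bP(\rho_\theta,\{M_i^{(\epsilon)}\})\geq F((\mD_\epsilon\circ\mE')(\rho_\theta),\{M_i\})$, and let $\epsilon\to0^+$: the surviving (nonsingular) summands converge to the corresponding terms of $F(\mE'(\rho_\theta),\{M_i\})$ while the remaining summands stay nonnegative, so a term-by-term liminf gives $\liminf_{\epsilon\to0^+}F^\bP(\rho_\theta,\{M_i^{(\epsilon)}\})\geq F(\mE'(\rho_\theta),\{M_i\})$; optimizing over $\mE'$ yields $\liminf_{\epsilon\to0^+}\geq F^\bP(\rho_\theta,\{M_i\})$. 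Since $\liminf\leq\limsup$ always holds, the two bounds pinch the value, proving both existence of the limit and the claimed equality (understood in $[0,+\infty]$, so the divergent case is handled automatically).

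The main obstacle is that $F(\,\cdot\,,\{M_i\})$ is genuinely discontinuous in its state argument: approaching a configuration where some $\trace(\mE(\rho_\theta)M_i)$ vanishes while $\trace(\mE(\partial_\theta\rho_\theta)M_i)\neq0$ makes the regularized Fisher information diverge, so the naive data-processing estimate $F(\sigma,\{M_i^{(\epsilon)}\})\leq F(\sigma,\{M_i\})$ simply fails in the singular cases the theorem is designed to cover. The device that circumvents this is exactly the identity above, which moves the regularization into the state slot where $F$ is only \emph{lower} semicontinuous along the relevant direction $\mD_\epsilon\circ\mE'$; this asymmetry---a clean one-sided upper bound from channel inclusion versus a lower bound available only up to a liminf---is the step demanding the most care, and it is what reconciles the finite-but-unattained case with the genuinely divergent one.
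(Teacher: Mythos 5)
Your proposal is correct and takes essentially the same route as the paper: the self-dual depolarizing-channel identity $\trace(\sigma M_i^{(\epsilon)}) = \trace(\mD_\epsilon(\sigma)M_i)$ gives the $\limsup$ direction by absorbing $\mD_\epsilon$ into the optimization over channels (exactly the paper's argument), compactness of the channel set together with denominators uniformly bounded below by $\epsilon\trace(M_i)/d$ gives attainability at fixed $\epsilon$ (the paper's Lemma S1), and fixing a near-optimal channel for the original measurement gives the $\liminf$ direction (the paper's Lemma S2). The only difference is presentational: where the paper's Lemma S2 runs explicit quantitative estimates (introducing the constant $c'$ and an $\eta/2$ budget), you use a qualitative term-by-term $\liminf$, which is equally valid and slightly cleaner.
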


In the following, we will focus mostly on the case where the QPFI is attainable. We will discuss the behavior of the QPFI, exploring numerical optimization algorithms and analytical solutions to the optimal controls for certain practically relevant quantum states and measurements. 

We will also examine the FI optimized over all unitary preprocessing channels, which we call the \emph{quantum unitary-preprocessing-optimized Fisher information (QUPFI)}~\cite{haine2018using,len2021quantum}
\begin{equation}
\label{eq:FI-U}
    F^\bU(\rho_{\theta},\{M_i\}) = \sup_{U}F(U\rho_{\theta}U^\dagger,\{M_i\}),
\end{equation}
where $U$ is an arbitrary unitary gate.  
(Note that our QUPFI is the same as the imperfect QFI in~\cite{len2021quantum}.)
Unlike the QPFI, we assume $\mH_{S'} \congorequal \mH_{S}$ (and do not distinguish between $S'$ and $S$) when we talk about the QUPFI, so that it is well defined. We note here that \thmref{thm:attainability} holds for the QUPFI, as well.

The optimal preprocessing controls that attain the QPFI and the QUPFI usually depend on $\theta$, whose value should be roughly known before the experiment. Otherwise, one might use the two-step method by first using $\sqrt{N_{\rm expr}}$ states to obtain a rough estimate $\tilde\theta \approx \theta$, and then performing the optimal controls based on $\tilde \theta$ on the remaining $N_{\rm expr}-\sqrt{N_{\rm expr}}$ states~\cite{barndorff2000fisher,hayashi2011comparison,yang2019attaining}. The two-step procedure introduces a negligible amount of error asymptotically.

Before we proceed, we prove a relation between the QPFI and the QUPFI that will be useful later. 
\begin{proposition}
\label{prop:channel-unitary}
Let $\mH_{S}$ and $\mH_{S'}$ be the input and output systems of $\mE$. Suppose $\mH_{A_1}$ and $\mH_{A_2}$ are ancillary systems such that $\mH_{A_1}\otimes\mH_{S} \congorequal \mH_{A_2}\otimes\mH_{S'}$. If $\dim(\mH_{A_1}) \geq \dim(\mH_{S'})^2$ (or equivalently, $\dim(\mH_{A_2}) \geq \dim(\mH_{S})\dim(\mH_{S'})$), then 
\begin{multline}
\label{eq:channel-unitary}
    F^\bP\big((\rho_\theta)_{S},\{(M_i)_{S'}\}\big) =\\ F^\bU\big((\rho_\theta)_{S}\otimes\ket{0_{A_1}}\bra{0_{A_1}},\{(M_i)_{S'}\otimes\id_{A_2}\}\big),
\end{multline}
where we use subscripts to denote the systems the operators are acting on. 
\end{proposition}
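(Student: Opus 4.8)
The plan is to establish the two inequalities $F^\bP \geq F^\bU$ and $F^\bP \leq F^\bU$ separately, in each case using a Stinespring dilation to pass between a channel acting on $\mH_S$ and a unitary acting on the enlarged space $\mH_{A_1}\otimes\mH_S \congorequal \mH_{A_2}\otimes\mH_{S'}$. The single observation underlying both directions is that, because the POVM element $(M_i)_{S'}\otimes\id_{A_2}$ acts as the identity on $A_2$, the outcome probabilities depend on the post-control state only through its reduced state on $S'$, i.e. $\trace[((M_i)_{S'}\otimes\id_{A_2})\,\sigma] = \trace[(M_i)_{S'}\,\trace_{A_2}(\sigma)]$. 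Since the control (channel or unitary) is applied after $\theta$ is imprinted and is therefore $\theta$-independent, both the probabilities $p_{i,\theta}$ and their derivatives $\partial_\theta p_{i,\theta}$ are preserved under the correspondence, so the classical FIs coincide term by term.

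For the direction $F^\bU \leq F^\bP$, I would take an arbitrary unitary $U$ on $\mH_{A_1}\otimes\mH_S\congorequal\mH_{A_2}\otimes\mH_{S'}$ and define the CPTP map $\mE(\cdot) = \trace_{A_2}[U(\,\cdot\,\otimes\ket{0_{A_1}}\bra{0_{A_1}})U^\dagger]$ from $\mH_S$ to $\mH_{S'}$. By the reduction identity above, $F(U(\rho_\theta\otimes\ket{0_{A_1}}\bra{0_{A_1}})U^\dagger,\{(M_i)_{S'}\otimes\id_{A_2}\}) = F(\mE(\rho_\theta),\{(M_i)_{S'}\})$, and since $\mE$ is a legitimate channel the right-hand side is at most $F^\bP$. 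Taking the supremum over $U$ gives the claim; note that this direction requires no dimension assumption.

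For the converse $F^\bP \leq F^\bU$, I would start from an arbitrary channel $\mE:\mH_S\to\mH_{S'}$, fix a Kraus representation $\mE(\cdot)=\sum_{k=1}^r K_k(\cdot)K_k^\dagger$ with minimal $r$ (the Choi rank, which obeys $r\leq \dim(\mH_S)\dim(\mH_{S'})$), and build the Stinespring isometry $W:\ket{\psi}_S\mapsto\sum_k (K_k\ket{\psi})_{S'}\otimes\ket{k}_{A_2}$. The dimension hypothesis enters exactly here: $\dim(\mH_{A_2})\geq\dim(\mH_S)\dim(\mH_{S'})\geq r$ (equivalent to $\dim(\mH_{A_1})\geq\dim(\mH_{S'})^2$) guarantees enough orthonormal environment states $\ket{k}_{A_2}$, while the exact matching $\dim(\mH_{A_1})\dim(\mH_S)=\dim(\mH_{A_2})\dim(\mH_{S'})$ guarantees that the map $\ket{\psi}_S\otimes\ket{0_{A_1}}\mapsto W\ket{\psi}$ extends to a full unitary $U$ on the identified space. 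Applying the reduction identity again yields equal FIs, and taking the supremum over $\mE$ gives $F^\bP\leq F^\bU$.

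The only genuinely delicate point is the bookkeeping in the converse: one must verify that the Choi-rank bound together with the exact dimension-matching condition lets the partial isometry, defined only on the subspace $\mH_S\otimes\ket{0_{A_1}}$, be completed to a unitary on the whole space $\mH_{A_1}\otimes\mH_S\congorequal\mH_{A_2}\otimes\mH_{S'}$. This is a standard unitary-completion argument, and the stated bound $\dim(\mH_{A_1})\geq\dim(\mH_{S'})^2$ is precisely what makes it go through for every channel at once; everything else reduces to the routine observation that identity action of the POVM on the ancilla reproduces the partial trace.
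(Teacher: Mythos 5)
Your proposal is correct and follows essentially the same route as the paper's proof: Stinespring dilation (with the Kraus-rank bound $r \leq \dim(\mH_S)\dim(\mH_{S'})$ and the dimension-matching arithmetic) for the direction $F^\bP \leq F^\bU$, and the observation that $\trace_{A_2}(U(\,\cdot\,\otimes\ket{0_{A_1}}\bra{0_{A_1}})U^\dagger)$ is a valid channel, combined with the invariance of the FI under tracing out a subsystem on which the POVM acts trivially, for the converse. Your added remarks---that the easy direction needs no dimension hypothesis, and the explicit unitary-completion bookkeeping---merely spell out details the paper delegates to the standard Stinespring argument.
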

\begin{proof}
Any quantum channel $\mE(\cdot) = \sum_{i=1}^{r_\mE} K_i (\cdot) K_i^\dagger$ from $\mH_S$ to $\mH_{S'}$ can be implemented by acting unitarily on $\mH_S$ and an ancillary system $\mH_{A_1}$, and then tracing over an auxiliary system $\mH_{A_2}$, if $\dim(\mH_{A_2}) \geq r_\mE$ (Stinespring's dilation~\cite{watrous2018theory}). For any quantum channel with the input system $\mH_S$ and the output system $\mH_{S'}$, there always exists a Kraus representation $\mE(\cdot) = \sum_{i=1}^{r_\mE} K_i (\cdot) K_i^\dagger$ such that $r_\mE \leq  \dim(\mH_{S'})\dim(\mH_{S})$~\cite{watrous2018theory}. Therefore, if $\dim(\mH_{A_2}) \geq \dim(\mH_{S'})\dim(\mH_{S})$, the unitary extension should exist. 

Let $\mH_{S}\otimes\mH_{A_1} \congorequal \mH_{S'}\otimes\mH_{A_2}$ be the enlarged, isomorphic input and output Hilbert spaces, respectively. If $\dim(\mH_{A_1}) \geq \dim(\mH_{S'})^2$, then $\dim(\mH_{A_2}) = \frac{\dim(\mH_{A_1}) \dim(\mH_{S})}{\dim(\mH_{S'})} \geq \dim(\mH_{S'})\dim(\mH_{S})$. Thus, there is a unitary $U_\mE$ mapping $\mH_{S}\otimes\mH_{A_1}$ to $\mH_{S'}\otimes\mH_{A_2}$ such that 
\begin{equation}
\label{eq:unitary-channel}
    \mE(\sigma) = \trace_{A_2} (U_\mE (\sigma\otimes\ket{0}\bra{0}) U_\mE^\dagger).
\end{equation}
From \eqref{eq:FI}, it follows that:
\begin{align*}
    F(\mE(\rho_\theta),\{M_i\}) &= F(\trace_{A_2}(U_\mE(\rho_\theta\otimes\ket{0}\bra{0})U_\mE^\dagger),\{M_i\})\\
    &= F(U_\mE(\rho_\theta\otimes\ket{0}\bra{0})U_\mE^\dagger,\{M_i\otimes \id\}), 
\end{align*}
where we omit the subscripts for simplicity. Note that the Stinespring's dilation technique is also useful in relating the QFI of a mixed state to the QFI of its purification in an extended Hilbert space~\cite{fujiwara2008fibre,escher2011general}.
Taking the supremum over $\mE$ in the above equality, we have 
\begin{equation}
F^\bP (\rho_\theta,\{M_i\}) \leq F^\bU(\rho_\theta\otimes\ket{0}\bra{0},\{M_i\otimes \id\}). 
\end{equation}
On the other hand, for any $U$ from $\mH_{S}\otimes\mH_{A_1}$ to $\mH_{S'}\otimes\mH_{A_2}$, $\trace_{A_2} (U ((\cdot)\otimes\ket{0}\bra{0}) U^\dagger)$ is a quantum channel from $\mH_{S}$ to $\mH_{S'}$, proving the other direction of \eqref{eq:channel-unitary}. 
\end{proof}

\section{Error observable formulation}
\label{sec:opt}

In this section, we will formalize the optimization of FI over quantum preprocessing controls as a biconvex optimization problem using the concepts of error observables. Using this new formulation, the preprocessing optimization problem becomes numerically tractable with standard algorithms for biconvex optimization~\cite{gorski2007biconvex}; and also analytically tractable for practically relevant quantum states (see \secref{sec:pure}). 

Here, we consider the preprocessing optimization problem in \eqref{eq:FI-A}. On the surface, it may appear from the definition of FI (\eqref{eq:FI}) that the target function $F(\mE(\rho),\{M_i\})$ is mathematically formidable. To simplify the target function, we introduce the \emph{error observable} $X$ and the \emph{squared error observable} $X_2$, defined by 
\begin{equation}
    X = \sum_i x_i M_i,\quad\text{and}\quad  X_2 = \sum_i x_i^2 M_i, 
\end{equation}
where $x_i$ is interpreted as the difference between the estimator value  $\hat{\theta}(i)$ and the true value $\theta$, i.e., $x_i = \hat\theta(i) - \theta$. We assume there are $r$ measurement outcomes and use $\vx$ to denote the vector $(x_1,\ldots,x_r)$. The local unbiasedness conditions (\eqref{eq:locally-unbiased}) for a single-shot measurement then become
\begin{equation}
\label{eq:locally-unbiased-2}
    \trace(\rho_\theta X) = 0,\quad \text{and}\quad  \trace(\partial_\theta\rho_\theta X) = 1. 
\end{equation}
It can be verified mathematically (which is essentially a proof of the CRB) that the minimum of the variance of the estimator under the local unbiasedness conditions is the inverse of the FI; that is, 
\begin{equation}
\label{eq:opt-x}
    F(\rho_\theta,\{M_i\})^{-1} = \min_{\vx} \,\trace(\rho_\theta X_2),
    \;\text{s.t.~}\; \text{\eqref{eq:locally-unbiased-2}.} 
\end{equation}
The problem above is a convex optimization over variables $\vx$, which can be solved using, e.g., the method of Lagrange multipliers~\cite{boyd2004convex}. The optimal solution to $\vx$ is 
\begin{equation}
\label{eq:x}
    x_i = \frac{\frac{\trace(\partial_\theta\rho_\theta M_i)}{\trace(\rho_\theta M_i)}}{\sum_{j:\trace(\rho_\theta M_j)\neq 0}\frac{(\trace(\partial_\theta\rho_\theta M_j))^2}{\trace(\rho_\theta M_j)}},
\end{equation}
when $\trace(\rho_\theta M_i) \neq 0$, and $x_i = 0$ when $\trace(\rho_\theta M_i) = 0$. 
Note that the error observable formulation was previously used to derive the QCRB~\cite{helstrom1968minimum}, where the QFI satisfies 
\begin{equation}
\label{eq:QFI-opt}
    J(\rho_\theta)^{-1} = \min_{X} \,\trace(\rho_\theta X^2),
    \;\text{s.t.~}\; \text{\eqref{eq:locally-unbiased-2}}, 
\end{equation}
and $X$ an arbitrary Hermitian matrix subject to the constraints in \eqref{eq:locally-unbiased-2}. This formulation has several useful applications~\cite{pezze2009entanglement,escher2012quantum,macieszczak2014bayesian}. In particular, an algorithm was proposed in \cite{len2021quantum} based on \eqref{eq:QFI-opt}, to optimize the QFI of quantum channels. 

Combining \eqref{eq:opt-x} and \eqref{eq:FI-A}, we have that 
\begin{align}
    F^\bP(\rho_\theta,\{M_i\})^{-1} = \inf_{(\vx,\mE)} \;& \trace(\mE(\rho_\theta) X_2),\label{eq:opt-x-E}\\
    \text{s.t.~}\; & 
    \trace(\mE(\rho_\theta) X) = 0,  \nonumber\\
    & \trace(\mE(\partial_\theta\rho_\theta) X) = 1. \nonumber 
\end{align}
Let $\mH_S$ and $\mH_{S'}$ be the input and output systems of $\mE$ and let $\{\ket{k}_{S}\}_{k=1}^{\dim(\mH_S)}$ and $\{\ket{j}_{S'}\}_{j=1}^{\dim(\mH_{S'})}$ be two sets of orthonormal basis of $\mH_S$ and $\mH_{S'}$, respectively. In the rest of this section, we use matrix representations of operators in the above bases. It is convenient to represent a CPTP map $\mE$ using a linear operator acting on $\mH_{S'}\otimes \mH_{S}$. Let $\mE(\cdot) = \sum_i K_i (\cdot) K_i^\dagger$ be the Kraus representation of $\mE$. Then, the linear operator $\Omega = \sum_i \dket{K_i}\dbra{K_i}$ is usually called the Choi matrix of $\mE$~\cite{watrous2018theory}, where $\dket{\star} := \sum_{jk} (\star)_{jk} \ket{j}_{S'}\ket{k}_{S}$ and $(\star)_{jk} = \bra{j}_{S'}(\star)\ket{k}_{S}$. $\Omega$ corresponds to a CPTP map if and only if $\Omega \geq 0$ and $\trace_{S'}(\Omega)=\id_{S}$. $\mE$ acting on any density operator $\sigma$ can be expressed using $\Omega$ through $\mE(\sigma) = \trace_{S} ((\id\otimes\sigma^T)\Omega)$ (we use $(\cdot)^T$ to denote matrix transpose). Using the Choi matrix representation in \eqref{eq:opt-x-E}, we have:
\begin{theorem}
\label{thm:biconvex}
The optimal value of the following biconvex optimization problem gives the inverse of the QPFI. 
\begin{align}
\label{eq:opt-x-Omega}
    F^\bP(\rho_\theta,\{M_i\})^{-1} = \inf_{(\vx,\Omega)} \;& \trace((X_2\otimes\rho_\theta^T)\Omega),\\
    \text{s.t.~}\; & \Omega\geq 0,\;\trace_{S'}(\Omega) = \id_S,  \nonumber\\
    & \trace((X\otimes\rho_\theta^T)\Omega) = 0,  \nonumber\\
    & \trace((X\otimes\partial_\theta\rho_\theta^T)\Omega) = 1.  \nonumber
\end{align}
\end{theorem}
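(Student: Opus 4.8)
The plan is to recognize that \thmref{thm:biconvex} is simply \eqref{eq:opt-x-E} rewritten in Choi-matrix coordinates, so the entire content reduces to one algebraic identity together with a check of the convexity structure. The central identity I would establish is that for any operator $A$ acting on $\mH_{S'}$ and any operator $\sigma$ acting on $\mH_S$,
\begin{equation}
    \trace(\mE(\sigma)A) = \trace((A\otimes\sigma^T)\Omega),
\end{equation}
where $\Omega$ is the Choi matrix of $\mE$. This follows directly from the relation $\mE(\sigma) = \trace_S((\id_{S'}\otimes\sigma^T)\Omega)$ quoted before the theorem: since $A$ acts only on $S'$, I promote it to $A\otimes\id_S$, convert the partial trace over $S$ into a full trace, and cyclically rearrange, giving $\trace_{S'}(\trace_S((\id\otimes\sigma^T)\Omega)\,A) = \trace((A\otimes\sigma^T)\Omega)$. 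Because this identity is linear in $\sigma$, it holds not just for density operators but also for the Hermitian, traceless operator $\partial_\theta\rho_\theta$.

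With this in hand, I would substitute $A=X_2,\,\sigma=\rho_\theta$ into the objective of \eqref{eq:opt-x-E}, $A=X,\,\sigma=\rho_\theta$ into the first constraint, and $A=X,\,\sigma=\partial_\theta\rho_\theta$ into the second, turning them into $\trace((X_2\otimes\rho_\theta^T)\Omega)$, $\trace((X\otimes\rho_\theta^T)\Omega)=0$, and $\trace((X\otimes\partial_\theta\rho_\theta^T)\Omega)=1$, respectively. The remaining step is to replace the variable $\mE$ by $\Omega$: since $\Omega\mapsto\mE$ is a bijection between CPTP maps from $\mH_S$ to $\mH_{S'}$ and matrices obeying $\Omega\geq 0$, $\trace_{S'}(\Omega)=\id_S$ (the Choi criterion stated above), the infimum over $(\vx,\mE)$ equals the infimum over $(\vx,\Omega)$ subject to those two constraints, which is exactly \eqref{eq:opt-x-Omega}.

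Finally, I would justify the term \emph{biconvex}. Fixing $\vx$ (hence $X$ and $X_2$), the objective and both equality constraints are linear in $\Omega$, and together with $\Omega\geq0$ and $\trace_{S'}(\Omega)=\id_S$ the problem is a semidefinite program, hence convex in $\Omega$. Conversely, fixing $\Omega$ (hence $\mE(\rho_\theta)$ and $\mE(\partial_\theta\rho_\theta)$), the objective $\sum_i x_i^2\,\trace(\mE(\rho_\theta)M_i)$ is a nonnegatively weighted sum of squares, convex in $\vx$ because each weight $\trace(\mE(\rho_\theta)M_i)\geq0$, while the two constraints are affine in $\vx$. Thus the problem is convex in each block separately, establishing biconvexity.

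I do not expect a genuine obstacle here, as the result is a reformulation rather than a substantive new bound. The only point demanding care is the bookkeeping in the Choi identity, namely tracking which subsystem each operator acts on and where the transpose $(\cdot)^T$ lands. I would therefore verify it on elementary operators of the form $\dket{\cdot}\dbra{\cdot}$ to be certain the conventions for $\dket{\star}$ fixed before the theorem are applied consistently.
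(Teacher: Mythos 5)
Your proposal is correct and takes essentially the same route as the paper: the paper obtains \thmref{thm:biconvex} precisely by inserting the Choi-matrix identity $\mE(\sigma) = \trace_{S}((\id\otimes\sigma^T)\Omega)$ into \eqref{eq:opt-x-E} and invoking the Choi criterion $\Omega \geq 0$, $\trace_{S'}(\Omega) = \id_S$, treating the theorem as a direct reformulation. Your writeup simply makes explicit the trace identity $\trace(\mE(\sigma)A) = \trace((A\otimes\sigma^T)\Omega)$, its extension by linearity to $\partial_\theta\rho_\theta$, and the block-wise convexity check that the paper states without proof.
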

\eqref{eq:opt-x-Omega} is a biconvex optimization problem of variables $\vx$ and $\Omega$. Fixing $\Omega$, \eqref{eq:opt-x-Omega} is a quadratic program with respect to $\vx$, and fixing $\vx$, \eqref{eq:opt-x-Omega} is a semidefinite program with respect to $\Omega$; each of which is efficiently solvable when the system dimensions are moderate and the domain of variables is compact. 

Note that the domain of $\vx$ is unbounded in \eqref{eq:opt-x-Omega}. In practice, one may impose a bounded domain on $\vx$ so that the minimum of \eqref{eq:opt-x-Omega} always exists. For cases where the QPFI is attainable, the optimal value of the bounded version will be equal to the one of \eqref{eq:opt-x-Omega} when the size of the bounded domain is sufficiently large. For singular cases where the QPFI is not attainable, the optimal value of the bounded version will approach the one of \eqref{eq:opt-x-Omega} with an arbitrarily small error as the size of the domain increases. We describe an algorithm called the global optimization algorithm~\cite{floudas2013deterministic} in \appref{app:biconvex} that can solve the bounded version of \eqref{eq:opt-x-Omega}. 
 
Finally, we note that \thmref{thm:biconvex} does not directly generalize to the case of QUPFI because the Choi matrices of unitary operators do not form a convex set. 
On the other hand, besides the set of quantum channels, our approach is also useful in optimizing the FI over other sets of quantum controls  
when the constraints on their Choi matrices can be represented using semidefinite constraints, e.g., the set of quantum channels that act only on a subsystem of the entire system. 
 
\section{Pure states}
\label{sec:pure}

In this section, we consider the special case where $\rho_\theta = \psi_\theta =  \ket{\psi_\theta}\bra{\psi_\theta}$ is pure, which is most common in sensing experiments. We first consider the optimization of the FI over the error vector $\vx$ and the unitary control $U$, and obtain two necessary conditions for the optimality of $(\vx,U)$. We use these conditions to prove equality between the QPFI and the QUPFI for pure states, showing that unitary controls are optimal for such states (when $\mH_S \cong \mH_{S'}$). We also obtain an analytical expression of the QPFI for binary measurements (i.e., measurements with only two outcomes), and a semi-analytical expression and analytical bounds for general commuting-operator measurements (i.e., measurements $\{M_i\}$ that satisfy $[M_i,M_j] = 0$ for all $i,j$). In particular, we prove that the optimal control is given by rotating the pure state and its derivative into a two-dimensional subspace spanned by two of the common eigenstates of the commuting-operator measurements. 

\subsection{Necessary conditions for optimal controls}

\propref{prop:channel-unitary} shows that the optimization for the QPFI can be reduced to an optimization for the QUPFI using the ancillary system. Thus, here we first focus on the following optimization problem over the unitary control
\begin{align}
    F^\bU(\rho_\theta,\{M_i\})^{-1} = \inf_{(\vx,U)} \;& 
    \trace(U \rho_\theta U^\dagger X_2),\label{eq:opt-x-U}\\
    \text{s.t.~}\; & 
    \trace(U \rho_\theta U^\dagger X) = 0,  \label{eq:constraint-1}\\
    & \trace(U \partial_\theta\rho_\theta U^\dagger X) = 1.  \label{eq:constraint-2}
\end{align}
We obtain necessary conditions for the optimality of $(\vx,U)$ that will be useful later. 
\begin{lemma}
\label{lemma:necessary}
If $(\vx,U)$ is an optimal point for \eqref{eq:opt-x-U}, it must satisfy
\begin{equation}
\label{eq:necessary-0}
    \frac{\trace(U \partial_\theta\rho_\theta U^\dagger X) }{\trace(U \rho_\theta U^\dagger X_2)} [X_2, {U \rho_\theta U^\dagger}]= {2  [X,  U \partial_\theta\rho_\theta U^\dagger]}.
\end{equation}
In particular, suppose $\rho_\theta = \ket{\psi_\theta}\bra{\psi_\theta}$ is pure.  Let 
\begin{gather}
\ket{\psi^\perp_\theta} := \frac{1}{\sqrt{\frakn}}(1- \ket{\psi_\theta}\bra{\psi_\theta})\ket{\partial_\theta\psi_\theta},
\\
\label{eq:def-phi}
    \ket{ \phi} := U\ket{\psi_\theta},\quad 
    \ket{ \phi^\perp} := U\ket{\psi^\perp_\theta},
\end{gather}
where the normalization factor $\frakn = \bra{\partial_\theta\psi_\theta}(1- \ket{\psi_\theta}\bra{\psi_\theta})\ket{\partial_\theta\psi_\theta}$.
Then 
\eqref{eq:necessary-0} is equivalent to the following two conditions: 
\begin{enumerate}[(1)]
    \item $X\ket{ \phi} = 1/(2\sqrt{\frakn}) \ket{ \phi^\perp}$.
    \item $\left( {\braket{ \phi|X_2| \phi}} X^2 - \braket{ \phi|X^2| \phi} {X_2}\right)\ket{ \phi} = 0$.  
\end{enumerate}
\end{lemma}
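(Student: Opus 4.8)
The plan is to treat \eqref{eq:opt-x-U} as a constrained optimization over the pair $(\vx,U)$ and extract first-order stationarity conditions, first in full generality and then specialized to pure states. I would introduce the Lagrangian $\mathcal{L} = \trace(U\rho_\theta U^\dagger X_2) - \lambda_1\trace(U\rho_\theta U^\dagger X) - \lambda_2(\trace(U\partial_\theta\rho_\theta U^\dagger X)-1)$ with real multipliers and vary $U$ and $\vx$ independently. For the unitary variation I would use $U\mapsto e^{-i\epsilon H}U$ with $H$ Hermitian, so that $\delta(U\rho_\theta U^\dagger) = -i\epsilon[H,U\rho_\theta U^\dagger]$ to first order; stationarity together with cyclicity of the trace turns $\delta\mathcal{L}=0$ into $\trace(H([\tilde\rho,X_2]-\lambda_1[\tilde\rho,X]-\lambda_2[\tilde\rho',X]))=0$ for every Hermitian $H$, where $\tilde\rho := U\rho_\theta U^\dagger$ and $\tilde\rho' := U\partial_\theta\rho_\theta U^\dagger$. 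Since this holds for all Hermitian $H$, the bracketed operator vanishes, giving $[\tilde\rho,X_2]=\lambda_1[\tilde\rho,X]+\lambda_2[\tilde\rho',X]$.

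Next I would fix the multipliers. Varying $\vx$ gives $x_i = (\lambda_1 p_i+\lambda_2 q_i)/(2p_i)$ with $p_i=\trace(\tilde\rho M_i)$ and $q_i=\trace(\tilde\rho' M_i)$. Substituting this into the two constraints and using $\sum_i p_i=\trace\rho_\theta=1$ and $\sum_i q_i=\partial_\theta\trace\rho_\theta=0$ forces $\lambda_1=0$ and $\lambda_2=2/F$, where $F=\trace(\tilde\rho' X)/\trace(\tilde\rho X_2)$ is the FI at the stationary point. Plugging $\lambda_1=0$ and this $\lambda_2$ back into the operator identity and flipping the commutators yields exactly \eqref{eq:necessary-0}.

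For the pure-state reduction I would set $\rho_\theta=\ket{\psi_\theta}\bra{\psi_\theta}$ and write $U\ket{\partial_\theta\psi_\theta}=\sqrt{\frakn}\ket{\phi^\perp}+c\ket{\phi}$ with $c=\braket{\psi_\theta|\partial_\theta\psi_\theta}$ purely imaginary (from $\partial_\theta\braket{\psi_\theta|\psi_\theta}=0$), which gives the rank-two forms $\tilde\rho=\ket{\phi}\bra{\phi}$ and $\tilde\rho'=\sqrt{\frakn}(\ket{\phi}\bra{\phi^\perp}+\ket{\phi^\perp}\bra{\phi})$, the diagonal $c$-term cancelling. I would then test the operator equation \eqref{eq:necessary-0} against the decomposition of the Hilbert space into $\ket{\phi}$, $\ket{\phi^\perp}$, and $W:=(\mathrm{span}\{\ket{\phi},\ket{\phi^\perp}\})^\perp$. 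The action on $W$ forces $X\ket{\phi}$ to have no component in $W$, i.e.\ $X\ket{\phi}\in\mathrm{span}\{\ket{\phi},\ket{\phi^\perp}\}$; combined with feasibility ($\braket{\phi|X|\phi}=0$ and $2\sqrt{\frakn}\,\Re\braket{\phi^\perp|X|\phi}=\trace(\tilde\rho'X)=1$) and the reality of $\braket{\phi^\perp|X|\phi}$ coming from the off-diagonal component of the identity, this pins down $X\ket{\phi}=\frac{1}{2\sqrt{\frakn}}\ket{\phi^\perp}$, which is condition~(1). The action on $\ket{\phi}$, simplified with condition~(1) and $F=1/\braket{\phi|X_2|\phi}$, then collapses to $X_2\ket{\phi}=2\sqrt{\frakn}\braket{\phi|X_2|\phi}\,X\ket{\phi^\perp}$; rewriting $X\ket{\phi^\perp}=2\sqrt{\frakn}\,X^2\ket{\phi}$ via condition~(1) and using $\braket{\phi|X^2|\phi}=1/(4\frakn)$ converts this into condition~(2). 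For the converse I would run the same projections backwards, noting that (1)–(2) already imply both feasibility constraints and that the $\ket{\phi^\perp}$-component of \eqref{eq:necessary-0} then holds automatically.

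The main obstacle is the bookkeeping in decoupling the two conditions: condition~(2) only assumes its clean stated form after condition~(1) has been established and substituted, so the extraction is sequential rather than symmetric, and one must track carefully which matrix elements are guaranteed real and for what reason (some from the off-diagonal stationarity, some only after (1)–(2) hold). The other delicate point is justifying that the feasibility constraints may be used while deriving stationarity—so that $\lambda_1=0$ genuinely holds—and that a genuine stationary point exists, which relies on the attainability assumption (\thmref{thm:attainability}).
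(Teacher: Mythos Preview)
Your proof is correct and follows essentially the same logic as the paper's: first-order variational analysis over $U$ to get the operator identity \eqref{eq:necessary-0}, then projection of that identity onto $\ket{\phi},\ket{\phi^\perp}$ and their orthogonal complement to extract conditions (1)--(2) in the pure-state case. The one methodological difference is that the paper avoids Lagrange multipliers by using a ``$V$-transformation'': it rescales and shifts $\vx$ so that $(\vx,U)\mapsto((\vx-c\mathbf{1})/d,\,VU)$ remains feasible for any $V$ near the identity, and then requires the target function of $V$ to be stationary at $V=I$. This directly produces \eqref{eq:necessary-0} without having to solve for multipliers or check constraint qualification, whereas your Lagrangian route is more standard and requires the extra step of eliminating $\lambda_1,\lambda_2$ via the $\vx$-stationarity and the two constraints. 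For the pure-state reduction, the paper packages the projections into a single rank-four operator identity $\ket{u}\bra{\phi}-\ket{\phi}\bra{u}+\ket{v}\bra{\phi^\perp}-\ket{\phi^\perp}\bra{v}=0$ and reads off the consequences; your matrix-element-by-matrix-element analysis is equivalent.
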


\begin{proof}
Assume $(\vx,U)$ satisfies the constraints \eqref{eq:constraint-1} and \eqref{eq:constraint-2}. Then for any unitary operator $V$ such that $\trace(U \partial_\theta\rho_\theta U^\dagger V^\dagger X V) \neq 0$, 
\begin{equation}
    \left(\frac{\vx - {\trace(U \rho_\theta U^\dagger V^\dagger X V)}{} \mathbf{1}}{\trace(U \partial_\theta\rho_\theta U^\dagger V^\dagger X V)}, V U\right)
\end{equation}
also satisfies the constraints \eqref{eq:constraint-1} and \eqref{eq:constraint-2}, where $\mathbf{1}$ is a $r$-dimensional vector of which each element is $1$. We call the transformation above a ``$V$-transformation'' on $(\vx,U)$. After a $V$-transformation, the target function becomes 
\begin{equation}
\label{eq:target-V}
 \frac{\trace(U \rho_\theta U^\dagger V^\dagger X_2 V)  - \trace(U \rho_\theta U^\dagger V^\dagger X V)^2}{\trace(U \partial_\theta\rho_\theta U^\dagger V^\dagger X V)^2},
\end{equation}
which shall be no smaller than $\trace(U\rho_\theta U^\dagger X_2)$ when $(\vx,U)$ is optimal. Let $V = e^{-idG}$ where $dG$ is an arbitrary infinitesimally small Hermitian matrix. The first order derivative of \eqref{eq:target-V} with respect to $dG$ must be zero, which then implies \eqref{eq:necessary-0}. Specifically, to simplify the notation, let $\tilde\rho := U\rho_\theta U^\dagger$ and $\dot{\tilde\rho} := U \partial\rho_\theta U^\dagger$. Then the difference between the target function after and before the $V$-transformation must be zero up to the first order of $dG$, i.e.,  
\begin{gather}
\frac{\trace({\tilde\rho} X_2) + \trace({\tilde\rho} (-i [dG,X_2])) }{\left( \trace(\dot{\tilde\rho} X) + \trace(\dot{\tilde\rho} (-i[dG,X]))\right)^2}
= \frac{\trace({\tilde\rho} X_2)}{(\trace(\dot {\tilde\rho} X))^2}, \\
\allowdisplaybreaks
\begin{split}
&\Rightarrow \;
\frac{\trace(\dot{\tilde\rho} X)^2}{\trace({\tilde\rho} X_2)} - \frac{2i\trace(\dot{\tilde\rho} X)\trace(\dot {\tilde\rho} [dG,X])}{\trace({\tilde\rho} X_2)} \\
&\quad \qquad + \frac{i\trace(\dot{\tilde\rho} X)^2\trace( {\tilde\rho} [dG,X_2])}{\trace({\tilde\rho} X_2)^2} = \frac{\trace(\dot{\tilde\rho} X)^2}{\trace({\tilde\rho} X_2)},
\end{split}\\
\allowdisplaybreaks
\Rightarrow\; - {2\trace(dG  [X,\dot {\tilde\rho}])} + \frac{\trace(\dot{\tilde\rho} X)\trace(dG  [X_2,{\tilde\rho}])}{\trace({\tilde\rho} X_2)} = 0, \\
\allowdisplaybreaks
\Rightarrow\; \frac{\trace(\dot{\tilde\rho} X) [X_2, {\tilde\rho}]}{\trace(\tilde\rho X_2)} = {2  [X,\dot {\tilde\rho}]}, 
\end{gather}
where in the first step we take the inverse of both sides and ignore higher-order terms, in the second step we mutiply both sides by $-i\trace(\rho X_2)$, and in the last step we use the fact that if an operator $A$ satisfies $\trace(dG A) = 0$ for any Hermitian $dG$, then $A = 0$.

For pure states, \eqref{eq:necessary-0} can be further simplified. Using the definitions of $\ket{\phi}$ and $\ket{\phi^\perp}$, we have $U\rho_\theta U^\dagger = \ket{\phi}\bra{\phi}$ and 
\begin{equation}
\begin{split}
U\partial_\theta\rho_\theta U^\dagger  
&= U\ket{\partial_\theta \psi_\theta}\bra{\psi_\theta}U^\dagger + h.c.\\
&= (\ket{\partial_\theta \psi_\theta}-\ket{\psi_\theta}\braket{\psi_\theta|\partial_\theta\psi_\theta}\bra{\psi_\theta}) + h.c.\\
&=\sqrt{\frakn} (\ket{\phi^\perp}\bra{\phi} + \ket{\phi}\bra{\phi^\perp}), 
\end{split}
\end{equation}
where $ h.c.$ stands for the Hermitian conjugate and we use $\partial_\theta \braket{\psi_\theta|\psi_\theta} = \braket{\partial_\theta\psi_\theta|\psi_\theta} + \braket{\psi_\theta|\partial_\theta\psi_\theta} = 0$ in the second step.
\eqref{eq:necessary-0} becomes
\begin{equation}
\label{eq:necessary-1}
    \ket{u}\bra{\phi} - \ket{{\phi}}\bra{u} + \ket{v}\bra{\phi^\perp} - \ket{{\phi}^\perp}\bra{v} = 0,
\end{equation}
where $\ket{u} = X\ket{\phi^\perp} - \frac{\Re[\braket{\phi|X|\phi^\perp}]}{\braket{\phi|X_2|\phi}}X_2\ket{\phi}$ and $\ket{v} = X\ket{\phi}$. \eqref{eq:necessary-1} is equivalent to $\ket{u},\ket{v} \in {\rm span}\{\ket{\phi},\ket{\phi^\perp}\}$, $\braket{\phi|u},\braket{\phi^\perp|v} \in \bR$ and $\braket{\phi|v} = \braket{u|\phi^\perp}$. Combining these conditions with the local unbiasedness constraints $\braket{\phi|X|\phi} = 0$ and  $2\sqrt{\frakn}\Re[\braket{\phi|X|\phi^\perp}] = 1$,  the two conditions in \lemmaref{lemma:necessary} are then proven. Specifically, we first use $\braket{\phi|v} = \braket{\phi|X|\phi} = 0$ and $\ket{v} \in {\rm span}\{\ket{\phi},\ket{\phi^\perp}\}$ to derive that $X\ket{\phi} \propto \ket{\phi^\perp}$. Then using $\braket{\phi^\perp|v} \in \bR$ and $2\sqrt{\frakn}\Re[\bra{\phi}X\ket{\phi^\perp}] = 1$, we derive Condition~(1). To derive Condition~(2), we first use $\braket{u|\phi^\perp} = \braket{\phi|v} = 0$ to derive that $\ket{u} \propto \ket{\phi}$ and then $2\sqrt{\frakn}\Re[\braket{\phi|X|\phi^\perp}] = 1$ and Condition~(2) to derive that $\braket{\phi|u} = \braket{\phi|X|\phi^\perp} - \frac{\Re[\braket{\phi|X|\phi^\perp}]}{\braket{\phi|X_2|\phi}}\braket{\phi| X_2 |\phi} = \frac{1}{2\sqrt{\frakn}} - \frac{1}{2\sqrt{\frakn}} = 0$. Then we have $\ket{u} = 0$, combining $\ket{u} \propto \ket{\phi}$ and $\braket{\phi|u} = 0$. Note that $\ket{u} = 0$ is equivalent to Condition~(2) after multiplying both sides by $\frac{\braket{\phi|X_2|\phi}}{2\sqrt{\frakn}}$. Finally, we note that from Condition~(1) and Condition~(2), the necessary condition in \eqref{eq:necessary-0} can be recovered straightforwardly, proving the equivalence between \eqref{eq:necessary-0} and Conditions (1) and (2) for pure states.
\end{proof}

As a sanity check, consider the special case where $\{M_i = \ket{i}\bra{i}\}_{i=1}^{\dim(\mH_{S'})}$ is a projection onto an orthonormal basis of $\mH_{S'}$. Then we have $X_2 = X^2$, so Condition~(2) is trivially satisfied. Furthermore, choose $(\vx,U)$ such that the error observable $X = \frac{1}{2\sqrt{\frakn}} (\ket{\phi^\perp}\bra{\phi}+\ket{\phi}\bra{\phi^\perp})$, so that Condition~(1) is satisfied. Moreover, the variance of the estimation is 
\begin{equation}
    \braket{\phi|X_2|\phi} = \braket{\phi|X^2|\phi} = \frac{1}{4\frakn} = J(\rho_\theta)^{-1}, 
\end{equation}
implying that the QFI is achievable using the above projective measurement, since $J(\rho_\theta) = 4\frakn$ for pure states~\cite{wootters1981statistical,braunstein1994statistical}. For general quantum measurements, the QUPFI might be strictly smaller than the QFI, in which case for the optimal choice of $(\vx,U)$, 
\begin{equation}
    \frac{1}{F^\bU(\rho_\theta,\{M_i\})} = \braket{\phi|X_2|\phi} > \braket{\phi|X^2|\phi} = \frac{1}{J(\rho_\theta)}.  
\end{equation}
It is interesting to note that $X_2 \geq X^2$, for general POVM measurements. This follows directly from writing
\begin{equation}
     X_2-X^2 = \sum_i (x_i - X) M_i (x_i-X),  
\end{equation}
and noting that each term in the above sum is positive semi-definite.

\subsection{Unitary controls are optimal}

Using the definitions of $\ket{\phi}$ and $\ket{\phi^\perp}$ in \eqref{eq:def-phi}, we observe that 
\eqref{eq:opt-x-U} can be rewritten as 
\begin{align}
    F^\bU(\psi_\theta,\{M_i\})^{-1} = & \inf_{(\vx,\ket{\phi},\ket{\phi^\perp})} 
    \braket{\phi|X_2|\phi},\label{eq:opt-x-phi-phi-perp}\\
    & \quad\text{s.t.~}
    \braket{\phi|\phi^\perp} = 0,\; \braket{\phi| X |\phi} = 0, \nonumber\\
    & \qquad\quad\Re[\braket{\phi|X|\phi^\perp}] = 1/(2\sqrt{\frakn}), \nonumber
\end{align}
where $\psi_\theta = \ket{\psi_\theta}\bra{\psi_\theta}$ is pure. 
Here $\ket{\phi}$ and $\ket{\phi^\perp}$ are two arbitrary normal vectors that are orthogonal. From \eqref{eq:opt-x-phi-phi-perp}, changing $\ket{\phi}$ to $\ket{\phi}/(2\sqrt{\frakn})$ makes it clear that $F^\bU(\rho_\theta,\{M_i\})$ can be written as the product of 
\begin{equation}
    J(\psi_\theta)=4\frakn
\end{equation}
and a state-independent constant. We have 
\begin{equation}
\label{eq:gamma}
    F^\bU(\psi_\theta,\{M_i\}) = \gamma(\{M_i\}) J(\psi_\theta), 
\end{equation}
where 
\begin{align}
    \gamma(\{M_i\})^{-1} = & \inf_{(\vx,\ket{\phi},\ket{\phi^\perp})} 
    \braket{\phi|X_2|\phi},\label{eq:gamma-x}\\
    & \quad\text{s.t.~}
    \braket{\phi|\phi^\perp} = 0,\; \braket{\phi| X |\phi} = 0, \nonumber\\
    & \qquad\quad\Re[\braket{\phi|X|\phi^\perp}] = 1. \nonumber
\end{align}
Or more explicitly,
\begin{equation}
\label{eq:gamma-no-x}
    \gamma(\{M_i\}) = \sup_{\ket{\phi},\ket{\phi^\perp}} \sum_i  \frac{\Re[\braket{\phi|M_i|\phi^\perp}]^2}{\braket{\phi|M_i|\phi}}. 
\end{equation}
(Note that going from \eqref{eq:gamma-x} to \eqref{eq:gamma-no-x}, we only need to optimize the target function over $\vx$ with a fixed $(\ket{\phi},\ket{\phi^\perp})$ and use standard methods for quadratic programming, e.g., Lagrange multipliers~\cite{boyd2004convex}). Note that \eqref{eq:gamma} and \eqref{eq:gamma-no-x} were also proven using a different method in \cite{len2021quantum}. 
$\gamma(\{M_i\})$ is the \emph{normalized} QUPFI for any pure states with unit QFIs and it is a function of $\{M_i\}$ that lies in $[0,1]$, which is the ratio between the QUPFI and the QFI for any pure states. It is independent of the exact $\psi_\theta$ and can fully characterize the power of quantum measurements in terms of estimation on pure states. 

Note that \eqref{eq:gamma} decomposes the QUPFI into the product of the QFI, as a function of states, and the normalized QUPFI, as a function of measurements. This result is useful when experimentalists have control over input states in sensing processes. It implies when a pure input state $\psi_0$ undergoes unitary evolution $U_\theta$, the optimal choices of the input state that maximizes the output FI are identical in situations with or without measurement noise.

Using Condition~(1) in \lemmaref{lemma:necessary}, we now prove that unitary controls are always optimal, that is, the QPFI is equal to the QUPFI when $\mH_S \cong \mH_{S'}$. We have the following theorem
\begin{theorem}
\label{thm:unitary}
Consider a pure state $\psi_\theta$ and a quantum measurement $\{M_i\}$ acting on the same system. Unitary preprocessing controls are always optimal among quantum preprocessing controls for optimizing the FI, i.e., 
\begin{equation}
\label{eq:channel-equal-unitary}
F^\bP(\psi_\theta,\{M_i\}) = F^\bU(\psi_\theta,\{M_i\}). 
\end{equation}
Or equivalently, 
\begin{equation}
\label{eq:channel-equal-unitary-gamma}
\gamma(\{M_i\}) = \gamma(\{M_i \otimes \id_A\}), 
\end{equation}
where $A$ is an ancillary system of an arbitary size. 
\end{theorem}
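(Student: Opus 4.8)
The plan is to prove the equivalent statement \eqref{eq:channel-equal-unitary-gamma}, namely $\gamma(\{M_i\})=\gamma(\{M_i\otimes\id_A\})$. This suffices because \propref{prop:channel-unitary}, together with the fact that appending a fixed pure ancilla leaves the QFI unchanged, gives $F^\bP(\psi_\theta,\{M_i\})=\gamma(\{M_i\otimes\id_A\})\,J(\psi_\theta)$ while $F^\bU(\psi_\theta,\{M_i\})=\gamma(\{M_i\})\,J(\psi_\theta)$ by \eqref{eq:gamma}. One inequality is immediate: restricting the optimization \eqref{eq:gamma-no-x} for $\{M_i\otimes\id_A\}$ to product vectors $\ket{\phi}\otimes\ket{a}$ and $\ket{\phi^\perp}\otimes\ket{a}$ reproduces the optimization for $\{M_i\}$, so $\gamma(\{M_i\})\le\gamma(\{M_i\otimes\id_A\})$. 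The entire content is therefore the reverse inequality $\gamma(\{M_i\otimes\id_A\})\le\gamma(\{M_i\})$, i.e.\ that an ancilla cannot help a pure state.

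To establish it, I would take an optimal pair $\ket{\Phi},\ket{\Phi^\perp}$ on $\mH_S\otimes\mH_A$ attaining $\gamma(\{M_i\otimes\id_A\})$, handling the singular case by \thmref{thm:attainability}. The key structural input is that the error observable factorizes, $X=\tilde X\otimes\id_A$ with $\tilde X=\sum_i x_i M_i$ acting only on $\mH_S$, so Condition~(1) of \lemmaref{lemma:necessary} reads $(\tilde X\otimes\id_A)\ket{\Phi}\propto\ket{\Phi^\perp}$. I would then push everything down to the input system using the reduced operators $\rho:=\trace_A(\ket{\Phi}\bra{\Phi})$ and $T:=\trace_A(\ket{\Phi^\perp}\bra{\Phi})$: since $\braket{\Phi|M_i\otimes\id_A|\Phi}=\trace(M_i\rho)$ and $\braket{\Phi|M_i\otimes\id_A|\Phi^\perp}=\trace(M_i T)$, the objective of \eqref{eq:gamma-no-x} becomes $\sum_i \Re[\trace(M_i T)]^2/\trace(M_i\rho)$, and Condition~(1) collapses $T$ to $T\propto\tilde X\rho$. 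Hence the ancilla-assisted value is governed entirely by the single-system data $(\rho,\tilde X)$, with $\rho$ a generally mixed density operator subject to the unbiasedness constraint $\trace(\tilde X\rho)=0$.

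It then remains to show that replacing this mixed $\rho$ by a pure state cannot decrease the objective: a pure $\rho=\ket{\phi}\bra{\phi}$ with $\braket{\phi|\tilde X|\phi}=0$ gives (taking $\ket{\phi^\perp}\propto\tilde X\ket{\phi}$) a genuine feasible point of \eqref{eq:gamma-no-x}, hence a value $\le\gamma(\{M_i\})$. The mechanism I would use is a convexity/extreme-point argument: with $T$ eliminated via Condition~(1), one views the objective as a function of $\rho$ alone and argues that its supremum over the convex set $\{\rho\ge0:\trace\rho=1,\ \trace(\tilde X\rho)=0\}$ is attained at an extreme point, which must be rank one because a rank-$r$ point admits a nonzero traceless Hermitian perturbation in its support compatible with the single extra linear constraint whenever $r^2>2$.

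The step I expect to be the real obstacle is exactly this rank reduction of $\rho$ to one. The objective, as a function of $\rho$, is a sum of ratios of quadratics and is not manifestly convex, so the clean extreme-point argument above is not immediately justified; what \emph{is} transparent is joint convexity in $(\rho,T)$, since each summand is the convex ``quadratic-over-linear'' map $(\trace(M_iT),\trace(M_i\rho))\mapsto \Re[\cdot]^2/[\cdot]$. That joint convexity, however, only localizes the optimum to extreme points of the full feasible region for $\left(\begin{smallmatrix}\rho&T^\dagger\\ T&\rho^\perp\end{smallmatrix}\right)\ge0$, whose $\rho$-marginal can a priori have rank two (the constraints $\trace\rho=\trace\rho^\perp=1$ and $\trace T=0$ amount to four real linear conditions, permitting rank up to two). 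Closing the gap from rank two to rank one is the technical heart: I expect to need the objective-independent, and hence freely choosable, block $\rho^\perp$, together with the factorization $T\propto\tilde X\rho$ from Condition~(1) and possibly Condition~(2), to decouple the $\rho$-optimization from $T$ and certify that the pure-state value never falls below the mixed-state value.
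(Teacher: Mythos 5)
Your setup is sound, and it is in fact the same first half as the paper's own proof: using \propref{prop:channel-unitary}, Condition~(1) of \lemmaref{lemma:necessary}, and a partial trace over the ancilla to reduce the ancilla-assisted problem to an optimization over a single-system \emph{mixed} operator $\rho$ (the paper's \eqref{eq:opt-x-sigma}), with the easy inequality $\gamma(\{M_i\})\le\gamma(\{M_i\otimes\id_A\})$ handled correctly. But the proposal stops exactly at the theorem's core: you never prove that a mixed optimizer can be replaced by a pure one, and as you concede, your extreme-point plan stalls. Joint convexity of $\sum_i\Re[\trace(M_iT)]^2/\trace(M_i\rho)$ in $(\rho,T)$ only places the maximum at an extreme point of the block-PSD feasible set with four real linear constraints, and such extreme points can indeed have a rank-two $\rho$-marginal; the passage from rank two to rank one is left as a hope (``I expect to need\dots''), not an argument. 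Since everything else is routine, this unproven purification step \emph{is} the theorem, so the proposal is genuinely incomplete.

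For comparison, the paper closes the gap by construction rather than convexity: at an optimal $(\vx^*,\sigma^*)$ of \eqref{eq:opt-x-sigma}, Conditions~(1) and~(2) of \lemmaref{lemma:necessary} combine into the operator identity $\gamma(\{M_i\otimes\id_A\})\,X_2^*\Pi=(X^*)^2\Pi$ on the support projection $\Pi$ of $\sigma^*$, i.e.\ the objective observable and the quadratic-constraint observable become \emph{proportional} on the support. Writing $\sigma^*=\sum_k\mu_k\ket{k}\bra{k}$, the pure state $\ket{\phi^{**}}=\sum_k e^{i\varphi_k}\sqrt{\mu_k}\ket{k}$ then only needs to meet the single remaining linear constraint $\braket{\phi^{**}|X^*|\phi^{**}}=0$, which is arranged by a continuity/sign-flip choice of the phases $\varphi_k$; after rescaling $\vx$, the proportionality identity forces $\ket{\phi^{**}}$ to attain exactly the optimal value. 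Incidentally, your own route can be repaired, with two corrections. First, after substituting $T\propto\tilde X\rho$ the objective \emph{is} convex in $\rho$ (each term is the square of a real-linear functional of $\rho$ divided by a positive linear functional, i.e.\ quadratic-over-linear), contrary to your worry. Second, and more simply: for each fixed feasible $\vx$, the slice of \eqref{eq:opt-x-sigma} minimizes the linear functional $\trace(\sigma X_2)$ over the set $\{\sigma\ge0:\trace\sigma=1,\ \trace(\sigma X)=0,\ \trace(\sigma X^2)=1/(4\frakn)\}$, which is cut by only \emph{three} real-linear constraints; any feasible $\sigma$ of rank $r\ge2$ admits a nonzero Hermitian perturbation supported on ${\rm supp}(\sigma)$ annihilating all three functionals (dimension count $r^2\ge4>3$ for complex Hermitian matrices), so every extreme point has rank one and the minimum is attained at a pure state. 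Either mechanism would complete your argument; as written, the purification step is asserted, not proved.
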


\begin{proof}
We first consider the situation where the QUPFI is attainable, that is, there always exists an $(\vx,U)$ such that the infimum in \eqref{eq:opt-x-U} is attainable. Using Condition~(1) in \lemmaref{lemma:necessary}, we can rewrite \eqref{eq:opt-x-phi-phi-perp} as 
\begin{align}
    F^\bU(\psi_\theta,\{M_i\})^{-1} = \min_{(\vx,\ket{\phi})} \;& 
    \braket{\phi|X_2|\phi},\label{eq:opt-x-phi}\\
     \text{s.t.~} \; &
    \braket{\phi| X |\phi} = 0, \nonumber\\
    &  \braket{\phi| X^2 |\phi} = 1/(4\frakn),  \nonumber
\end{align}
where $\frakn = J(\psi_\theta)/4$. 
Let $\dim(\mH_A) \geq \dim(\mH_{S'})^2 = d^2$. $J(\psi_\theta) = J(\psi_\theta\otimes\ket{0_A}\bra{0_A})$ and \propref{prop:channel-unitary} imply
\begin{align}
    F^\bP(\psi_\theta,\{M_i\})^{-1} 
    & = F^\bU(\psi_\theta \otimes\ket{0_A}\bra{0_A},\{M_i\otimes \id_A\})^{-1}
    \nonumber \\
    & = \min_{(\vx,\ket{\phi})} \braket{\phi|X_2 \otimes \id_A|\phi}, \label{eq:opt-x-phi-A}\\
    &\qquad \text{s.t.~} \braket{\phi| X\otimes \id_A |\phi} = 0,\nonumber \\
    &\quad\;\;\,\qquad \braket{\phi| X^2 \otimes \id_A |\phi} = 1/(4\frakn).  \nonumber
\end{align}
It is equivalent to the optimization problem 
\begin{align}
    \min_{(\vx,\sigma)} \;& \trace(\sigma X_2), \label{eq:opt-x-sigma}\\
    \text{s.t.~} \;& \trace(\sigma X)=0,\;\trace(\sigma X^2)=1/(4\frakn),\nonumber
\end{align}
where $\sigma$ is an arbitrary density operator and corresponds to $\trace_A(\ket{\phi}\bra{\phi})$. 
We will show below that for any $\sigma^*$ that is optimal for \eqref{eq:opt-x-sigma}, there exists an optimal pure state solution  $\ket{\phi^{**}}\bra{\phi^{**}}$ for \eqref{eq:opt-x-sigma}. Then the optimal values of \eqref{eq:opt-x-phi} and \eqref{eq:opt-x-sigma} must be the same, proving \eqref{eq:channel-equal-unitary}.

Assume $(\vx^*,\sigma^*)$ is optimal for \eqref{eq:opt-x-sigma}. Without loss of generality, we assume ${\rm supp}(\sigma^*) \subseteq {\rm supp}((X^*)^2)$, because otherwise $\sigma^*$ projected onto the support of $(X^*)^2$ is another optimal solution because the constraints in \eqref{eq:opt-x-sigma} are invariant and the target function is no larger after the projection. We now show there exists another optimal solution $(\vx^{**},\ket{\phi^{**}}\bra{\phi^{**}})$. First, note that $X^* = \sum_i x_i^* M_i$ and $X_2^* = \sum_i (x_i^*)^2 M_i$ satisfy $\trace(\sigma^* X^*) = 0$ and $\trace(\sigma^* (X^*)^2) = 1/(4\frakn)$ from \eqref{eq:opt-x-sigma}, and 
\begin{equation}
\label{eq:inline}
    \gamma(\{M_i\otimes \id_A\}) X_2^*\Pi = (X^*)^2\Pi, 
\end{equation} 
where $\Pi$ is the projection onto the support of $\sigma^*$.
Note that \eqref{eq:inline} is true because 
\begin{enumerate}[(i),wide, labelwidth=!, labelindent=0pt]
\item $\frac{\bra{\phi}(X^*)^2\ket{\phi}}{\bra{\phi}X_2^*\ket{\phi}} = \frac{F^\bU(\psi_\theta,\{M_i\otimes \id_A\})}{4\frakn} = \gamma(\{M_i\otimes \id_A\})$, from Condition~(1) in \lemmaref{lemma:necessary} and, 
\item ${\braket{ \phi|X^*_2| \phi}} (X^*)^2\Pi = \braket{ \phi|(X^*)^2| \phi}{X_2^*}\Pi = 0$ from Condition~(2) in \lemmaref{lemma:necessary}.
\end{enumerate}

Let $\sigma^* = \sum_{k=1}^{d} \mu_k \ket{k}\bra{k}$, where $\{\ket{i}\}_{i=1}^{d'}$ is orthonormal in $\mH_S$. We claim 
that we can always choose 
\begin{equation}
\label{eq:phi-starstar}
    \ket{\phi^{**}} = \sum_{k=1}^{d} e^{i\varphi_k} \sqrt{\mu_k} \ket{k}, 
\end{equation}
such that $\braket{\phi^{**}|X^*|\phi^{**}} = 0$,
by picking a suitable $\{\varphi_k\}_{k=1}^{d}$. 
To see this, observe that:
\begin{equation}
    \braket{\phi^{**}|X^*|\phi^{**}} = \sum_{k \neq k'} e^{i(\varphi_k-\varphi_{k'})} \sqrt{\mu_k \mu_{k'}} \bra{k}X^*\ket{k'},
\end{equation}
is a real, continuous function $f(\varphi_1,\ldots,\varphi_d)$ of $\{\varphi_k\}_{k=1}^d \in \bR^{d}$,
where we omitted the sum over $k=k'$ terms because $\trace(\sigma^* X^*) = 0$ implies that it vanishes.
Note that for any fixed $\{\varphi_k\}_{k=1}^d$, the sum of all $2^d$ terms $f(\varphi_1\pm \frac{\pi}{2},\ldots,\varphi_d\pm \frac{\pi}{2})$ is zero, implying that one, or more, of these terms is zero, or that some are negative and others are positive. In the latter case, the continuity of $f(\varphi_1,\ldots,\varphi_d)$ implies that its image must include zero. Therefore, we can pick a $\{\varphi_k\}_{k=1}^d$ such that $f(\varphi_1,\ldots,\varphi_d) = 0$, based on which the $\ket{\phi^{**}}$ defined by \eqref{eq:phi-starstar} satisfies $\braket{\phi^{**}|X^*|\phi^{**}} = 0$. Furthermore, we choose 
\begin{equation}
    \vx^{**} = \sqrt{\frac{1}{4\frakn\braket{\phi^{**}|(X^*)^2|\phi^{**}}}}\,\vx^{*},
\end{equation}
so that $\bra{\phi^{**}}(X^{**})^2\ket{\phi^{**}} = 1/\left(4\frakn \right)$. Note that $\braket{\phi^{**}|(X^*)^2|\phi^{**}}$ is always positive and thus the above denominator is positive because we assumed $(X^*)^2$ is positive definite on ${\rm supp}(\sigma^*)$. We have now proved that $(\vx^{**},\ket{\phi^{**}}\bra{\phi^{**}})$ satisfies the constraints in \eqref{eq:opt-x-sigma}. Moreover, noting that $\gamma(\{M_i\otimes \id_A\}) X_2^{**}\Pi = (X^{**})^2\Pi$, the value of the target function $\bra{\phi^{**}} X^{**}_2  \ket{\phi^{**}} 
= \frac{\bra{\phi^{**}} (X^{**})^2 \ket{\phi^{**}}}{\gamma(\{M_i\otimes \id_A\})} = \frac{1}{4\frakn \gamma(\{M_i\otimes \id_A\})} = F^\bP(\psi_\theta,\{M_i\})^{-1}$ is also optimal. Therefore, $(\vx^{**},\ket{\phi^{**}}\bra{\phi^{**}})$ is an optimal solution for both \eqref{eq:opt-x-phi} and \eqref{eq:opt-x-sigma}, proving \eqref{eq:channel-equal-unitary}.

When the QPFI of \eqref{eq:opt-x-U} is not attainable, we take $M_i^{(\epsilon)} = (1-\epsilon) M_i + \epsilon\trace(M_i)\frac{\id}{d}$ and using \thmref{thm:attainability}, we have 
\begin{multline}
    F^\bP(\psi_\theta,\{M_i\})  = \lim_{\epsilon\rightarrow 0^+} F^\bP(\psi_\theta,\{M_i^{(\epsilon)}\}) \\ = \lim_{\epsilon\rightarrow 0^+} F^\bU(\psi_\theta,\{M_i^{(\epsilon)}\}) = F^\bU(\psi_\theta,\{M_i\}),
\end{multline}
where in the second step we use the equality between the QPFI and the QUPFI in the case where the QUPFI is attainable. 
 
So far, we have proven that \eqref{eq:channel-equal-unitary-gamma} is true when $\dim(\mH_A) \geq \dim(\mH_{S'})^2$, due to \propref{prop:channel-unitary} and the equality between the QPFI and the QUPFI. It also holds for any $\mH_{A'}$ such that $\dim(\mH_{A'}) \leq \dim(\mH_{S'})^2$ because we have $\gamma(\{M_i\otimes \id_{A}\}) \geq \gamma(\{M_i\otimes \id_{A'}\}) \geq \gamma(\{M_i\})$ by definition. 

\end{proof}

\subsection{Analytical solution for binary measurements}
\label{sec:binary}

Here we provide an analytical solution to the QPFI and the corresponding optimal preprocessing control using \propref{prop:channel-unitary} for binary measurements where $r = 2$. 

\subsubsection{Measurement on a qubit}

We first consider the simplest case where the measurement is on a single qubit. Let $X = x_1 M_1+ x_2 M_2$ where $M_1 = M$ and $M_2 = \id - M$. Without loss of generality, we assume 
\begin{equation}
    M = m_1 \ket{1}\bra{1} + m_2\ket{2}\bra{2},
\end{equation} 
for some  $m_1,m_2\in[0,1]$, where $\{\ket{1},\ket{2}\}$ is an orthonormal basis. Moreover, we assume $m_1 > m_2$ and $1-m_1 \geq m_2$. (When $m_1=m_2$, we must have $\gamma(\{M_i\}) = 0$ because the measurement outcome does not depend on $\theta$.) Here $m_2$ and $1 - m_1$ can be interpreted as the error probabilities that state $\ket{2}$ is mistaken for $\ket{1}$, and state $\ket{1}$ is mistaken for $\ket{2}$, respectively. 

Consider first the case where $1 > m_1 > m_2 > 0$, that is, the error probabilities are both non-zero.  We show in \appref{app:binary-qubit} that  all solutions that satisfy the two necessary conditions in \lemmaref{lemma:necessary} give the same optimal FI. One optimal solution to the preprocessed state is
\begin{gather}
    \ket{\phi^*} = \sqrt{p^*}\ket{1} + \sqrt{1-p^*}\ket{2},\label{eq:opt-phi}\\
    \ket{\phi^{\perp*}} = \sqrt{1-p^*}\ket{1} - \sqrt{p^*}\ket{2},\label{eq:opt-phi-perp}
\end{gather}
where
\begin{equation}
    p^* = \frac{\sqrt{m_2(1-m_2)}}{\sqrt{m_1(1-m_1)}+\sqrt{m_2(1-m_2)}}. 
\end{equation}
Here the optimal unitary control $U^*$ can be chosen as any unitary such that \eqref{eq:def-phi} is true for \eqref{eq:opt-phi} and \eqref{eq:opt-phi-perp}. (In the following, we will only use $(\ket{\phi^*},\ket{\phi^{\perp *}})$ to represent the optimal preprocessing unitary with the implicit assumption that $U^*$ can be chosen as any unitary rotating $(\ket{\psi_\theta},\ket{\psi^\perp_\theta})$ to $(\ket{\phi^*},\ket{\phi^{\perp *}})$). Note that the symmetry transformations $\ket{\phi^{\perp*}} \mapsto - \ket{\phi^{\perp*}}$, $\ket{1} \mapsto e^{i\omega}\ket{1}$ and $\ket{2} \mapsto e^{i\omega'}\ket{2}$ for any $\omega,\omega' \in \bR$ will generate alternative optimal solutions, and they all provide the same optimal normalized FI:
\begin{equation}
\label{eq:binary-gamma}
    \gamma(\{M_i\}) 
    =1-\big(\sqrt{ m_1 m_2} +\sqrt{\left(1-m_{1}\right)\left(1-m_{2}\right)}\big)^{2}.
\end{equation}
Note that this result was obtained also in \cite{len2021quantum} using a different method based on the Bloch sphere representation.
Here $\sqrt{1 - \gamma(\{M_i\})}$ is exactly equal to the fidelity between two binary probability distributions $(m_1,1-m_1)$ and $(m_2,1-m_2)$.

Take the symmetric binary measurement as an example, where $m_1 = 1 - m$, $m_2 = m$ and $m < 1/2$, and $m$ represents the probability of a bit-flip error in the measurement. Then we have $p^* = 1/2$ (as expected from the bit-flip symmetry), and $\gamma(\{M_i\}) = 1 - 4m(1-m)$, which is equal to $1$ in the noiseless case, and drops to $0$ when $m\rightarrow 1/2$. 

In the case of perfect projective measurements where $1 = m_1 > m_2 = 0$, we show in \appref{app:binary-qubit} that the QPFI is equal to the QFI and is attainable for any $0 < p^* < 1$. The case where $1 > m_1 > m_2 = 0$ is singular, in the sense that the QPFI is no longer attainable but only approachable. It corresponds to the situation where one type of error ($\ket{2}$ mistaken for $\ket{1}$) is zero, while the other ($\ket{1}$ mistaken for $\ket{2}$) is non-zero. In this case, we have $\gamma(\{M_i\}) = m_1$ using \eqref{eq:binary-gamma} and \thmref{thm:attainability}. 

\subsubsection{Measurement on a qudit}

Next, we consider the general case where the measurement is on a qudit and we assume $\dim(\mH_{S'}) = d \geq 2$. Without loss of generality, we assume 
\begin{equation}
    M = \sum_{j=1}^{d} m_{j} \ket{j}\bra{j}, 
\end{equation}
where $\{\ket{j}\}_{j=1}^{d}$ is an orthonormal basis of $\mH_{S'}$. We also assume $m_{i}\geq m_{j}$ for all $i \leq j$ without loss of generality. Here we assume $1 > m_1 > m_d > 0$, which guarantees the attainability of the QPFI (see \lemmaref{lemma:noisy} in \appref{app:singular}) and the non-triviality of quantum measurements. (The singular cases where $m_1 = 1$ or $m_d = 0$ can be derived using \thmref{thm:attainability}.) We show in \appref{app:binary-qudit} that the optimal solution to $\ket{\phi}$ is supported on basis states corresponding to at most two different values of $m_i$ and the problem is simplified to selecting the optimal basis states and applying the qubit-case results. We show that 
\begin{gather}
    \ket{\phi^*} = \sqrt{p^*}\ket{1} + \sqrt{1-p^*}\ket{d},\\
    \ket{\phi^{\perp*}} = \sqrt{1-p^*}\ket{1} - \sqrt{p^*}\ket{d},
\end{gather}
is an optimal solution, where
\begin{equation}
    p^* = 
    \frac{\sqrt{m_d(1-m_d)}}{\sqrt{m_1(1-m_1)}+\sqrt{m_d(1-m_d)}}
\end{equation}
The normalized QPFI is given by 
\begin{equation}
\label{eq:gamma-qudit-binary}
    \gamma(\{M_i\}) = 
    1-\big(\sqrt{ m_1 m_d} +\sqrt{\left(1-m_{1}\right)\left(1-m_{d}\right)}\big)^{2}. 
\end{equation}
Viewing $\{(m_i,1-m_i)\}_{i=1}^d$ as $d$ binary probability distributions, the optimal strategy is always to select the two probability distributions that have the minimum fidelity (i.e., the largest distance) between each other.

\subsection{Semi-analytical solution and analytical bounds for commuting-operator measurements}

Here we consider \emph{commuting-operator measurements}, where all measurement operators commute, which is among the most common types of measurements in quantum sensing experiments, e.g., projective measurements affected by detection errors. 

Assume $\dim(\mH_{S'})= d \geq 2$. Without loss of generality, we assume 
\begin{equation}
\label{eq:commuting}
    M_i = \sum_{j=1}^{d} m_{j}^{(i)} \ket{j}\bra{j},
\end{equation}
where $\{\ket{j}\}_{j=1}^{d}$ is an orthonormal basis of $\mH_{S'}$ and $\sum_{i=1}^r m_{j}^{(i)} = 1$ for all $j$. Again, we assume $m_{j}^{(i)} > 0$ for all $i,j$ to exclude the singular cases where the QPFI is not attainable. 

In order to find the optimal control, we first prove the following theorem which states that the optimal $\ket{\phi}$ can be restricted to a two-dimensional subspace spanned by two basis states, i.e., the optimal unitary controls rotate the pure state and its derivative to a subspace spanned by two of the eigenstates of the commuting-operator measurement.

\begin{restatable}{theorem}{thmtwod}
\label{thm:2d}
For commuting-operator measurements (\eqref{eq:commuting}), there always exists an optimal solution to $(\ket{\phi},\ket{\phi^\perp})$ such that $\ket{\phi} = \sqrt{p}\ket{k} + \sqrt{1-p}\ket{l}$ and $\ket{\phi^\perp} = \sqrt{1-p}\ket{k} - \sqrt{p}\ket{l}$ for two basis states $\ket{k}$ and $\ket{l}$ and $0 < p < 1$. 
\end{restatable}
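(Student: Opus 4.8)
The plan is to work from the closed form of the normalized QUPFI in \eqref{eq:gamma-no-x}, $\gamma(\{M_i\}) = \sup_{\ket{\phi},\ket{\phi^\perp}}\sum_i \Re[\braket{\phi|M_i|\phi^\perp}]^2/\braket{\phi|M_i|\phi}$, exploiting that for commuting-operator measurements every $M_i=\sum_j m_j^{(i)}\ket{j}\bra{j}$ is diagonal in the common eigenbasis $\{\ket{j}\}$. First I would write $\ket{\phi}=\sum_j a_j\ket{j}$ and $\ket{\phi^\perp}=\sum_j b_j\ket{j}$ and note that the objective sees only $|a_j|^2$ and $\Re[\bar a_j b_j]$, since $\braket{\phi|M_i|\phi}=\sum_j m_j^{(i)}|a_j|^2$ and $\Re[\braket{\phi|M_i|\phi^\perp}]=\sum_j m_j^{(i)}\Re[\bar a_j b_j]$. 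The orthonormality constraints on $\ket{\phi},\ket{\phi^\perp}$ can all be met with real coefficients (take $a_j=\sqrt{p_j}$, $b_j=\sqrt{p_j}\,w_j$), and this does not lower the target, so the problem reduces to maximizing $\sum_i C_i^2/A_i$ over real $a_j,b_j$, where $A_i=\sum_j m_j^{(i)} a_j^2$ and $C_i=\sum_j m_j^{(i)} a_j b_j$.

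The key structural step is to encode each basis state by the rank-one matrix $G_j=\binom{a_j}{b_j}\binom{a_j}{b_j}^{\!T}\succeq 0$, so that orthonormality becomes $\sum_j G_j=\id_2$ and $(A_i,C_i)$ are the $(1,1)$ and $(1,2)$ entries of $N_i:=\sum_j m_j^{(i)} G_j$. The map $\{G_j\}\mapsto\sum_i C_i^2/A_i$ is jointly convex, because $(A,C)\mapsto C^2/A$ is convex for $A>0$. I would therefore enlarge the feasible set to the full convex spectrahedron $\{G_j\succeq 0:\sum_j G_j=\id_2\}$, over which a convex objective attains its maximum at an extreme point. The extreme points are either rank-one frames or the single rank-two configuration $G_{j_0}=\id_2$, which yields objective $0$; hence the relaxation is tight, and an optimal solution is a rank-one frame $G_j=g_jg_j^{T}$. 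Since real symmetric $2\times2$ matrices form a three-dimensional space, an extreme point has at most three nonzero $g_jg_j^{T}$ (linearly independent in that space), confining an optimal $\ket{\phi}$ to at most three of the $\ket{j}$. A \emph{two}-element frame is exactly an orthonormal pair in a two-dimensional coordinate subspace, i.e. the antipodal form $\ket{\phi}=\sqrt{p}\ket{k}+\sqrt{1-p}\ket{l}$, $\ket{\phi^\perp}=\sqrt{1-p}\ket{k}-\sqrt{p}\ket{l}$ asserted in the statement.

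The remaining, and genuinely delicate, step is to pass from support three down to two. Here I would use that $g_j=\sqrt{p_j}\,(1,w_j)^{T}$, so $\sum_j G_j=\id_2$ is precisely the statement that the atoms $w_j$ with weights $p_j$ realize the fixed moments $\sum_j p_j=1$, $\sum_j p_j w_j=0$, $\sum_j p_j w_j^2=1$; equivalently, the support sits on the moment curve $w\mapsto(1,w,w^2)$. Combining this with \lemmaref{lemma:necessary}, whose Conditions~(1)–(2) in the diagonal case reduce to $\xi_j^2=\gamma\,\eta_j$ on the support (with $\xi_j=\sum_i x_i m_j^{(i)}$, $\eta_j=\sum_i x_i^2 m_j^{(i)}$), I would extract the SDP dual certificate $Y=\mathrm{diag}(0,\gamma)$ for the linearized problem and show that the optimum equals the best \emph{pairwise} value $\max_{k\neq l}\max_{0<p<1}\,p(1-p)\sum_i (m_k^{(i)}-m_l^{(i)})^2/\big(p\,m_k^{(i)}+(1-p)m_l^{(i)}\big)$. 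I expect this pairwise reduction to be the main obstacle: a two-atom principal representation of moments up to second order always exists, but the two atoms must correspond to \emph{genuine} basis states $\ket{k},\ket{l}$ rather than to arbitrary points of the curve, so the argument has to use the coupling between the optimal error observable (equivalently the score $\bar w_i=C_i/A_i$) and the active support, not merely a redistribution of mass within a fixed three-atom optimum. Once the pairwise reduction is in place, the binary-measurement analysis of \secref{sec:binary} identifies the optimal triple $(k,l,p)$ and concludes the argument.
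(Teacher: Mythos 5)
Your first reduction is sound and is a genuinely different route from the paper's: passing to real coefficients, encoding $(\ket{\phi},\ket{\phi^\perp})$ as a rank-one resolution $\sum_j G_j=\id_2$ with $G_j=g_jg_j^T$, and using convexity of $(A,C)\mapsto C^2/A$ to push a maximizer to an extreme point of the relaxed set is correct, and your characterization of the extreme points (all nonzero elements rank one and linearly independent, or the trivial tuple $\{\id_2\}$ of value zero) is the standard one for POVM-type sets. But because real symmetric $2\times 2$ matrices form a \emph{three}-dimensional space, this argument only confines an optimal $\ket{\phi}$ to at most \emph{three} basis states, while the theorem asserts two. The passage from three to two is exactly the content of the theorem, and your proposal does not prove it: the principal-representation idea fails for the reason you yourself identify (a two-atom measure matching the moments places its atoms at arbitrary points of the curve, not at genuine basis states carrying the data $m_j^{(i)}$, and the objective is not a functional of the first two moments alone), and the claimed dual certificate $Y=\mathrm{diag}(0,\gamma)$ is never constructed nor is it shown why it would force a pairwise optimum. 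So the decisive step is an acknowledged but unfilled gap.

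For comparison, the paper closes this gap with no convex-relaxation machinery at all, working directly from the stationarity conditions of \lemmaref{lemma:necessary}. For diagonal $\{M_i\}$, Condition~(2) forces the ratio $X_{jj}^2/(X_2)_{jj}$ to take one common value on every basis state $j$ in the support of an optimal $\ket{\phi^*}$; writing $x_i=y(x+a_i)$ and solving the resulting two-index equations shows that this common value is already determined by the data $\big(\braket{a}_k,\braket{a}_l,\braket{\Delta a^2}_k,\braket{\Delta a^2}_l\big)$ of \emph{any single pair} $k,l$ in the support (\eqref{eq:x-gamma-a-0} and \eqref{eq:gamma-a-1}). Condition~(1), i.e.\ $\braket{\phi^*|X^*|\phi^*}=0$ and $\braket{\phi^*|(X^*)^2|\phi^*}=1/(4\frakn)$, guarantees a pair with $(x^*+\braket{a^*}_k)(x^*+\braket{a^*}_l)<0$; one then sets $\ket{\phi^{**}}=\sqrt{p}\ket{k}+\sqrt{1-p}\ket{l}$ with $p$ chosen to restore $\braket{\phi^{**}|X|\phi^{**}}=0$ (possible precisely because of the sign condition) and rescales $y$ to restore $\braket{\phi^{**}|X^2|\phi^{**}}=1/(4\frakn)$. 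Both operations leave $\braket{\phi|X_2|\phi}/\braket{\phi|X^2|\phi}$ unchanged, so the two-state solution attains the same FI and is therefore optimal. If you want to salvage your approach, the shortest fix is to graft this pairwise argument onto your extreme point: a three-atom maximizer still satisfies Conditions~(1)--(2), so the paper's construction collapses it to two atoms --- though at that point your convexity step has become superfluous, since the paper's argument applies to an arbitrary optimal solution.
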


The proof is provided in \appref{app:2d}. Then we see that the normalized QPFI for commuting-operator measurements will be 
\begin{equation}
\label{eq:gamma-commuting}
    \gamma(\{M_i\}) = \max_{1 \leq k < l \leq d} \gamma_{kl}(\{M_i\}),
\end{equation}
using \thmref{thm:2d}, where 
\begin{equation}
    \gamma_{kl}(\{M_i\}) = \gamma(\{M_i\}|_{{\rm span}\{\ket{k},\ket{l}\}}),
\end{equation}
and $\{M_i\}|_{{\rm span}\{\ket{k},\ket{l}\}}$ is the quantum measurement restricted in the subspace spanned by $\ket{k}$ and $\ket{l}$. 

We show in \appref{app:optimal-commuting} that 
\begin{equation}
\label{eq:gamma-kl}
\gamma_{kl}(\{M_i\}) = \sum_i \frac{p^*_{kl} (1 - p^*_{kl}) (m_k^{(i)}-m_l^{(i)})^2 }{p^*_{kl} m_k^{(i)} + (1 - p^*_{kl}) m_l^{(i)}}, 
\end{equation}
where $p^*_{kl} \in (0,1)$ is the unique solution to 
\begin{equation}
\label{eq:p-equation}
\sum_{i=1}^r \frac{m^{(i)}_k (m^{(i)}_k  - m^{(i)}_l)^2}{\big( m^{(i)}_k + \frac{1-p_{kl}}{p_{kl}} m^{(i)}_l\big)^2}
= \sum_{i=1}^r \frac{m^{(i)}_l (m^{(i)}_k  - m^{(i)}_l)^2 }{\big(\frac{p_{kl}}{1-p_{kl}} m^{(i)}_k + m^{(i)}_l\big)^2} 
\end{equation}
and the corresponding optimal preprocessed state in ${\rm span}\{\ket{k},\ket{l}\}$ is 
\begin{gather}
\ket{\phi_{kl}^*} = \sqrt{p_{kl}^*}\ket{k} + \sqrt{1-p_{kl}^*}\ket{l},\\
\ket{\phi_{kl}^{\perp*}} = \sqrt{1-p_{kl}^*}\ket{k} - \sqrt{p_{kl}^*}\ket{l}.
\end{gather}
(The symmetry transformations $\ket{\phi^{\perp*}} \mapsto - \ket{\phi^{\perp*}}$, $\ket{k} \mapsto e^{i\omega}\ket{k}$ and $\ket{l} \mapsto e^{i\omega'}\ket{l}$ for any $\omega,\omega' \in \bR$ will generate alternative optimal solutions.) 
The optimal preprocessed state $(\ket{\phi^*},\ket{\phi^{\perp*}})$ in the entire Hilbert space
that achieves \eqref{eq:gamma-commuting} is chosen as  $(\ket{\phi_{kl}^*},\ket{\phi_{kl}^{\perp*}})$ for $(k,l)$ that maximizes $\gamma_{kl}(\{M_i\})$. 

For the special case where $r = 2$, the problem reduces to the binary measurement problem discussed in \secref{sec:binary} and $p_{kl}^*$ can be found analytically. In general, however, the analytical solution to $p_{kl}^*$ might not exist since it is a root of a high degree polynomial (\eqref{eq:p-equation}) and numerical methods are needed. Nonetheless, a simple analytical upper bound on $\gamma(\{M_i\})$ can still be obtained, as shown in the following theorem (see a detailed proof in \appref{app:uppgamma}).

\begin{restatable}{theorem}{thmuppgamma}
\label{thm:uppgamma}
For commuting-operator measurements (\eqref{eq:commuting}), the normalized QPFI $\gamma(\{M_i\})$ satisfies
\begin{equation}
\label{eq:uppgamma}
    \gamma(\{M_i\}) \leq  1 - \min_{kl} \left(\sum_i \sqrt{m^{(i)}_km^{(i)}_l}\right)^2. 
\end{equation}
When there exists a $(k,l)$ that minimizes $\sum_i \sqrt{m^{(i)}_km^{(i)}_l}$ such that the set $\big\{\frac{m_k^{(i)}}{m_l^{(i)}},\,1\leq i \leq r\big\}$ contains at most two elements, the inequality is tight. 
\end{restatable}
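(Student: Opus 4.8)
The plan is to reduce \eqref{eq:uppgamma} to a pairwise statement and then to a one-line consequence of the Cauchy--Schwarz inequality, after a key algebraic simplification. By \eqref{eq:gamma-commuting} it suffices to bound each $\gamma_{kl}(\{M_i\})$ separately. Restricting \eqref{eq:gamma-no-x} to $\mathrm{span}\{\ket{k},\ket{l}\}$, where every $M_i$ is diagonal, $\gamma_{kl}$ equals the supremum over $p\in(0,1)$ of
\[
 g(p) := \sum_i \frac{p(1-p)(a_i-b_i)^2}{p\,a_i+(1-p)\,b_i},\qquad a_i:=m_k^{(i)},\ b_i:=m_l^{(i)},
\]
which is exactly the maximizing quantity in \eqref{eq:gamma-kl}. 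So I would first prove $g(p)\le 1-\big(\sum_i\sqrt{a_ib_i}\big)^2$ for every $p$, and then maximize over $p$ and over $(k,l)$.

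The main step — and the one I expect to carry the whole argument — is an identity collapsing $1-g(p)$ into a single clean sum. Writing $c_i:=p\,a_i+(1-p)\,b_i$ and the conjugate mixture $d_i:=(1-p)\,a_i+p\,b_i$, a direct expansion gives $c_id_i=a_ib_i+p(1-p)(a_i-b_i)^2$, hence $\frac{p(1-p)(a_i-b_i)^2}{c_i}=d_i-\frac{a_ib_i}{c_i}$. Summing and using $\sum_i a_i=\sum_i b_i=1$ (so that $\sum_i c_i=\sum_i d_i=1$) yields $1-g(p)=\sum_i \frac{a_ib_i}{c_i}$. Once this is in hand, Cauchy--Schwarz applied to $(\sqrt{a_ib_i/c_i})_i$ and $(\sqrt{c_i})_i$ gives $\big(\sum_i\sqrt{a_ib_i}\big)^2\le\big(\sum_i \frac{a_ib_i}{c_i}\big)\big(\sum_i c_i\big)=1-g(p)$, i.e. $g(p)\le 1-\big(\sum_i\sqrt{a_ib_i}\big)^2$ for all $p$. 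Maximizing over $p$ gives $\gamma_{kl}\le 1-\big(\sum_i\sqrt{m_k^{(i)}m_l^{(i)}}\big)^2$, and \eqref{eq:gamma-commuting} then delivers \eqref{eq:uppgamma}.

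For the tightness claim I would trace the equality condition of Cauchy--Schwarz. Equality at the optimal $p^*$ requires $\sqrt{a_ib_i}/c_i$ to be constant in $i$; writing $t_i:=a_i/b_i=m_k^{(i)}/m_l^{(i)}$ and dividing through by $b_i$, this becomes a fixed quadratic equation in $\sqrt{t_i}$, which has at most two roots. Hence equality forces the ratios $\{m_k^{(i)}/m_l^{(i)}\}$ to take at most two distinct values, matching the stated hypothesis. For the converse I would argue by coarse-graining: if, at a pair $(k,l)$ minimizing $\sum_i\sqrt{m_k^{(i)}m_l^{(i)}}$, the ratios take at most two values, then merging the outcomes within each constant-ratio group leaves both $g(p)$ and $\sum_i\sqrt{a_ib_i}$ unchanged, since each group term depends only on the common ratio and on the group sum $\sum b_i$. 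This reduces the restricted measurement to a genuine binary measurement on $\mathrm{span}\{\ket{k},\ket{l}\}$, and the binary result \eqref{eq:binary-gamma} then gives $\gamma_{kl}=1-\big(\sum_i\sqrt{m_k^{(i)}m_l^{(i)}}\big)^2$ exactly. Since this $(k,l)$ realizes the minimum, the bound \eqref{eq:uppgamma} is attained.

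The only real obstacle is spotting the identity $1-g(p)=\sum_i a_ib_i/c_i$; everything downstream (Cauchy--Schwarz, the equality analysis, and the coarse-graining to the binary case) is then routine. A minor point to handle is that $m_j^{(i)}>0$ is assumed throughout, so no denominators vanish and both the identity and the Cauchy--Schwarz step remain valid for all $p\in(0,1)$.
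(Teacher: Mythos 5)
Your proof is correct, and for the inequality itself it coincides with the paper's first argument in \appref{app:uppgamma}: both reduce via \eqref{eq:gamma-commuting} to bounding $\max_{0\le p\le 1} g(p)$, both rest on the identity $1-g(p)=\sum_i m_k^{(i)}m_l^{(i)}/c_i$ with $c_i=p\,m_k^{(i)}+(1-p)\,m_l^{(i)}$, and both finish with the same Cauchy--Schwarz step after inserting the factor $1=\sum_i c_i$. Your derivation of the identity through the conjugate mixture $d_i=(1-p)a_i+p\,b_i$ and the factorization $c_id_i=a_ib_i+p(1-p)(a_i-b_i)^2$ is a tidier packaging of the algebra that the paper carries out by direct expansion, but it is the same identity. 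The genuine difference is in the tightness claim: the paper analyzes the Cauchy--Schwarz equality condition $m_k^{(i)}m_l^{(i)}/c_i^2=\mathrm{const}$, notes that it is a quadratic condition in $\sqrt{m_k^{(i)}/m_l^{(i)}}$, and exhibits the explicit optimizer $p^*_{kl}=1/(1+\sqrt{\frakm_{kl,1}\frakm_{kl,2}})$ together with the optimal preprocessed state; you instead merge all outcomes sharing a common ratio $t_i$ into one, verify that this coarse-graining preserves both $g(p)$ and $\sum_i\sqrt{a_ib_i}$, and then quote the binary-measurement formula \eqref{eq:binary-gamma}. Both are valid: the paper's route has the advantage of producing the optimal control explicitly, while yours is more structural and explains the two-ratio hypothesis as exactly the condition under which the measurement restricted to $\mathrm{span}\{\ket{k},\ket{l}\}$ collapses to a binary one. (Note also that the paper records a second, self-contained proof of \eqref{eq:uppgamma} directly from \eqref{eq:gamma-no-x} that avoids the two-dimensional reduction of \thmref{thm:2d}; your argument, like the paper's first one, does rely on that reduction through \eqref{eq:gamma-commuting} and \eqref{eq:gamma-kl}.)
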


To derive lower bounds on $\gamma(\{M_i\})$, one could replace $p_{kl}^*$ with any $0 \leq p \leq 1$ in the expression \eqref{eq:gamma-kl}. For example, taking $p = 1/2$, we have (as also shown in \cite{len2021quantum})
\begin{align}
    \gamma(\{M_i\}) &\geq \max_{kl}\sum_i\frac{(m^{(i)}_k-m^{(i)}_l)^2}{2(m^{(i)}_k+m^{(i)}_l)} \\
    &\geq 1 - \min_{kl} \sum_i \sqrt{m^{(i)}_km^{(i)}_l},
    \label{eq:lowergamma}
\end{align}
where we use $m^{(i)}_k+m^{(i)}_l \leq (\sqrt{m^{(i)}_k}+\sqrt{m^{(i)}_l})^2$. Combining the upper and lower bounds, we observe that $\gamma(\{M_i\}) \approx 1$ when $\sum_i \sqrt{m^{(i)}_km^{(i)}_l} \approx 0$. It means that the QPFI will be close to the QFI when there exist two basis states $\ket{k}$ and $\ket{l}$ such that the fidelity between two probability distributions $\{m_k^{(i)}\}$ and $\{m_l^{(i)}\}$ is close to zero (meaning that they are almost perfectly distinguishable). 

The upper bound in \eqref{eq:uppgamma} is saturated when the measurement is binary. Another physical example is lossy photodetection. The probability of detecting $i$ photons given a Fock state of $k$ ($i \leq k$) photons is: $m_{k}^{(i)}=\binom{k}{i}(1-\eta)^{i}\eta^{k-i},$ where $1-\eta$ is the quantum efficiency of the photodetector. Assuming the maximal number of photons is $N$, it is simple to see that the optimal basis states are Fock states $\ket{0}$, $\ket{N}$. 
Since $m_{0}^{(0)}=1$, only $m_{0}^{(0)} \slash m_{N}^{(0)}$ is non-vanishing and thus $\gamma(\{M_i\})$ saturates the upper bound: $\gamma(\{M_i\})=1-\eta^{N}$. (Technically, we need to assume all $m_k^{(i)} > 0$ to avoid the singularity issue, but the above statement holds because the value of $\gamma(\{M_i\})$ can be calculated by first adding a small perturbation to the detection errors (like in \thmref{thm:attainability}) and then taking the limit as the perturbation vanishes.)   

Finally, note that although \thmref{thm:2d} and \thmref{thm:uppgamma} do not directly tell us how to choose the two optimal basis states, such a choice may sometimes be obvious. For example, consider a $n$-qubit system $({\rm span}\{\ket{1},\ket{2}\})^{\otimes n}$ measured by $\{M,\id-M\}^{\otimes n}$ (independently on each subsystem) and $M = (1-m)\ket{1}\bra{1} + m \ket{2}\bra{2}$. Then using \thmref{thm:2d}, due to the bit-flip symmetry and the fact that tracing out some parts of the quantum state will not increase its QPFI, it is clear that rotating $(\ket{\psi_\theta},\ket{\psi^\perp_\theta})$ into ${\rm span}\{\ket{1}^{\otimes n},\ket{2}^{\otimes n}\}$, or any other basis states, e.g., $\{\ket{121\cdots 1},\ket{212\cdots 2}\}$ that are distinct on each qubit, must be an optimal choice. In general, it remains open if there is a simple criterion to help us select the optimal $k$ and $l$ besides a direct calculation of \eqref{eq:gamma-kl} (or sometimes \eqref{eq:uppgamma}) for different $k$ and $l$.

\section{Classically mixed states} 
\label{sec:classical}

In this section, we consider another type of quantum states, which we called \emph{classically mixed states}, with commuting-operator measurements. A classically mixed state is a state which commutes with its derivative, e.g., Gibbs states whose temperature is to be estimated~\cite{blundell2010concepts}. In this section, we use the following form of classically mixed states: 
\begin{equation}
\label{eq:classical}
    \zeta_\theta = \sum_{i=1}^D \lambda_{i,\theta} \ket{i}\bra{i},
\end{equation}
where $D = \dim(\mH_{S})$, $\lambda_{i,\theta}$ are functions of $\theta$ (we will drop the subscript $\theta$ for conciseness), $\{\ket{i}\}$ is an orthonormal basis of $\mH_S$ that is independent of $\theta$ and we use $\zeta_\theta$ to represent classically mixed states. Note that the QFI of \eqref{eq:classical}  $J(\zeta_\theta) = \sum_{i=1}^D (\partial_\theta\lambda_i)^2/\lambda_i$ is equal to the FI $ J(\{\lambda_i\})$ of the classical distribution $\{\lambda_i\}_{i=1}^D$.
Also, note that we assume in this section, without loss of generality, that the commuting-operator measurement $\{M_i\}$ and the classically mixed state $\zeta_\theta$ share the same eigenstates $\{\ket{i}\}_{i=1}^{\max\{d,D\}}$, as it is always possible to apply a unitary rotation in the preprocessing control so that they are aligned.

We first show that optimizing the FI over quantum channels is equivalent to finding optimal stochastic matrices (which describes the transitions of a classical Markov chain) for the classical preprocessing optimization problem. Then we prove that the optimal control always corresponds to a stochastic matrix that has only elements $0$ or $1$, which we call a coarse-graining stochastic matrix. It implies that the QPFI is always attainable, and that the QPFI can in some cases be strictly larger than the QUPFI. 
Finally, we closely examine the case of a binary measurement on a single qubit.

\subsection{Optimization over stochastic matrices}

\begin{lemma}
\label{lemma:stochastic}
Consider classically mixed states \eqref{eq:classical} and commuting-operator measurements \eqref{eq:commuting}. Then 
\begin{equation}
\label{eq:classical-0}
F^\bP(\zeta_\theta,\{M_i\}) 
= \sup_{P\in\sto} J(\{{\vm}^{(i)T} P \vlambda\}),
\end{equation}
and when $d = D$, 
\begin{equation}
\label{eq:classical-qupfi}
~~~~F^\bU(\zeta_\theta,\{M_i\}) 
\leq \sup_{P\in\dsto} J(\{{\vm}^{(i)T} P \vlambda\}), 
\end{equation}
where $\sto$ represents the set of $d\times D$ stochastic matrices of which every column vector sums up to one and $\dsto$ represents the set of $D\times D$ doubly stochastic matrices of which every column and row vector sums up to one, $\vm^{(i)}$ is a column vector whose entries are $m^{(i)}_j$, $\vlambda$ is a column vector whose entries are $\lambda_i$. 
\end{lemma}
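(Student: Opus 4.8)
The plan is to collapse the quantum optimization over channels (resp.\ unitaries) into a purely classical optimization over transition matrices, exploiting that $\zeta_\theta$ and every $M_i$ are simultaneously diagonal in the common basis $\{\ket{i}\}$. The key object is the diagonal transition data $P_{jk} := \bra{j}\mE(\ket{k}\bra{k})\ket{j}$ attached to a preprocessing map $\mE$. Because $M_i = \sum_j m_j^{(i)}\ket{j}\bra{j}$ is diagonal, only the diagonal entries of $\mE(\zeta_\theta)$ survive in the outcome statistics, so a direct computation gives
\begin{equation}
p_{i,\theta} = \trace(\mE(\zeta_\theta)M_i) = \sum_{j,k} m_j^{(i)}\,\lambda_k\, P_{jk} = \vm^{(i)T}P\vlambda.
\end{equation}
Trace preservation forces $\sum_j P_{jk} = \trace(\mE(\ket{k}\bra{k})) = 1$ and complete positivity forces $P_{jk}\ge 0$, so $P\in\sto$. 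Hence $F(\mE(\zeta_\theta),\{M_i\}) = J(\{\vm^{(i)T}P\vlambda\})$, and taking the supremum over $\mE$ immediately yields the ``$\le$'' direction of \eqref{eq:classical-0}.

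For the reverse inequality I would exhibit, for every $P\in\sto$, a channel realizing it, namely the measure-and-prepare map with Kraus operators $K_{jk} = \sqrt{P_{jk}}\,\ket{j}_{S'}\bra{k}_S$, that is $\mE_P(\sigma) = \sum_{j,k} P_{jk}\,\bra{k}\sigma\ket{k}\,\ket{j}\bra{j}$. This is manifestly completely positive, and $\sum_{j,k}K_{jk}^\dagger K_{jk} = \sum_k(\sum_j P_{jk})\ket{k}\bra{k} = \id_S$ confirms trace preservation using the column-stochasticity of $P$. A one-line check gives $\bra{j}\mE_P(\ket{k}\bra{k})\ket{j} = P_{jk}$, so $F^\bP(\zeta_\theta,\{M_i\}) \ge J(\{\vm^{(i)T}P\vlambda\})$ for every $P\in\sto$, which combined with the previous paragraph establishes the equality \eqref{eq:classical-0}.

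For the unitary bound with $d=D$, the same calculation applies to $\mE(\cdot) = U(\cdot)U^\dagger$, now giving $P_{jk} = |\bra{j}U\ket{k}|^2$. This $P$ is doubly stochastic: its columns sum to one because $\|U\ket{k}\| = 1$ and its rows sum to one because $\|U^\dagger\ket{j}\| = 1$, so $P\in\dsto$. Therefore $F(U\zeta_\theta U^\dagger,\{M_i\}) = J(\{\vm^{(i)T}P\vlambda\})$ with $P\in\dsto$, and supremizing over $U$ produces \eqref{eq:classical-qupfi}. The inequality is generally strict rather than an equality precisely because matrices of the form $|\bra{j}U\ket{k}|^2$ (the unistochastic matrices) form a proper subset of the doubly stochastic matrices once $D\ge 3$; only the larger doubly stochastic set admits a clean description, so one can bound but not exactly characterize the QUPFI this way.

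The main checkpoint, more conceptual than technical, is the reverse inequality in \eqref{eq:classical-0}: I must argue that restricting to classical measure-and-prepare channels loses nothing, i.e.\ that $F(\mE(\zeta_\theta),\{M_i\})$ depends on $\mE$ only through the diagonal data $P_{jk}$. This holds because any coherences produced by a general $\mE$ are annihilated by the diagonal measurement operators and hence never enter the outcome probabilities; the measure-and-prepare channel reproduces the same diagonal $P$ while discarding the irrelevant off-diagonal structure, which is exactly why the supremum over all channels coincides with the supremum over all stochastic matrices.
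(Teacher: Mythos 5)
Your proof is correct and follows essentially the same route as the paper's: the same identification $\trace(\mE(\zeta_\theta)M_i) = \vm^{(i)T}P\vlambda$, the same measure-and-prepare channel with Kraus operators $\sqrt{P_{jk}}\,\ket{j}\bra{k}$ for the reverse inclusion, and the same doubly-stochastic observation $P_{jk}=|U_{jk}|^2$ for the unitary bound. The only (cosmetic) difference is that you define $P_{jk}=\bra{j}\mE(\ket{k}\bra{k})\ket{j}$ intrinsically rather than through a Kraus representation $P_{jk}=\sum_l|\bra{j}K_l\ket{k}|^2$ as the paper does, which makes the representation-independence of $P$ manifest but changes nothing substantive.
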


\begin{proof}
Let $\mE(\cdot) = \sum_j K_j (\cdot) K_j^\dagger$ be an arbitrary quantum channel, then we have 
\begin{equation}
\label{eq:classical-1}
    \trace(M_i\mE(\zeta_\theta)) = \vm^{(i)T} P \vlambda,
\end{equation}
where the matrix $P$ satisfies $P_{\ell k} = \sum_j \abs{\bra{\ell}K_j\ket{k}}^2 = \sum_j |(K_j)_{\ell k}|^2$ 
, which implies 
\begin{equation}
\label{eq:quantum-classical}
     J(\{\trace(M_i\mE(\zeta_\theta))\}) = J(\{{\vm}^{(i)T} P \vlambda\}).
\end{equation}
We must have $\sum_\ell P_{\ell k} = \sum_{\ell j} \bra{k}K_j^\dagger\ket{\ell}\bra{\ell}K_j\ket{k} = \braket{k|k} = 1$, because $\sum_j K_j^\dagger K_j = \id$. Thus, $P$ is a stochastic matrix. For any quantum channel, there exists a stochastic matrix such that \eqref{eq:classical-1} holds true, proving the left-hand side is no larger than the right-hand side in \eqref{eq:classical-0}. Moreover, when $\mE(\cdot) = U(\cdot)U^\dagger$ is a unitary channel, $P_{\ell k} = |U_{\ell k}|^2$ must be doubly stochastic, implying \eqref{eq:classical-qupfi}. 

On the other hand, for any stochastic matrix $P$, we define $K_{(\ell,k)} = \sqrt{P_{\ell k}}\ket{\ell}\bra{k}$ for $1 \leq \ell \leq d$ and $1 \leq k \leq D$. Then we have $\sum_{\ell k}K_{(\ell,k)}^\dagger K_{(\ell,k)} = \sum_{\ell k} P_{\ell k}\ket{k}\bra{k} = \id$. And $\mE(\cdot) = \sum_{(\ell,k)} K_{(\ell,k)} (\cdot) K_{(\ell,k)}^\dagger$ is then a quantum channel.  For any stochastic matrix, there exists a quantum channel such that \eqref{eq:classical-1} holds true, proving the left-hand side is no smaller than the right-hand side in \eqref{eq:classical-0}.

\end{proof}

We show in \lemmaref{lemma:stochastic} that the problem of optimizing preprocessing quantum controls on classically mixed states with commuting-operator measurements is equivalent to a classical version of preprocessing optimization where 
\begin{equation}
\label{eq:classical-FI-A}
    F^\bP(\vlambda,\{\vm^{(i)}\}) := 
    \sup_{P\in\sto} J(\{{\vm}^{(i)T} P \vlambda\})
\end{equation}
represents the classical FI with respect to a classical distribution $\vlambda$ and a noisy measurement $\vm^{(i)}$ satisfying $\sum_i \vm^{(i)} = \v{1}$ ($\v{1}$ is a vector with all elements equal to $1$), optimized over any stochastic mapping described by stochastic matrices.  In particular, for perfect measurements where $(\vm^{(i)})_j = \delta_{ij}$, $F^\bP(\vlambda,\{\vm^{(i)}\}) = J(\vlambda) = \sum_{i=1}^D (\partial_\theta \lambda_i)^2/\lambda_i$ is the classical FI. Note that \thmref{thm:classical} presented later implies that the supremum of the FI over stochastic matrices is always attainable using some $P \in \sto$ and it means we are allowed to replace $\sup_{P\in\sto}$ by $\max_{P\in\sto}$ in the definition (\eqref{eq:classical-FI-A}).

\subsection{Coarse-graining controls are optimal}
\label{sec:coarse-graining}

We first consider the classical case and prove \eqref{eq:classical-FI-A} can always be attained using some $d \times D$ stochastic matrix $P$ where every element of $P$ is either $0$ or $1$. We call this type of stochastic matrix a \emph{coarse-graining stochastic matrix} in the sense that $P$ sums up one or multiple entries of $\vlambda$ to one entry in $P \vlambda$, which is a coarse graining of measurement outcomes. 

\begin{lemma}
\label{lemma:classical-coarse-graining}
Given a classical probability distribution $\vlambda \in \bR^{D}$ and a  measurement $\{\vm^{(i)}\} \subseteq \bR^{d}$ (satisfying $\sum_i \vm^{(i)} = \v{1}$). When $F^\bP(\vlambda,\{\vm^{(i)}\})$ is attainable, there exists a $d\times D$ coarse-graining stochastic matrix $P$ such that, 
\begin{equation}
    F^\bP(\vlambda,\{\vm^{(i)}\}) = J(\{{\vm}^{(i)T} P \vlambda\})
\end{equation}
\end{lemma}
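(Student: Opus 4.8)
The plan is to treat the $D$ columns $P_{\cdot 1},\dots,P_{\cdot D}$ of the stochastic matrix $P \in \sto$ as the optimization variables, each constrained to the probability simplex $\Delta_{d-1} = \{\vv \in \bR^d : \vv \geq 0,\ \v{1}^T\vv = 1\}$, and to show that the objective $J(\{\vm^{(i)T}P\vlambda\})$ can be pushed, one column at a time, to a vertex $\v{e}_\ell$ of $\Delta_{d-1}$ without ever decreasing it. Since a column equal to a standard basis vector $\v{e}_\ell$ is exactly a $0/1$ column, doing this for every column turns an optimal $P$ into a coarse-graining stochastic matrix that still attains the supremum. Writing $p_i = \vm^{(i)T}P\vlambda = \sum_k \lambda_k(\vm^{(i)T}P_{\cdot k})$ and $\partial_\theta p_i = \vm^{(i)T}P\,\partial_\theta\vlambda = \sum_k (\partial_\theta\lambda_k)(\vm^{(i)T}P_{\cdot k})$, I first observe that, with all other columns held fixed, both $p_i$ and $\partial_\theta p_i$ are \emph{affine} functions of the single column $P_{\cdot k}$.

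The key step is convexity in each column separately. Each summand of $J$ has the form $(\partial_\theta p_i)^2/p_i$, and the scalar function $(x,y)\mapsto y^2/x$ is jointly convex on the half-space $x>0$ (it is the perspective of $y\mapsto y^2$, i.e.\ the standard quadratic-over-linear convex function). Composing this jointly convex function with the affine map $P_{\cdot k}\mapsto (p_i,\partial_\theta p_i)$ shows that each summand, and hence $J$, is a convex function of $P_{\cdot k}$ on the region where the $p_i$ stay positive. A convex function on the simplex satisfies $J(\sum_\ell t_\ell \v{e}_\ell) \leq \sum_\ell t_\ell J(\v{e}_\ell) \leq \max_\ell J(\v{e}_\ell)$ by Jensen's inequality, so its maximum over $\Delta_{d-1}$ is attained at a vertex. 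Thus replacing $P_{\cdot k}$ by the best basis vector $\v{e}_{\ell^\ast}$ cannot decrease $J$.

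Finally, I would start from an optimal $P^\ast$ (which exists precisely because $F^\bP(\vlambda,\{\vm^{(i)}\})$ is assumed attainable) and apply this rounding column by column: each replacement leaves $J$ at least as large as before, yet $J$ can never exceed the global supremum, so after all $D$ replacements $J$ still equals $F^\bP(\vlambda,\{\vm^{(i)}\})$ and every column is a $0/1$ vector, giving the desired coarse-graining matrix. The main obstacle is purely the singularity of $y^2/x$ at $x=0$: the convexity and Jensen steps must remain valid when some $p_i$ vanishes. I would handle this by reading $(\partial_\theta p_i)^2/p_i$ as the closed (lower semicontinuous) extended-real perspective function, for which the Jensen inequality above holds verbatim; equivalently, one may first perturb $\{\vm^{(i)}\}$ as in \thmref{thm:attainability} so that all $m^{(i)}_\ell > 0$ (forcing every $p_i > 0$ and every summand finite and smooth), run the argument, and then take the limit, using that there are only finitely many coarse-graining matrices so one of them is optimal in the limit.
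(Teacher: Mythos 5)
Your proof is correct, and it rests on the same core mechanism as the paper's: the Fisher information is a convex function of the entries of any single column of $P$, and a convex function over a polytope attains its maximum at an extreme point, so an optimal stochastic matrix can be rounded to a $0/1$ matrix without losing optimality. The difference is in how the convexity is obtained and how coarsely the rounding proceeds. The paper works one line segment at a time: starting from an optimal $P^*$, it transfers mass $t_1$ between just two entries of one column, writes the FI along that segment as an explicit rational function $f(t_1)$, verifies $\partial^2 f/\partial t_1^2 \geq 0$ by direct computation, and concludes that the maximum sits at an endpoint; iterating these pairwise transfers eventually leaves a single unit entry in each column. You instead observe that $(p_i,\partial_\theta p_i)$ is affine in the whole column and invoke the joint convexity of the quadratic-over-linear function $(x,y)\mapsto y^2/x$, which makes $J$ convex on the entire simplex $\Delta_{d-1}$ and lets a single Jensen step round a whole column to a vertex. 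Your route buys a cleaner argument (no second-derivative computation) and, importantly, an explicit treatment of the singular terms with $p_i=0$, which the paper's calculation silently assumes away; your observation that $p_i=0$ forces $\partial_\theta p_i=0$ for genuine probability distributions (so the closed perspective agrees with the paper's convention of dropping zero-probability terms on every stochastic matrix, and vertices always yield finite, genuine FI values) is exactly what is needed to make the Jensen step legitimate on the boundary of the simplex. The paper's route is more elementary and self-contained---only a derivative computation, no facts about perspective functions---but it is otherwise the same argument carried out in pairwise form.
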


\begin{proof}
Suppose $F^\bP(\vlambda,\{\vm^{(i)}\})$ is attainable and $P^*$ is an optimal solution. We will show that there exists an optimal solution $P$ whose every column vector contains one (and only) element equal to $1$. 
If $P^*$ does not satisfy this condition, without loss of generality, assume $P^*_{11}=t_1^*$ and $P^*_{21} = a_1^* - t_1^*$ where $0 < t_1^* < a_1^* \leq 1$. Let $P(t_1)$ be a stochastic matrix function of $t_1 \in [0,a_1^*]$ where $P(t_1)_{11} = t_1$, $P(t_1)_{21} = a_1^* - t_1$ and $P(t_1)_{\ell k} = (P^*)_{\ell k}$ for $(\ell,k) \neq (1,1),(2,1)$. We have the FI equal to 
\begin{align*}
    f(t_1) &= \sum_i 
    \frac{(\partial_\theta ({\vm}^{(i)T}P(t_1) {\vlambda}))^2}{{\vm}^{(i)T} P(t_1) \vlambda}  \\
    &=  \sum_i 
    \frac{\big((m_1^{(i)} - m_2^{(i)})\partial_\theta{\lambda}_1 t_1 + b^{*(i)}\big)^2}{(m_1^{(i)} - m_2^{(i)})\lambda_1 t_1 + a^{*(i)}},
\end{align*}
where $a^{*(i)}$ and $ b^{*(i)}$ are constants, independent of $t_1$. 
The second order derivative of $f(t_1)$ is 
\begin{equation}
\frac{\partial^2 f(t_1)}{\partial t_1^2} = 
\sum_{i} \frac{2 \big(m_1^{(i)} - m_2^{(i)}\big)^2 \big(a^{*(i)} \partial_\theta{\lambda}_1 - b^{*(i)} \lambda_1 \big)^2}{\big((m_1^{(i)} - m_2^{(i)})\lambda_1 t_1 + a^{*(i)}\big)^3}, 
\end{equation}
which is always non-negative. Therefore, $f(t_1)$ is a convex function and always attains its maximum at the boundary $t_1 = 0$ or $t_1 = a_1^*$. Repeat this argument many times, one can show that there exists an optimal solution $P$ such that there is only one positive entry in every column.
\end{proof}

Note that it is not necessarily true that the optimal coarse-graining stochastic matrix that maximizes $J(\{{\vm}^{(i)T} P \vlambda\})$ is a full-rank matrix. Consider the following example. Let $d=D=3$, $r=2$, $\vlambda = (\cos^2\theta,\frac{1}{2}\sin^2\theta,\frac{1}{2}\sin^2\theta)$, ${\vm}^{(1)} = (1,\frac{1}{2},0)$, and ${\vm}^{(2)} = (0,\frac{1}{2},1)$. Then it is clear that the following stochastic matrix is optimal, 
\begin{equation}
\label{eq:opt-coarse-graining}
    P^* = \begin{pmatrix}
    1 & 0 & 0 \\
    0 & 0 & 0 \\
    0 & 1 & 1
    \end{pmatrix},
\end{equation}
because $J(\{{\vm}^{(i)T} P^* \vlambda\}) = J(\vlambda) = 4$. However, it can be verified by enumeration that $J(\{{\vm}^{(i)T} P \vlambda\}) \leq 3$, whenever $P$ is a permutation matrix, showing the non-optimality of the full-rank stochastic matrices. 

Using \lemmaref{lemma:stochastic}, we can show a similar result to \lemmaref{lemma:classical-coarse-graining} in the quantum case, that is, \emph{coarse-graining channels} are optimal quantum controls. 
\begin{theorem}
\label{thm:classical}
Consider classically mixed states \eqref{eq:classical} and commuting-operator measurements \eqref{eq:commuting}. The QPFI is always attainable using the following type of quantum channels, which we call coarse-graining channels, 
\begin{equation}
\label{eq:coarse-graining-quantum}
    \mE(\cdot) = \sum_{\ell k} P_{\ell k} \ket{\ell}\bra{k}(\cdot)\ket{k}\bra{\ell},
\end{equation}
where $P_{\ell k}$ is a $d\times D$ stochastic matrix satisfying $\sum_{\ell} P_{\ell k} = 1$ and $P_{\ell k} = 0$ or $1$. 
\end{theorem}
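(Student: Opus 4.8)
The plan is to leverage the classical reformulation in \lemmaref{lemma:stochastic} together with the classical coarse-graining result of \lemmaref{lemma:classical-coarse-graining}, and then to translate the resulting optimal stochastic matrix back into a quantum channel of the form \eqref{eq:coarse-graining-quantum}. By \lemmaref{lemma:stochastic}, for classically mixed states and commuting-operator measurements the entire problem collapses to the classical optimization $F^\bP(\zeta_\theta,\{M_i\}) = \sup_{P\in\sto} J(\{\vm^{(i)T}P\vlambda\})$ of \eqref{eq:classical-FI-A}, so it suffices to exhibit a coarse-graining stochastic matrix attaining this supremum and to check that its associated channel is exactly \eqref{eq:coarse-graining-quantum}.

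First I would establish that the classical supremum is in fact a maximum attained by a coarse-graining matrix. The cleanest route avoids any direct appeal to continuity of the Fisher information (which is singular where an outcome probability vanishes). I would observe that there are only finitely many coarse-graining stochastic matrices in $\sto$ — each of the $D$ columns must place its single $1$ in one of $d$ rows, giving at most $d^D$ candidates — and then invoke the convexity argument underlying \lemmaref{lemma:classical-coarse-graining}: holding all but two entries of a single column fixed and moving mass between them, the objective $J(\{\vm^{(i)T}P\vlambda\})$ is a convex function of the transferred mass, so its maximum over the segment occurs at an endpoint. Iterating this over all pairs within a column and over all columns shows that every $P\in\sto$ is dominated in Fisher information by some coarse-graining matrix $P'$, i.e. $J(\{\vm^{(i)T}P\vlambda\}) \leq J(\{\vm^{(i)T}P'\vlambda\})$. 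Consequently $\sup_{P\in\sto} J = \max_{P\in\mathcal{G}} J$, where $\mathcal{G}$ is the finite set of coarse-graining matrices; the latter maximum is attained by some $P^\star\in\mathcal{G}$, which simultaneously proves attainability and coarse-graining optimality at the classical level.

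Next I would lift $P^\star$ to a quantum channel. Following the construction in the proof of \lemmaref{lemma:stochastic}, the Kraus operators $K_{(\ell,k)} = \sqrt{P^\star_{\ell k}}\ket{\ell}\bra{k}$ define a CPTP map whose action is exactly $\mE(\cdot)=\sum_{\ell k}P^\star_{\ell k}\ket{\ell}\bra{k}(\cdot)\ket{k}\bra{\ell}$, i.e. the form \eqref{eq:coarse-graining-quantum}; since $P^\star$ has entries in $\{0,1\}$, this is a coarse-graining channel in the sense of the theorem. The trace-preservation condition $\sum_\ell P^\star_{\ell k}=1$ gives $\sum_{\ell k}K_{(\ell,k)}^\dagger K_{(\ell,k)}=\id$, and by \eqref{eq:quantum-classical} this channel reproduces the classical optimum, so that $F^\bP(\zeta_\theta,\{M_i\})=J(\{\vm^{(i)T}P^\star\vlambda\})$ is attained.

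The main obstacle is rigor at the boundary of the mass-transfer step, where pushing a column to its vertices may drive some outcome probability $\vm^{(i)T}P\vlambda$ to zero and thereby render the Fisher information singular. I would handle this by noting that the monotonicity $F^\bP(\zeta_\theta,\{M_i\}) \leq J(\zeta_\theta) < \infty$ forbids the objective from diverging along any path in $\sto$: if a probability tended to zero with a nonvanishing derivative the objective would blow up, contradicting the bound, so at every vertex reached by the construction the vanishing-denominator terms also have vanishing numerators and drop out under the convention following \eqref{eq:FI}. With this observation the convex-on-each-segment comparison remains valid up to and including the endpoints, and the finite-set argument goes through. (When one additionally assumes $m^{(i)}_j>0$ for all $i,j$, every probability stays strictly positive, $\sto$ is compact, the objective is continuous, and attainability is immediate by compactness; the argument above covers the general case allowing zero entries, as in the example around \eqref{eq:opt-coarse-graining}.)
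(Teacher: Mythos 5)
Your proposal is correct and takes essentially the same route as the paper's own proof: reduce to the classical problem via \lemmaref{lemma:stochastic}, use the convexity argument of \lemmaref{lemma:classical-coarse-graining} to show every stochastic matrix is dominated in Fisher information by a coarse-graining one, and conclude attainability from the finiteness of the set of coarse-graining matrices, lifting the optimal matrix back to a quantum channel through the Kraus construction $K_{(\ell,k)}=\sqrt{P_{\ell k}}\ket{\ell}\bra{k}$. Your explicit handling of the vanishing-probability endpoints (via the bound $F^\bP \leq J(\zeta_\theta) < \infty$) is a detail the paper leaves implicit, but it does not change the structure of the argument.
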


\begin{proof}
By definition, there exists a sequence of channels $(\mE_1,\cdots,\mE_n,\cdots)$ such that $\lim_{n\rightarrow \infty}F(\mE_n(\zeta_\theta),\{M_i\}) = F^\bP(\zeta_\theta,\{M_i\})$. According to \eqref{eq:quantum-classical} in the proof of \lemmaref{lemma:stochastic} and the arguments in  \lemmaref{lemma:classical-coarse-graining}, for every $\mE_n$ there exists a channel $\tilde\mE_n$ of the form \eqref{eq:coarse-graining-quantum} such that $F(\mE_n(\zeta_\theta),\{M_i\}) \leq  F(\tilde{\mE}_n(\zeta_\theta),\{M_i\})$. Therefore, $\lim_{n\rightarrow \infty} F(\tilde{\mE}_n(\zeta_\theta),\{M_i\}) = F^\bP(\zeta_\theta,\{M_i\})$. 
Since there are finite number of channels of the form \eqref{eq:coarse-graining-quantum}, there must exist a $\mE^* = \tilde{\mE}_n$ for some $n$ such that 
\begin{equation}
    F(\mE^*(\zeta_\theta),\{M_i\}) = F^\bP(\zeta_\theta,\{M_i\}), 
\end{equation}
proving the attainability of the QPFI. 
\end{proof}

\thmref{thm:classical} also implies that there is a gap between the QUPFI and the QPFI for general quantum states,  unlike for pure states where the QUPFI is equal to the QPFI. 
\begin{theorem}
\label{thm:gap}
There exists a classically mixed state $\zeta_\theta$ and a commuting-operator measurement $\{M_i\}$ such that
\begin{equation}
\label{eq:strict-ineq}
    F^\bU(\zeta_\theta,\{M_i\}) < F^\bP(\zeta_\theta,\{M_i\}). 
\end{equation}
\end{theorem}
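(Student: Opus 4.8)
The plan is to prove the strict inequality by exhibiting the explicit example introduced immediately above the theorem statement: take $d = D = 3$, $r = 2$, $\vlambda = (\cos^2\theta, \tfrac{1}{2}\sin^2\theta, \tfrac{1}{2}\sin^2\theta)$, $\vm^{(1)} = (1, \tfrac{1}{2}, 0)$ and $\vm^{(2)} = (0, \tfrac{1}{2}, 1)$, which defines the classically mixed state $\zeta_\theta = \sum_i \lambda_{i,\theta}\ket{i}\bra{i}$ and the commuting-operator measurement $M_i = \sum_j m^{(i)}_j\ket{j}\bra{j}$. By \lemmaref{lemma:stochastic}, evaluating the QPFI amounts to maximizing $J(\{\vm^{(i)T} P \vlambda\})$ over all stochastic $P \in \sto$, whereas the QUPFI is bounded above by the same objective maximized only over the doubly stochastic matrices $P \in \dsto$. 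The entire argument then reduces to showing that, for this instance, enlarging the feasible set from $\dsto$ to $\sto$ strictly increases the optimal value. Throughout I would restrict to $\theta \in (0,\pi/2)$, so that all probabilities stay positive and the objective avoids the singular boundary.

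For the lower bound on the QPFI, I would evaluate the objective at the coarse-graining matrix $P^*$ of \eqref{eq:opt-coarse-graining}. A short computation gives $\vm^{(1)T} P^* \vlambda = \cos^2\theta$ and $\vm^{(2)T} P^* \vlambda = \sin^2\theta$, so $J(\{\vm^{(i)T} P^* \vlambda\}) = 4 = J(\zeta_\theta)$. Since the QPFI never exceeds the QFI, this pins down $F^\bP(\zeta_\theta,\{M_i\}) = 4$ exactly.

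The substance of the proof is the upper bound on the QUPFI. Here I would use $F^\bU(\zeta_\theta,\{M_i\}) \leq \sup_{P\in\dsto} J(\{\vm^{(i)T} P \vlambda\})$ from \lemmaref{lemma:stochastic} and exploit convexity. Because $(a,b) \mapsto a^2/b$ is jointly convex on $b > 0$, and both $\vm^{(i)T} P \vlambda$ and its $\theta$-derivative $\vm^{(i)T} P \,\partial_\theta\vlambda$ are linear in $P$, the objective $J(\{\vm^{(i)T} P \vlambda\})$ is a convex function of $P$. A convex function on the Birkhoff polytope $\dsto$ attains its maximum at an extreme point, which by the Birkhoff--von Neumann theorem is a permutation matrix. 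It then suffices to enumerate the permutations (essentially three of them, since two entries of $\vlambda$ coincide): the permutation sending $\cos^2\theta$ into the coordinate weighted equally by $\vm^{(1)}$ and $\vm^{(2)}$ makes both outcome probabilities equal to $\tfrac12$, hence $\theta$-independent, giving $J = 0$; the remaining two give $J = \tfrac{9}{4}\cdot\frac{s(1-s)}{1-\tfrac{3}{4}s} + 3(1-s) < 3$ where $s := \sin^2\theta \in (0,1)$. Therefore $\sup_{P\in\dsto} J < 3 < 4$, and combined with the previous paragraph this yields $F^\bU(\zeta_\theta,\{M_i\}) < F^\bP(\zeta_\theta,\{M_i\})$.

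The main obstacle — and the only genuinely nontrivial step — is recognizing that the maximization over doubly stochastic matrices reduces to permutation matrices. Since $J$ is \emph{not} linear in $P$, one cannot restrict to vertices of $\dsto$ naively; it is precisely the convexity of the Fisher-information objective (together with the assumptions $m^{(i)}_j > 0$ and $\theta \in (0,\pi/2)$ that keep us away from the singular boundary where the objective diverges) that licenses the Birkhoff--von Neumann reduction. Once this is in place, the lower bound and the permutation enumeration are routine verifications.
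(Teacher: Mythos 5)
Your proof is correct, and it shares the paper's scaffolding---the same three-level example, the reduction via \lemmaref{lemma:stochastic}, and the lower bound $F^\bP(\zeta_\theta,\{M_i\}) = 4$ via the coarse-graining matrix $P^*$ of \eqref{eq:opt-coarse-graining}---but the crucial step, bounding $\sup_{P\in\dsto} J(\{{\vm}^{(i)T} P \vlambda\})$ strictly below $4$, is handled by a genuinely different argument. The paper fixes $\theta = \pi/4$, parametrizes the outcome probabilities as $p_1 = a\cos^2\theta + b\sin^2\theta$ with $a,b \in [0,1]$, computes $J = 4(a-b)^2/((a+b)(2-(a+b)))$, shows $J = 4$ forces $(a,b) = (1,0)$ or $(0,1)$, and notes that doubly stochastic matrices cannot realize these corners; turning this pointwise statement $J < 4$ into strictness of the \emph{supremum} implicitly relies on compactness and continuity, which the paper does not spell out. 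You instead observe that $J$ is convex in $P$ (quadratic-over-linear composed with affine maps), so its maximum over the Birkhoff polytope sits at an extreme point, i.e., at a permutation matrix by Birkhoff--von Neumann, and a three-case enumeration yields the explicit bound $\sup_{\dsto} J \leq \max\{0,\, 3(1-s)/(1-\tfrac{3}{4}s)\} < 3 < 4$. Your route buys a finite enumeration that sidesteps the compactness subtlety, a quantitative gap (more than $1$) rather than bare strictness, and validity for every $\theta \in (0,\pi/2)$ rather than a single point; the paper's route buys the equality-case analysis, which identifies exactly which (rank-reducing, non-doubly-stochastic) maps attain $J = 4$ and feeds the discussion after the theorem on why non-unitary preprocessing helps. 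Your convexity observation is also the full joint-convexity version of the one-dimensional convexity already used in the paper's proof of \lemmaref{lemma:classical-coarse-graining}, so the argument stays within the paper's toolkit.

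One small repair: your parenthetical justifies positivity of the denominators by ``the assumptions $m^{(i)}_j > 0$,'' but this example violates that assumption ($m^{(1)}_3 = m^{(2)}_1 = 0$). The needed fact still holds on $\dsto$ for a different reason: since $m^{(1)}_1 = m^{(2)}_3 = 1$, all $\lambda_k > 0$ for $\theta \in (0,\pi/2)$, and every \emph{row} of a doubly stochastic $P$ sums to one, one has ${\vm}^{(1)T} P \vlambda \geq \sum_k P_{1k}\lambda_k > 0$ and ${\vm}^{(2)T} P \vlambda \geq \sum_k P_{3k}\lambda_k > 0$. With that substitution your convexity (and continuity) claim on $\dsto$ is airtight.
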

\begin{proof}
Consider the example discussed below \lemmaref{lemma:classical-coarse-graining} and here we fix $\theta = \pi/4$. \thmref{thm:classical} implies that for $\zeta_\theta = \cos^2\theta \ket{1}\bra{1} + \frac{1}{2}\sin^2\theta \ket{2}\bra{2} + \frac{1}{2}\sin^2\theta \ket{3}\bra{3}$,  
\begin{equation}
    J(\zeta_\theta) = F^\bP(\zeta_\theta,\{M_i\}) = 4,
\end{equation}
where $M_1 = \ket{1}\bra{1} + \frac{1}{2} \ket{2}\bra{2}$ and $M_2 = \frac{1}{2} \ket{2}\bra{2} + \ket{3}\bra{3}$. 
In general, given any stochastic matrix $P$, the probabilities for measurement outcomes $1$ and $2$ must have the form 
\begin{gather}
    p_1 = {\vm}^{(1)T} P \vlambda = a\cos^2\theta + b\sin^2\theta,\\
    p_2 = {\vm}^{(2)T} P \vlambda = (1-a)\cos^2\theta + (1-b)\sin^2\theta, 
\end{gather}
for some $0 \leq a,b \leq 1$. Moreover, $J(\{p_1,p_2\}) = 4(a-b)^2/(2-(a+b))/(a+b)$. And $J(\{p_1,p_2\}) = J(\zeta_\theta)$ if and only if $(a,b) = (1,0)$ or $(0,1)$. Noting that the situation where $(a,b) = (1,0)$ or $(0,1)$ is not possible if $P$ is doubly stochastic. Applying \lemmaref{lemma:stochastic}, \eqref{eq:strict-ineq} is then proven. 
\end{proof}

The intuition behind this type of gap between the QPFI and the QUPFI stems from the fact that non-unitary operations, e.g., the coarse-graining channel, have the power of reducing the rank of quantum states, while unitary operations do not. Consequently, when certain conditions are met: (i)~the noisy measurement under consideration is noiseless in a lower-dimensional subspace, e.g., ${\rm span}\{\ket{1},\ket{3}\}$ in the example above and (ii)~the rank of the quantum state can be compressed without reducing its QFI, e.g., collapsing ${\rm span}\{\ket{2},\ket{3}\}$ into ${\rm span}\{\ket{3}\}$, non-unitary preprocessing operations can achieve the optimal QFI. In contrast, relying solely on unitary preprocessing for high-rank states results in unavoidable measurement noise and suboptimal performance.

Finally, we note that although the implementation of general quantum preprocessing channels can sometimes be challenging with the requirement of preparing a clean ancillary system that occurs in the Stinespring’s
dilation (see \propref{prop:channel-unitary}), the resources needed to perform coarse-graining channels can be reduced in many cases. Firstly, the ancilla size required to perform coarse-graining channels is in principle smaller than $d^2$ that is required in general cases. In fact, any coarse-graining channel defined by \eqref{eq:coarse-graining-quantum} can be simulated using a $d$-dimensional ancilla, e.g., by first performing a unitary operation on $\mH_S \otimes \mH_{A_1}$ that maps $\ket{k}_S\ket{0}_{A_1} \mapsto \ket{k}_S\ket{\iota(k)}_{A_1}$ for all $k$, where $\iota(k)$ corresponds to the index of the row such that $P_{\iota(k)k} = 1$, and then discarding the probe system $\mH_S$. Secondly, the coarse-graining channel can also be performed on certain quantum states by resetting some parts of the system with no additional ancillas in some cases. For example, consider a two-qubit quantum state $\zeta_\theta = \cos^2\theta \ket{00}\bra{00} + \frac{1}{2}\sin^2\theta \ket{10}\bra{10} + \frac{1}{2}\sin^2\theta \ket{11}\bra{11}$ and measurement operators $M_0 = \ket{00}\bra{00} + \frac{1}{2}\ket{01}\bra{01} + \frac{1}{2}\ket{10}\bra{10}$ and $M_1 = \ket{11}\bra{11} + \frac{1}{2}\ket{01}\bra{01} + \frac{1}{2}\ket{10}\bra{10}$. The coarse-graining channel mapping $\ket{00}\mapsto\ket{00}$, $\ket{10} \mapsto\ket{11}$ and $\ket{11}\mapsto\ket{11}$ is optimal and it can be performed by first resetting the second qubit to $\ket{0}$ and then applying a CNOT gate that maps $\ket{00} \mapsto \ket{00}$ and  $\ket{10} \mapsto \ket{11}$. Note that resetting qubits is usually considered much less noisy than measuring ones, e.g., in nitrogen-vacancy centers~\cite{doherty2013nitrogen,unden2016quantum}.

\subsection{Binary measurement on a single qubit}
\label{sec:binary-classical}

With \thmref{thm:classical}, in principle, one can find the QPFI for classically mixed states and commuting-operator measurements by exhausting all channels of the form \eqref{eq:coarse-graining-quantum} which is contained in a finite set. However, since the number of coarse graining stochastic matrices is large, the exhaustion procedure will be too costly. Here we closely examine a special case where a classically mixed state is measured by a binary measurement on a single qubit. The time to exhausting all coarse graining matrices is exponentially large with respect to the state dimension $D$. We will show that the time to find a solution can be reduced to a linear complexity by narrowing down the possible forms of the optimal controls. 

To be specific, consider the binary measurement $M_1 = M = m_1 \ket{1}\bra{1} + m_2 \ket{2}\bra{2}$, $M_2 = \id - M$ (assuming $m_2 \leq \min\{m_1,1-m_1\}$), and $\zeta_\theta = \sum_{i=1}^D \lambda_i\ket{i}\bra{i}$.
Then using \lemmaref{lemma:stochastic}, we first have $ F^\bP(\zeta_\theta,\{M_i\})  = \max_{\vt} f(\vt)$ where
\begin{equation}
    f(\vt) := J(\{p_{\theta}(\vt),1-p_{\theta}(\vt)\}),
\end{equation}
and 
\begin{equation}
    p_{\theta}(\vt) := m_2 + (m_1 - m_2)\vt^T \vlambda,
\end{equation}
and $\vt$ is a column vector in $[0,1]^{D}$, corresponding to the first row of the stochastic matrix $P$ in the proof of \lemmaref{lemma:stochastic}. (Note that although from \thmref{thm:classical}, it is possible to restrict $\vt$ to $\{0,1\}^{D}$, and we keep the generality of $\vt$ by allowing it to be in $[0,1]^{D}$ for later use.)

Without loss of generality, we can arrange the order of the positive elements in $\vlambda$ such that 
\begin{equation}
    {(\partial_\theta\lambda_i)}/{\lambda_i} \geq {(\partial_\theta\lambda_j)}/{\lambda_j},\quad \forall i < j \text{~and~} \lambda_{i,j} > 0. 
\end{equation}
Then we assert that 
\begin{equation}
\label{eq:classical-binary}
F^\bP(\zeta_\theta,\{M_i\}) = \max_{i\in[1,D-1]} \max\{ f(\v{1}_{\leq i}) ,f(\v{1}_{\geq i})\} ,
\end{equation}
where $\v{1}_{\leq i}$ represents the vector whose the first $i$ elements are equal to $1$ and the rest are zero. $\v{1}_{\geq i+1} = \v{1} - \v{1}_{\leq i}$. 

Now we prove \eqref{eq:classical-binary}. 
Choose an optimal $\vt^* \in [0,1]^{D}$ that maximizes $f(\vt)$.
We prove \eqref{eq:classical-binary} in each of the following three cases:
\begin{enumerate}[(i),wide, labelwidth=!, labelindent=0pt]
\item $\vt^{*T}\partial_\theta\vlambda = 0$. Then the QPFI is zero and \eqref{eq:classical-binary} is trivial.
\item $\vt^{*T}\partial_\theta\vlambda >  0$. If there exists $i < j$ such that $t^*_i < 1$, $t^*_j > 0$ and $\lambda_{i,j} > 0$. Then define $t^{**}_i = t^*_i + \epsilon/\lambda_i$ and $t^{**}_j = t^*_j - \epsilon/\lambda_j$ where $\epsilon =\min\{ (1 - t^*_i)\lambda_i, t^*_j\lambda_j\}$ and $t^{**}_{k\neq i,j} = t^{*}_{k\neq i,j}$. Then we have either $t^{**}_i = 1$ or $t^{**}_j = 0$. Moreover, we have $f(\vt^{**}) \geq f(\vt^*)$ because $\vt^{*T}\vlambda = \vt^{**T}\vlambda$ and $\vt^{*T}\partial_\theta\vlambda \leq \vt^{**T}\partial_\theta\vlambda$. Then $\vt^{**}$ is also optimal. Repeating this procedure, we can always find an optimal $\vt$ of the form $(1\cdots 1\; t \;0\cdots 0) $ for some $t \in [0,1]$. $\,\Omega(1) \leq J(\rho_\theta^{(n)}) \leq e^{o(n)}$. Using the same convexity argument as in the proof of \lemmaref{lemma:classical-coarse-graining}, we can further show $t$ can be taken to be $1$ or $0$. \eqref{eq:classical-binary} is proven. 
\item $\vt^{*T}\partial_\theta\vlambda < 0$. If there exists $i < j$ such that $t^*_i > 0$, $t^*_j < 1$ and $\lambda_{i,j} > 0$. Then define $t^{**}_i = t^*_i - \epsilon/\lambda_i$ and $t^{**}_j = t^*_j + \epsilon/\lambda_j$ where $\epsilon =\min\{ (1 - t^*_j)\lambda_j, t^*_i\lambda_i\}$ and $t^{**}_{k\neq i,j} = t^{*}_{k\neq i,j}$. Moreover, we have $f(\vt^{**}) \geq f(\vt^*)$ because $\vt^{*T}\vlambda = \vt^{**T}\vlambda$ and $\vt^{*T}\partial_\theta\vlambda \geq \vt^{**T}\partial_\theta\vlambda$. Then $\vt^{**}$ is also optimal.Repeating this procedure, we can always find an optimal $\vt$ of the form $(0\cdots 0\; t \; 1\cdots 1)$ for some $t \in [0,1]$. Using the same convexity argument as in the proof of \lemmaref{lemma:classical-coarse-graining}, we can further show $t$ can be taken to be $1$ or $0$. \eqref{eq:classical-binary} is proven. 
\end{enumerate}

\section{General quantum states}
\label{sec:general}

In \secref{sec:pure} and \secref{sec:classical}, we have obtained fruitful results on preprocessing optimization for pure states and classically mixed states. Here, we consider the QPFI for general mixed states and derive useful upper and lower bounds on them.

\subsection{Upper bound}

\begin{theorem}
\label{thm:upper}
Given any density operator $\rho_\theta$ and quantum measurement $\{M_i\}$, we have
\begin{equation}
    F^\bP (\rho_\theta,\{M_i\}) \leq \gamma(\{M_i\}) J(\rho_\theta). 
\end{equation}
\end{theorem}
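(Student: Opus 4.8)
The plan is to reduce the mixed-state bound to the pure-state identity $F^\bU(\psi_\theta,\{M_i\}) = \gamma(\{M_i\})J(\psi_\theta)$ of \eqref{eq:gamma}. Since $F^\bP(\rho_\theta,\{M_i\}) = \sup_\mE F(\mE(\rho_\theta),\{M_i\})$, it is enough to prove the single-state inequality $F(\sigma,\{M_i\}) \le \gamma(\{M_i\})J(\sigma)$ for an arbitrary density operator $\sigma$ on $\mH_{S'}$; applying it to $\sigma=\mE(\rho_\theta)$ and using the monotonicity (data-processing inequality) of the QFI, $J(\mE(\rho_\theta)) \le J(\rho_\theta)$, gives $F(\mE(\rho_\theta),\{M_i\}) \le \gamma(\{M_i\})J(\rho_\theta)$ for every channel $\mE$, whence the claim follows on taking the supremum.

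To prove the single-state inequality I would purify $\sigma$. Introduce a reference system $\mH_R$ and a purification $\ket{\Sigma_\theta}$ of $\sigma$. Because the marginal of $\ket{\Sigma_\theta}\bra{\Sigma_\theta}$ on $\mH_{S'}$ is $\sigma$ and $\{M_i\otimes\id_R\}$ acts trivially on $\mH_R$, the outcome probabilities $\trace((M_i\otimes\id_R)\ket{\Sigma_\theta}\bra{\Sigma_\theta}) = \trace(M_i\sigma)$ and their $\theta$-derivatives coincide with those generated by $\sigma$, so the Fisher informations are equal, $F(\sigma,\{M_i\}) = F(\ket{\Sigma_\theta}\bra{\Sigma_\theta},\{M_i\otimes\id_R\})$. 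As $\ket{\Sigma_\theta}$ is pure, choosing $U=\id$ in \eqref{eq:FI-U} and then applying \eqref{eq:gamma} gives $F(\ket{\Sigma_\theta}\bra{\Sigma_\theta},\{M_i\otimes\id_R\}) \le F^\bU(\ket{\Sigma_\theta}\bra{\Sigma_\theta},\{M_i\otimes\id_R\}) = \gamma(\{M_i\otimes\id_R\})\,J(\ket{\Sigma_\theta}\bra{\Sigma_\theta})$.

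Two facts then finish the argument. First, the ancilla-invariance $\gamma(\{M_i\otimes\id_R\}) = \gamma(\{M_i\})$, which is exactly \eqref{eq:channel-equal-unitary-gamma} of \thmref{thm:unitary}. Second, I would take the purification to be optimal, so that its QFI saturates the mixed-state value, $J(\ket{\Sigma_\theta}\bra{\Sigma_\theta}) = J(\sigma)$; such a purification exists because the QFI of a mixed state equals the minimum of the pure-state QFI over all of its (generally $\theta$-dependent) purifications~\cite{escher2011general,fujiwara2008fibre}. Together these give $F(\sigma,\{M_i\}) \le \gamma(\{M_i\})J(\sigma)$, closing the reduction.

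The step I expect to require the most care is the use of this extremal-purification characterization of the QFI: the bound is only tight enough when the purification is chosen to minimize its QFI, since a generic purification yields merely $F(\sigma,\{M_i\}) \le \gamma(\{M_i\})J(\ket{\Sigma_\theta}\bra{\Sigma_\theta})$ with $J(\ket{\Sigma_\theta}\bra{\Sigma_\theta}) \ge J(\sigma)$. Everything else — the probability-matching identity for $F$, the data-processing inequality for $J$, and the ancilla-invariance of $\gamma$ — is either elementary or already available in the excerpt, so the optimal-purification fact is the one genuinely external ingredient. Alternatively, one could bypass the separate data-processing step by dilating $\mE$ via \propref{prop:channel-unitary} and purifying $\rho_\theta$ optimally, so that a single unitary (together with a fixed ancilla) produces a pure state on the enlarged space whose QFI equals $J(\rho_\theta)$ and on which the measurement $\{M_i\otimes\id\}$ reproduces $F(\mE(\rho_\theta),\{M_i\})$.
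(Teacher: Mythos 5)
Your proof is correct, and it reaches the bound by a genuinely different decomposition than the paper's. The paper works on the input side in a single pass: it invokes \propref{prop:channel-unitary} to replace the supremum over channels by a supremum over unitaries on an enlarged space, takes the QFI-minimizing purification $\psi_\theta^*$ of $\rho_\theta$ (so $J(\psi_\theta^*)=J(\rho_\theta)$), notes that any unitary on $\mH_S\otimes\mH_{A_1}$ extends trivially to the purifying system with identical outcome statistics, and then applies \eqref{eq:gamma} and \thmref{thm:unitary} to the resulting pure state. You instead work on the output side: you first prove the control-free pointwise inequality $F(\sigma_\theta,\{M_i\})\leq\gamma(\{M_i\})J(\sigma_\theta)$ for an arbitrary state family on $\mH_{S'}$ --- by optimally purifying $\sigma_\theta$, applying \eqref{eq:gamma} to the purification, and using the ancilla-invariance \eqref{eq:channel-equal-unitary-gamma} --- and then set $\sigma_\theta=\mE(\rho_\theta)$ and strip off the channel with the data-processing inequality $J(\mE(\rho_\theta))\leq J(\rho_\theta)$ before taking the supremum. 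The two key external ingredients (the minimal-purification characterization of the QFI, which you correctly single out as the delicate step, and \thmref{thm:unitary}) are shared, but the trade-off differs: your route avoids \propref{prop:channel-unitary} and its dimension bookkeeping entirely and yields a reusable intermediate lemma for a fixed measured state, at the price of invoking Petz monotonicity of the QFI under general CPTP maps, an ingredient the paper's proof never needs; the paper's route stays wholly within machinery it has already built. Your closing remark about dilating $\mE$ and purifying $\rho_\theta$ optimally is, in essence, exactly the paper's own proof.
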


\begin{proof}
Suppose $\mH_{A_1}$ and $\mH_{A_2}$ are ancillary systems such that $\mH_{A_1} \otimes \mH_{S} \congorequal \mH_{A_2} \otimes \mH_{S'}$ and $\dim(\mH_{A_1}) \geq \dim(\mH_{S'})^2$, where $\mH_S$ and $\mH_{S'}$ are the systems $\rho_\theta$ and $\{M_i\}$ act on. We also define an additional environmental system $\mH_E$ satisfying $\dim(\mH_E) = \dim(\mH_S)$. Let $\psi_\theta = \ket{\psi_\theta}\bra{\psi_\theta}_{ES}$ denote the purifications of $\rho_\theta$ in $\mH_{E}\otimes\mH_{S}$.  Using the purification-based definition of QFI~\cite{fujiwara2008fibre,escher2011general}, we have 
\begin{equation}
J(\rho_\theta) 
= \min_{\psi_\theta:\rho_\theta = \trace_E(\psi_\theta)} J(\psi_\theta). \end{equation}
Choose $\psi_\theta^*$ to be the optimal purification $\psi_\theta^*$ that minimizes $J(\psi_\theta)$ such that $J(\rho_\theta) = J(\psi_\theta^*)$.
Then 
\begin{align}
& \quad\, F^{\bP}(\rho_\theta,\{M_i\}) \nonumber\\
&= F^\bU\big(\rho_\theta\otimes\ket{0}\bra{0}_{A_1},\{M_i\otimes\id_{A_2}\}\big) \\
&\leq F^{\bU}(\psi_\theta^*\otimes\ket{0}\bra{0}_{A_1},\{\id_E \otimes M_i\otimes \id_{A_2}\}) \\
&= \gamma(\{\id_E \otimes M_i\otimes \id_{A_2}\}) J(\psi_\theta^*\otimes\ket{0}\bra{0}_{A_1})\\
&= \gamma(\{M_i\}) J(\rho_\theta),
\end{align}
where we use \propref{prop:channel-unitary}, \eqref{eq:gamma} and \thmref{thm:unitary}.

\end{proof}

\thmref{thm:upper} provides an upper bound on the QPFI for general quantum states. In particular, it shows the ratio between the QPFI and the QFI is always upper bounded by a state-independent constant $\gamma(\{M_i\})$ which is attainable when the state is pure and gives rise to the following CRB for general quantum states under noisy measurement $\{M_i\}$: 
\begin{equation}
    \Delta\hat\theta \geq \frac{1}{\sqrt{N_\expr \gamma(\{M_i\}) J(\rho_{\theta})}}. 
\end{equation}

\subsection{Lower bound}

\begin{lemma}
\label{lemma:lower}
Consider a density operator $\rho_\theta$ and quantum measurement $\{M_i\}$. Assume $\{T_i\}$ is a QFI-attainable measurement, i.e., $F(\rho_\theta,\{T_i\}) = J(\rho_\theta)$. 
Let the quantum-classical channel $\mT(\cdot) = \sum_i \trace((\cdot)T_i) \ket{i}\bra{i}_C$ where $\{\ket{i}_C\}$ is an orthonormal basis of an auxiliary system $\mH_C$. Then 
\begin{equation}
    F^\bP(\rho_\theta,\{M_i\}) \geq F^\bP(\mT(\rho_\theta),\{M_i\}).
\end{equation}
\end{lemma}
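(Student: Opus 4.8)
The plan is to establish the inequality by a channel-composition (data-processing) argument, exploiting the fact that $\mT$ is itself a CPTP map and that the supremum defining $F^\bP(\rho_\theta,\{M_i\})$ runs over \emph{all} quantum channels from the input system $\mH_S$ to the output system $\mH_{S'}$. Notably, the QFI-attainability of $\{T_i\}$ is not needed for the inequality itself; it only guarantees that the resulting bound is useful, since under that hypothesis $\mT(\rho_\theta)$ is a classically mixed state whose QFI equals $J(\rho_\theta)$.

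Concretely, I would start from \eqref{eq:FI-A} applied to $\mT(\rho_\theta)$, namely $F^\bP(\mT(\rho_\theta),\{M_i\}) = \sup_{\mE'} F(\mE'(\mT(\rho_\theta)),\{M_i\})$, where $\mE'$ ranges over all channels with input system $\mH_C$ (on which $\mT(\rho_\theta)$ is supported) and output system $\mH_{S'}$. The key step is then to observe that, because $\mT$ is a fixed CPTP map from $\mH_S$ to $\mH_C$, the composite $\mE'\circ\mT$ is a legitimate CPTP map from $\mH_S$ to $\mH_{S'}$, and hence an admissible preprocessing channel for $\rho_\theta$. Consequently, for every $\mE'$,
\begin{equation}
F\big(\mE'(\mT(\rho_\theta)),\{M_i\}\big) = F\big((\mE'\circ\mT)(\rho_\theta),\{M_i\}\big) \leq \sup_{\mE} F\big(\mE(\rho_\theta),\{M_i\}\big) = F^\bP(\rho_\theta,\{M_i\}).
\end{equation}
Taking the supremum over $\mE'$ on the left-hand side yields $F^\bP(\mT(\rho_\theta),\{M_i\}) \leq F^\bP(\rho_\theta,\{M_i\})$, which is exactly the claim.

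I do not anticipate a genuine obstacle: the statement is simply the monotonicity of the QPFI under pre-composition with an arbitrary channel, and the one-line composition argument above suffices. The only point requiring mild care is to argue directly at the level of suprema rather than maxima, so that the potential non-attainability flagged in \thmref{thm:attainability} never enters; working with $\sup$ throughout sidesteps this entirely. For the subsequent use of the lemma, the hypothesis $F(\rho_\theta,\{T_i\}) = J(\rho_\theta)$ becomes essential, since it ensures $J(\mT(\rho_\theta)) = J(\{\trace(\rho_\theta T_i)\}) = J(\rho_\theta)$, so that lower-bounding $F^\bP(\rho_\theta,\{M_i\})$ by $F^\bP(\mT(\rho_\theta),\{M_i\})$ loses no QFI and reduces the problem to the classically mixed states analyzed in \secref{sec:classical}.
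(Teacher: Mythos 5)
Your proof is correct and is exactly the argument the paper intends: the paper's proof simply states that the lemma "immediately follows from the definition of the QPFI," i.e., monotonicity under pre-composition with the CPTP map $\mT$, which is precisely the composition argument you spell out. Your side remarks (that attainability issues are avoided by working with suprema, and that QFI-attainability of $\{T_i\}$ matters only for the lemma's usefulness) are accurate as well.
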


The proof \lemmaref{lemma:lower} is straightforward---it immediately follows from the definition of the QPFI. The equality holds true when $\{M_i\}$ is a projection onto an orthonormal basis of $\mH_{S'}$, i.e., $\{M_i=\ket{i}\bra{i}\}_{i=1}^{\dim(\mH_{S'})}$. 

Note that the equality in \lemmaref{lemma:lower} also holds when $\rho_\theta$ is a classically mixed state and the QFI-attainable measurement is chosen to be the projective measurement onto the basis of $\mH_S$ so that $\mT(\rho_\theta) = \rho_\theta$. For general mixed states, since $\mT(\rho_\theta)$ is a classically mixed state, the results in \secref{sec:classical} can be applied here to analyze $F^\bP(\mT(\rho_\theta),\{M_i\})$ and derive lower bounds for general mixed states. For example, one can divide the measurement operators into two subsets, restrict the measurement in a two-dimensional subspace, and then use our previous result of the binary measurement on a qubit for classically mixed states to derive an efficiently computable lower bound on the QPFI.

Note that unlike the upper bound (\thmref{thm:upper}), there are no constant lower bounds independent of $\rho_\theta$ on the ratio between $F^\bP(\rho_\theta,\{M_i\})$ and $J(\rho_\theta)$. For example, consider the single qubit case where $\rho_\theta = \cos^2\theta \ket{1}\bra{1} + \sin^2\theta \ket{2}\bra{2}$, $M_1 = (1-m)\ket{1}\bra{1} + m\ket{2}\bra{2}$, and $M_2 = \id - M_1$ ($0 < m < 1/2$). We have, from \secref{sec:binary-classical}, that 
\begin{equation}
\label{eq:qubit-binary-classical}
    F^\bP(\rho_\theta,\{M_i\}) = \frac{4(1-2m)^2\sin(2\theta)^2}{1-(1-2m)\cos(2\theta)^2}, 
\end{equation}
which tends to zero as $\theta \rightarrow 0$ (and the optimal preprocessing is identity when $\theta \in (0,\pi/4)$). On the other hand, $J(\rho_\theta) = 4$ is a constant, showing that $F^\bP(\rho_\theta,\{M_i\})/J(\rho_\theta)$ has no state-independent constant lower bounds. 

\section{Global preprocessing: asymptotic limits}
\label{sec:asymptotic}

In this section, we consider the power of global quantum preprocessing in the asymptotic limit (see \figref{fig:asymptotic}). We consider a multi-partite system $\mH_S = \mH^{\otimes n}$ and $\mH_{S'} = {\mH'}^{\otimes n}$ where $\dim \mH = D$ and $\dim \mH' = d$, a set of quantum states $\rho_\theta^{(n)}$ in $\mH^{\otimes n}$, and quantum measurements $\{M_i\}^{\otimes n}$ that can be written as tensor products of identical measurements on each subsystem $\mH'$. Arbitrary (and usually global) preprocessing quantum channels $\mE$ are applied before the noisy measurement. We will show that for a generic class of quantum states, the QPFI can reach the QFI asymptotically for large $n$. Note that the QPFI is in general not achievable~\cite{len2021quantum} 
when $\mE$ can only act locally and independently on each subsystem. 

\subsection{Attaining the QFI with noisy measurements}

\begin{theorem}
\label{thm:metrology-capacity}
Given a set of quantum states $\{\rho_\theta^{(n)}\}_{n}$ where $\rho_\theta^{(n)}$ is a function of $\theta$ and acts on $\mH^{\otimes n}$ for each $n$, we have
\begin{equation}
\lim_{n\rightarrow \infty} \frac{F^{\bP}(\rho_\theta^{(n)},\{M_{i}\}^{\otimes n})}{J(\rho_\theta^{(n)})} = 1,
\end{equation} 
if for each $\rho_\theta^{(n)}$ the following are true: 
\begin{itemize}[wide, labelwidth=!, labelindent=0pt]
    \item There exists a quantum measurement $\{T_i^{(n)}\}$ whose number of measurement outcomes is $r_n$ such that 
\begin{equation}
\label{eq:QFI-attainable}
\lim_{n\rightarrow \infty} \frac{F(\rho_\theta^{(n)},\{T_i^{(n)}\})}{J(\rho_\theta^{(n)})} = 1, \,\text{and}\,
\lim_{n\rightarrow\infty} \frac{\log r_n}{n} < C(\mM),
\end{equation}
where $\log$ is the binary logarithm and $C(\mM)$ is the classical capacity of the quantum-classical channel $\mM(\cdot) = \sum_{i} \trace\big((\cdot)M_i\big) \ket{i}\bra{i}_C$ ($\{\ket{i}_C\}$ is an orthonormal basis of an auxiliary system $\mH_C$). 
\item The regularity conditions are satisfied:
\begin{enumerate}[(1)]
\item When $\partial_\theta \lambda_i \neq 0$, $\lambda_i = 1/e^{o(n)}$, where $\lambda_i := \trace(\rho_\theta^{(n)} T_i^{(n)})$ and $\{T_i^{(n)}\}$ is defined above. 
\item $\,\Omega(1) \leq J(\rho_\theta^{(n)}) \leq e^{o(n)}$.
\end{enumerate}

\end{itemize}

\end{theorem}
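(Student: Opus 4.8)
The plan is to sandwich $F^\bP(\rho_\theta^{(n)},\{M_i\}^{\otimes n})$ between $J(\rho_\theta^{(n)})$ from above and $(1-o(1))J(\rho_\theta^{(n)})$ from below. The upper bound is immediate, since a preprocessing channel cannot raise the QFI: $F(\mE(\rho_\theta^{(n)}),\{M_i\}^{\otimes n}) \le J(\mE(\rho_\theta^{(n)})) \le J(\rho_\theta^{(n)})$ (equivalently, \thmref{thm:upper} with $\gamma \le 1$). The entire content is the matching lower bound, which I would establish by exhibiting an explicit near-optimal preprocessing channel built from the Holevo--Schumacher--Westmoreland (HSW) coding theorem.

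For the construction, let $\{T_i^{(n)}\}$ be the near-QFI-attaining measurement of the hypothesis and set $\lambda_i := \trace(\rho_\theta^{(n)}T_i^{(n)})$, so that $J(\{\lambda_i\}) = F(\rho_\theta^{(n)},\{T_i^{(n)}\}) = (1-o(1))J(\rho_\theta^{(n)})$. Because $\lim_n (\log r_n)/n < C(\mM)$, the HSW theorem applied to the quantum--classical channel $\mM$ furnishes, for all large $n$, codeword states $\{\sigma_i^{(n)}\}_{i=1}^{r_n}$ on ${\mH'}^{\otimes n}$ together with a classical decoder of the $\{M_i\}^{\otimes n}$ outcomes whose maximal error is exponentially small, $\epsilon_n \le e^{-\delta n}$ for some fixed $\delta>0$ (the strict gap to capacity yields a positive error exponent). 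I then take the measure-and-prepare channel $\mE(\cdot) = \sum_i \trace((\cdot)\,T_i^{(n)})\,\sigma_i^{(n)}$, which is CPTP and sends $\mH^{\otimes n}$ to ${\mH'}^{\otimes n}$.

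Composing the encoder, the noisy measurement and the decoder produces a classical label $\hat i$ distributed as $q_{\hat i}(\theta) = \sum_i \lambda_i(\theta)\,D_{\hat i i}$, where $D$ is a $\theta$-independent column-stochastic matrix with $D_{ii} = 1-e_i$ and $\max_i e_i \le \epsilon_n$. Since classical post-processing does not increase the FI, $F^\bP \ge J(\{q_{\hat i}\})$, so it suffices to show $J(\{q_{\hat i}\}) \ge (1-o(1))J(\{\lambda_i\})$. I would lower-bound $J(\{q_{\hat i}\})$ through the estimator form \eqref{eq:opt-x}: feeding in the affine test estimator $y_{\hat i} = \alpha x_{\hat i} + \beta$, where $\vx$ is the optimal estimator for $\{\lambda_i\}$ and $\alpha,\beta$ are fixed by the local-unbiasedness constraints for $\{q_{\hat i}\}$, yields $J(\{q_{\hat i}\}) \ge B^2/(C-A^2)$ with $A = \sum_{\hat i}q_{\hat i}x_{\hat i}$, $B = \sum_{\hat i}(\partial_\theta q_{\hat i})x_{\hat i}$ and $C = \sum_{\hat i}q_{\hat i}x_{\hat i}^2$. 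At $D=I$ these equal $0$, $1$ and $J(\{\lambda_i\})^{-1}$, reproducing $J(\{\lambda_i\})$ exactly, so the task is to prove the deviations $\delta A,\delta B,\delta C$ caused by $D\ne I$ are negligible relative to these base values.

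The crux is controlling those deviations even though $r_n$ is exponentially large and the $x_i$ can be big, and this is exactly where the regularity conditions enter. Since $x_i = (\partial_\theta\lambda_i)/(\lambda_i J(\{\lambda_i\}))$ vanishes unless $\partial_\theta\lambda_i\ne 0$, condition~(1) forces $\lambda_i \ge e^{-o(n)}$ on the relevant indices, giving $|x_i| \le 1/\sqrt{\lambda_i J(\{\lambda_i\})} \le e^{o(n)}/\sqrt{J(\{\lambda_i\})}$ and $1/\lambda_i \le e^{o(n)}$; condition~(2) keeps $J(\{\lambda_i\}) = \Theta(J(\rho_\theta^{(n)}))$ bounded below by a constant and above by $e^{o(n)}$. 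The diagonal parts of $\delta A,\delta B,\delta C$ factor as $\max_i e_i$ times a base-sized quantity (e.g.\ $\sum_i\lambda_i|x_i| \le 1/\sqrt{J(\{\lambda_i\})}$ by Cauchy--Schwarz), hence are $\le \epsilon_n\times(\text{base})$; the off-diagonal parts, controlled by Cauchy--Schwarz together with $\sum_{\hat i}\sum_{i\ne\hat i}\lambda_i D_{\hat i i} = \sum_i\lambda_i e_i \le \epsilon_n$ and $1/\lambda_{\hat i}\le e^{o(n)}$, are $\le \sqrt{\epsilon_n\,e^{o(n)}}$ or $\epsilon_n\,e^{o(n)}$ times the base. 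As $\epsilon_n \le e^{-\delta n}$ dominates every $e^{o(n)}$ factor, all relative corrections vanish, so $J(\{q_{\hat i}\}) \ge (1-o(1))J(\{\lambda_i\}) = (1-o(1))J(\rho_\theta^{(n)})$ and the theorem follows. I expect this balancing of the exponentially small coding error against subexponentially large Fisher-information and estimator factors --- needed precisely because the FI is discontinuous under perturbations that shrink small probabilities --- to be the main obstacle.
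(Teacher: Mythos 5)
Your proposal is correct and follows essentially the same route as the paper: your preprocessing channel (measure with the near-QFI-attaining $\{T_i^{(n)}\}$ and prepare the corresponding HSW codeword for $\mM^{\otimes n}$) is exactly the paper's $\Xi_E \circ \mT_n$, and the heart of the argument --- exponentially small decoding error from operating strictly below capacity overwhelming the $e^{o(n)}$ factors permitted by the regularity conditions --- is identical. The only divergence is in the final bookkeeping: the paper controls the decoded distribution in trace distance (via a diamond-norm bound) and Taylor-expands $\sum_i (\partial_\theta \eta_i)^2/\eta_i$ around $\{\lambda_i\}$, whereas you retain the exact column-stochastic transition matrix $D$ and lower-bound the Fisher information by inserting the affine test estimator into the variance (error-observable) formulation; both perturbation analyses invoke the regularity conditions in the same way, so this is a minor technical variation rather than a different proof.
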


\thmref{thm:metrology-capacity} provides a sufficient condition to attain the QFI using noisy measurements in the asymptotic limit $n\rightarrow \infty$. We will first provide a proof of \thmref{thm:metrology-capacity}, and return to the physical understandings of the sufficient condition later. Readers who are not interested in the technical details can skip the technical proof and advance to the discussion part.

In the proof, we will make use of a quantum-classical channel $\mT_n$ defined using $\{T_i^{(n)}\}$, and an encoding channel $\Xi_E$, such that $F(\Xi_E \circ \mT_n (\rho_\theta^{(n)}), \{M_i\}^{\otimes n})$ approaches $J(\rho_\theta^{(n)})$ asymptotically (see \figref{fig:asymptotic}). 
Intuitively speaking, the first step $\mT_n$ is to simulate the (asymptotically) QFI-attainable measurement $\{T_i^{(n)}\}$ on $\rho_\theta^{(n)}$ to transform it into a classically mixed state $\mT_n(\rho)$ such that $J(\mT_n(\rho_\theta^{(n)})) = J(\rho_\theta^{(n)})$. The second step is to choose a suitable encoding channel $\Xi_E$ such that the classical information in $\mT_n(\rho_\theta^{(n)})$ is fully preserved under $\mM^{\otimes n}$, i.e., $J(\mM^{\otimes n}\circ\Xi_E \circ \mT_n (\rho_\theta^{(n)})) \approx J(\mT_n (\rho_\theta^{(n)}))$, leading to the asymptotic attainability of the QFI. Here $\Xi_E$, along with a corresponding deconding channel $\Xi_D$, is chosen such that $\Xi_E \circ \mM^{\otimes n} \circ \Xi_D$ is asymptotically equal to a completely dephasing channel with a transmission rate $\approx C(\mM)$, which is guaranteed to exist using the HSW theorem~\cite{holevo1998capacity,schumacher1997sending}.

\begin{proof}[Proof of {\thmref{thm:metrology-capacity}}]
We first choose an $\alpha$ such that $\lim_{n\rightarrow\infty} {\log r_n}/{n} < \alpha < C(\mM)$. 
According to the definition of the classical capacity of quantum channels~\cite{watrous2018theory}, for any $\epsilon > 0$, there exists an $n_0$ such that for any $n > n_0$, there exist an encoding channel $\Xi_E$ and a decoding channel $\Xi_D$ such that 
\begin{equation}
\| \Xi_D \circ \mM^{\otimes n} \circ \Xi_E  - \mD_2^{\otimes \lfloor \alpha n \rfloor} \|_\diamond \leq \epsilon, 
\end{equation}
where $\mD_2$ is a completely dephasing qubit channel acting on qubit Hilbert space $\mH_\tb$, i.e., $\mD_2(\cdot) = \ket{0}\bra{0}(\cdot)\ket{0}\bra{0} +  \ket{1}\bra{1}(\cdot)\ket{1}\bra{1}$ and $\norm{\cdot}_\diamond$ is the diamond norm of a quantum channel~\cite{watrous2018theory} defined by $\norm{\Phi}_\diamond = \max\{\norm{(\Phi\otimes \id)(X)}_1,\norm{X}_1\leq 1\}$ ($\Phi$ and $\id$ act on systems of the same dimension, and $\norm{\cdot}_1$ denotes the trace norm). Moreover, $\epsilon = e^{-\Omega(n)}$ (see a proof in \appref{app:capacity}).
For any operator $\sigma$, we have 
\begin{equation}
\label{eq:capacity}
\big\|\Xi_D \circ \mM^{\otimes n} \circ \Xi_E (\sigma) - \mD_2^{\otimes \lfloor \alpha  n \rfloor} (\sigma)\big\|_1 \leq \epsilon \norm{\sigma}_1.  
\end{equation}
We also assume $n_0$ is large enough such that for any $n > n_0$,  $r_n \leq 2^{\lfloor \alpha  n \rfloor}$.

Let $\mT_n(\cdot) := \sum_{i=1}^{r_n} \trace\big((\cdot)T_i^{(n)}\big) \ket{e_i}\bra{e_i}$ where we choose $\{\ket{e_i}\}_{i=1}^{r_n}$ to be a subset of the computational basis in $\mH_\tb^{\otimes \lfloor \alpha n \rfloor}$. Without loss of generality, we assume $\lambda_i = \trace(\rho_\theta^{(n)}T_i^{(n)}) > 0$ for all $i$ (we can always exclude the terms that are equal to zero), then  $\mT_n(\rho_\theta^{(n)}) := \sum_{i=1}^{r_n} \lambda_i \ket{e_i}\bra{e_i}$ and 
\begin{align}
F(\rho_\theta^{(n)},\{T_i^{(n)}\}) &\geq F(\Xi_E \circ \mT_n (\rho_\theta^{(n)}) , \{M_i\}^{\otimes n}) \nonumber\\ &= J(\mM^{\otimes n} \circ \Xi_E \circ \mT_n (\rho_\theta^{(n)}))
\nonumber\\ &\geq  J( \Xi_D \circ \mM^{\otimes n} \circ \Xi_E \circ \mT_n (\rho_\theta^{(n)})) ,
\end{align}
where we use  the monotonicity of the QFI in the first and third inequalities and $J(\{p_{i,\theta}\}) = J(\sum_{i}p_{i,\theta}\ket{i}\bra{i})$ for any classical probability distribution $\{p_{i,\theta}\}$ in the second equality. Then we have 
\begin{equation}
\label{eq:capacity-upper}
    \frac{J( \Xi_D \circ \mM^{\otimes n} \circ \Xi_E \circ \mT_n (\rho_\theta^{(n)}))}{F(\rho_\theta^{(n)},\{T_i^{(n)}\})}  \leq 1. 
\end{equation}

Next we aim to show $
\frac{J( \Xi_D \circ \mM^{\otimes n} \circ \Xi_E \circ \mT_n (\rho_\theta^{(n)}))}{F(\rho_\theta^{(n)},\{T_i^{(n)}\})   }$ 
is lower bounded by a constant that approaches $1$ for large $n$. First, assume $n > n_0$, we have 
\begin{multline}
J(\mD_2^{\otimes \lfloor \alpha n \rfloor} \circ \mT_n(\rho_\theta^{(n)})) \\ = J(\mT_n(\rho_\theta^{(n)}))   =  F(\rho_\theta^{(n)},\{T_i^{(n)}\}) = \sum_{i=1}^{r_n} \frac{(\partial_\theta\lambda_i)^2}{\lambda_i},
\end{multline}
where we use $\mD_2^{\otimes \lfloor \alpha n \rfloor}(\ket{e_i}\bra{e_i}) = \ket{e_i}\bra{e_i}$ in the first equality. 
On the other hand, consider 
\begin{multline}
\label{eq:capacity-lower-0}
J(\mD' \circ \Xi_D \circ \mM^{\otimes n} \circ \Xi_E \circ \mT_n (\rho_\theta^{(n)})) 
\\ =: \sum_{i=1}^{r_n} \frac{(\partial_\theta\eta_i)^2}{\eta_i} \leq J( \Xi_D \circ \mM^{\otimes n} \circ \Xi_E \circ \mT_n (\rho_\theta^{(n)}))
\end{multline}
where $\mD'(\cdot) = \sum_i \ket{e_i}\bra{e_i}(\cdot)\ket{e_i}\bra{e_i}$, $\eta_i = \lambda_i + \delta_i$, and $\delta_i = \braket{e_i|(\Xi_D \circ \mM^{\otimes n} \circ \Xi_E - \mD_2^{\otimes \lfloor \alpha n \rfloor}) \circ \mT_n(\rho_\theta^{(n)})|e_i}$.
We will also assume $n$ is large enough such that $\delta_i < \lambda_i$, which is possible due to \eqref{eq:capacity} and the regularity condition~(1). 

Then we have 
\begingroup
\allowdisplaybreaks
\begin{align}
&\quad \, \sum_{i=1}^{r_n} \frac{(\partial_\theta\eta_i)^2}{\eta_i} = \sum_{i=1}^{r_n} \frac{(\partial_\theta\lambda_i+\partial_\theta\delta_i)^2}{\lambda_i+\delta_i} \nonumber\\
&= \sum_{i=1}^{r_n} (\partial_\theta\lambda_i+\partial_\theta\delta_i)^2\left( \frac{1}{\lambda_i} - \frac{\delta_i}{\lambda_i^2(1+\xi_i)^2}\right)\nonumber\\
&\geq \sum_{i=1}^{r_n} \left((\partial_\theta\lambda_i)^2+2\partial_\theta\lambda_i\partial_\theta\delta_i\right)\frac{1}{\lambda_i}\left( 1 - \frac{\abs{\delta_i}}{\lambda_i}\right)\nonumber\\
&\geq F(\rho_\theta^{(n)},\{T^{(n)}_i\}) - \sum_{i=1}^{r_n}  \frac{(\partial_\theta\lambda_i)^2}{\lambda_i^2}\abs{\delta_i} - \sum_{i=1}^{r_n}  \frac{2\abs{\partial_\theta\lambda_i\partial_\theta\delta_i}}{\lambda_i}\nonumber
\\
&\geq F(\rho_\theta^{(n)},\{T^{(n)}_i\}) \nonumber\\ 
&\qquad \quad - \epsilon \max_i \left(   {\frac{(\partial_\theta\lambda_i)^2}{\lambda_i^2}} + 2 \bigg|\frac{\partial_\theta\lambda_i}{\lambda_i}\bigg| 
J(\rho_\theta^{(n)})^{1/2}
\right). 
\label{eq:capacity-lower-1}
\end{align}
\endgroup
In the second equality above, we use the Taylor expansion $\frac{1}{1+\delta_i/\lambda_i} = 1 - \frac{\delta_i/\lambda_i}{(1+\xi_i)^2}$ for some $\xi_i \in [0,\delta_i/\lambda_i]$. In the last inequality above, we use \eqref{eq:capacity} to derive that $\sum_i \abs{\delta_i} \leq \epsilon \|\rho_\theta^{(n)}\|_1  = \epsilon$ and 
\begin{multline}
\sum_i \big|\partial_\theta\delta_i\big| \leq \epsilon \|\partial_\theta \mT_n(\rho_\theta^{(n)})\|_1  \\ \leq  \epsilon J(\mT_n(\rho_\theta^{(n)}))^{1/2} \leq \epsilon J(\rho_\theta^{(n)})^{1/2}. 
\end{multline}
Here we use the inequality $\|\partial_\theta \zeta_\theta \|_1 \leq J(\zeta_\theta)^{1/2}$ for any classically mixed state $\zeta_\theta = \sum_i \mu_i \ket{i}\bra{i}$, which is true because $(\sum_i |\partial_\theta \mu_i|)^2 \leq \sum_i (\partial_\theta \mu_i)^2/\mu_i$ from the Cauchy--Schwarz inequality. Note that it also holds that $\|\partial_\theta \sigma_\theta \|_1 \leq J(\sigma_\theta)^{1/2}$ for general mixed states $\sigma_\theta$~\cite{oszmaniec2016random,jarzyna2020geometric}.

\begin{figure}[tb]
    \centering
    \includegraphics[width=0.48\textwidth]{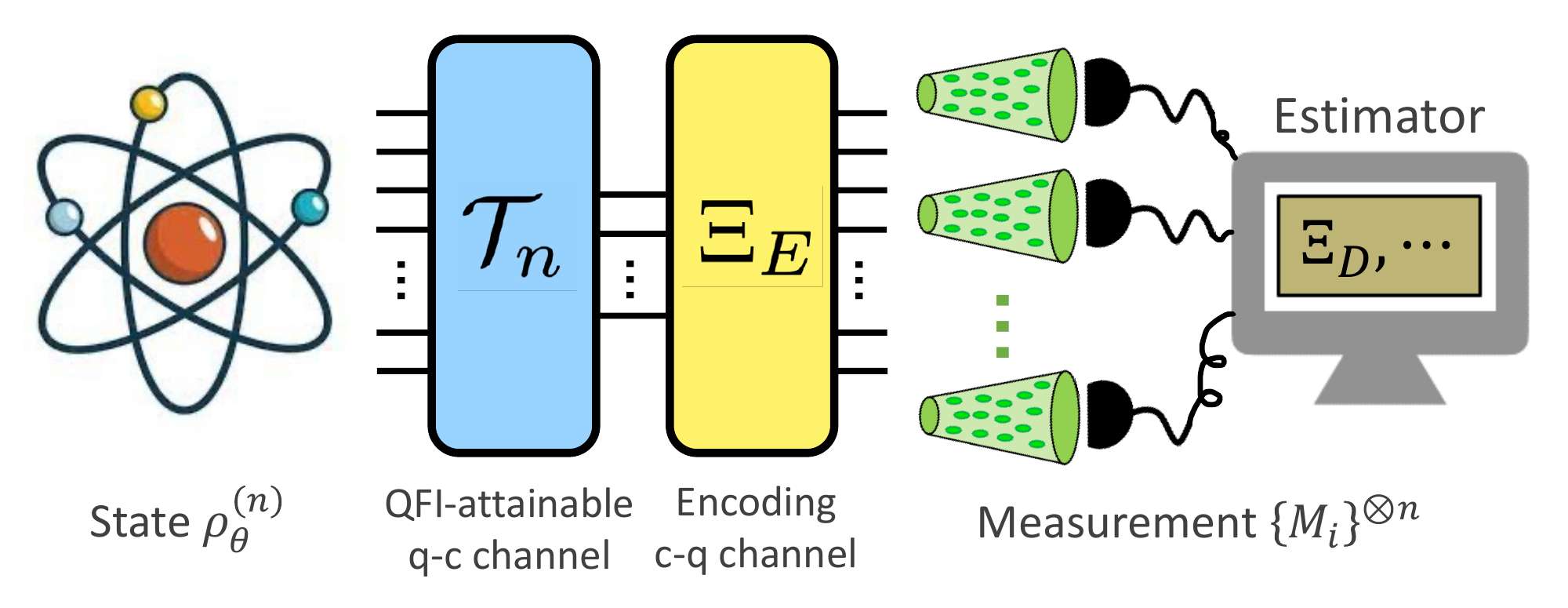}
    \caption{A quantum state $\rho_\theta^{(n)}$ in an $n$-partite system is estimated using $n$ identical noisy measurements acting on each subsystem, described by $\{M_i\}^{\otimes n}$. The QPFI can approach the QFI in the asymptotic limit $n\rightarrow \infty$ if the sufficient condition in \thmref{thm:metrology-capacity} is satisfied. The optimal control is the composition of a quantum-classical channel $\mT_n(\cdot) = \sum_{i=1}^{r_n} \trace\big((\cdot)T_i^{(n)}\big) \ket{e_i}\bra{e_i}$ where the measurement $\{T_i^{(n)}\}$ is asymptotically QFI-attainable, and an encoding channel $\Xi_E$ chosen as the optimal encoding channel for $\mM^{\otimes n}$ from the HSW theorem. Note that the decoding channel $\Xi_D$ from the HSW theorem only needs to be used in a classical post-processing manner. 
    }
    \label{fig:asymptotic}
\end{figure}

Finally, from the monotonicity of the QFI (i.e., $\sum_i \frac{(\partial_\theta\lambda_i)^2}{\lambda_i} \leq J(\rho_\theta^{(n)})$) and the regularity conditions~(1) and (2), we have 
\begin{equation}
\label{eq:upper-subexp}
\max_i{\frac{(\partial_\theta\lambda_i)^2}{\lambda_i^2}} +  2\bigg|\frac{\partial_\theta\lambda_i}{\lambda_i}\bigg| \norm{\partial_\theta\rho}_1  \leq J(\rho_\theta^{(n)}) e^{o(n)}. 
\end{equation}
Taking the limit $n\rightarrow \infty$ in \eqref{eq:capacity-lower-1}, from $\epsilon = e^{-\Omega(n)}$ and \eqref{eq:upper-subexp}, we have 
\begin{equation}
\label{eq:capacity-lower-2}
\lim_{n\rightarrow \infty} \frac{1}{F(\rho_\theta^{(n)},r\{T^{(n)}_i\})}\sum_{i=1}^{r_n} \frac{\partial_\theta\eta_i^2}{\eta_i} \geq 1. 
\end{equation}
Combining \eqref{eq:capacity-upper}, \eqref{eq:capacity-lower-0}, and \eqref{eq:capacity-lower-2}, we have 
\begin{equation}
\lim_{n\rightarrow \infty} \frac{J( \Xi_D \circ \mM^{\otimes n} \circ \Xi_E \circ \mT_n (\rho_\theta^{(n)}))}{F(\rho_\theta^{(n)},\{T_i^{(n)}\})}  = 1. 
\end{equation}
Since $\lim_{n\rightarrow \infty} \frac{F(\rho_\theta^{(n)},\{T^{(n)}_i\})}{J(\rho_\theta^{(n)})} = 1$ and 
\begin{multline}
F^{\bP}(\rho_\theta^{(n)},\{M_i\}^{\otimes n}) 
\geq F(\Xi_E \circ \mT_n (\rho_\theta^{(n)}) , \{M_i\}^{\otimes n}) \\
\geq J( \Xi_D \circ \mM^{\otimes n} \circ \Xi_E \circ \mT_n (\rho_\theta^{(n)})),
\end{multline}
we must have 
\begin{equation}
\lim_{n\rightarrow \infty} \frac{F^{\bP}\left(\rho_\theta^{(n)},\{M_{i}\}^{\otimes n}\right)}{J(\rho_\theta^{(n)})} = 1, 
\end{equation}
proving the theorem. 
\end{proof}

\subsection{Discussion}
\label{sec:discussion-asymptotic}

Here we discuss the intuitions behind the sufficient condition in \thmref{thm:metrology-capacity} and describe the relevant situations where it is satisfied. We will see that the sufficient condition is satisfied 
for a generic class of quantum states $\rho_\theta^{(n)}$ and noisy measurements $\{M_i\}$. 

Let us first explain the meaning of the condition \eqref{eq:QFI-attainable}. It states that there exists an (asymptotically) QFI-attainable measurement for $\rho_\theta^{(n)}$ that has a \emph{small} number of measurement outcomes. Specifically, the number of measurement outcomes $r_n$ should be smaller than $2^{C(\mM)n}$ (asymptotically) where $C(\mM)$ is the classical capacity of the quantum measurement $\{M_i\}$ under consideration, i.e., \thmref{thm:metrology-capacity} applies when 
\begin{equation}
\label{eq:subexponential}
\log r_n < C(\mM) n + o(n). 
\end{equation}

The requirement (\eqref{eq:subexponential}) is satisfied by many practically relevant quantum states and measurements. 
In fact, whenever the classical capacity of $\mM$ is positive, $r_n = e^{o(n)}$ is a sufficient (but not necessary) condition of \eqref{eq:subexponential}. Below we provide several typical examples where the QFI-attainable measurement with a subexponential number of outcomes exists. See \appref{app:QFI-attainable} for additional details. 
\begin{enumerate}[(1),wide, labelwidth=!, labelindent=0pt,nolistsep]
\item \emph{Low-rank states.~}For pure states, it was known that there exist 2-outcome QFI-attainable measurements~\cite{braunstein1994statistical}. (Note that~\cite{len2021quantum} contains another proof of \thmref{thm:metrology-capacity} when $\rho_\theta^{(n)}$ is pure.) More generally, any $\rho_\theta^{(n)}$ that is supported on a subspace with a subexponential dimension also has a QFI-attainable measurement with a subexponential number of outcomes. 
\item \emph{Symmetric states.~}The second example with a QFI-attainable measurement with a subexponential number of outcomes is symmetric (permutation-invariant) states (e.g., tensor products of $n$ identical mixed states). According to the Schur--Weyl duality~\cite{fulton2013representation,hayashi2017group}, $\mH_S = (\bC^{D})^{\otimes n}$ can be decomposed as $\bigoplus_{\nu} (\mH_\nu({\rm U}(D)) \otimes \mH_\nu(S_n))$, where $\mH_\nu({\rm U}(D))$ and $\mH_\nu(S_n)$ are irreducible representation spaces of the unitary group ${\rm U}(D)$ and the permutation group $S_n$ with index $\nu$. Any symmetric state $\rho_\theta^{(n)}$ can be written as 
\begin{equation}
    \rho_\theta^{(n)} = \bigoplus_{\nu} \left( p_\nu \rho_\nu^{(n)} \otimes \frac{\id_\nu}{\dim(\mH_\nu(S_n))} \right), 
\end{equation}
where $\rho_\nu^{(n)}$ are mixed states acting on $\mH_\nu({\rm U}(D))$ and $p_\nu$ satisfies $\sum_\nu p_\nu = 1$ (both of which can be functions of $\theta$). Then a QFI-attainable measurement with a subexponential number of outcomes $\{\bigoplus_\nu (T_i)_\nu \otimes \id_\nu\}$ of $\rho_\theta^{(n)}$ can be constructed from a QFI-attainable measurement $\{\bigoplus_\nu (T_i)_\nu\}$ of $\bigoplus_{\nu} p_\nu \rho_\nu^{(n)}.$
Let us estimate the number of measurement outcomes: $\nu$ corresponds to Young diagrams (i.e., partitions of $n$ into $D$ parts), implying the number of different indices $\nu$ is $O(n^{D-1})$.
For any $\nu$, $\dim(\mH_\nu({\rm U}(D))$ is equal to the number of semistandard Young tableaux, which is at most $O(n^{D(D-1)/2})$ according to the Weyl dimension formula~\cite{goodman2000representations}. The number of measurement outcomes is thus upper bounded by $\sum_\nu \dim(\mH_\nu({\rm U}(D)) = O(n^{(D-1)(D/2+1)}).$

\item \emph{Gibbs states.~}For classically mixed states $\rho_\theta^{(n)}$, the projection onto the eigenstates of $\rho_\theta^{(n)}$ is QFI-attainable but has exponentially many measurement outcomes. However, we argue that in many cases, a subexponential number of projections onto direct sums of eigenspaces are sufficient to attain the QFI up to the leading order, so that \thmref{thm:metrology-capacity} applies. For instance, consider the Gibbs state 
\begin{equation}
    \rho_\theta^{(n)} = \frac{1}{\sum_\nu e^{-\theta E_\nu}} \sum_\nu e^{-\theta E_\nu}\ket{\nu}\bra{\nu},
\end{equation}
where $\{\ket{\nu}\}$ are energy eigenstates with eigenvalues $\{E_\nu\}$ and $\theta$ is the inverse temperature to be estimated. The QFI is equal to the variance of energy, i.e., 
\begin{equation}
J(\rho_\theta^{(n)}) = \sum_\nu p_\nu E_\nu^2 
- \bigg(\sum_\nu p_\nu E_\nu \bigg)^2,     
\end{equation} 
where $p_\nu = \frac{e^{-\theta E_\nu } }{ \sum_\nu e^{-\theta E_\nu}}$. Assume the energy eigenvalues lie in $[0,\widehat{E})$, where $\widehat{E} = \Theta(n)$ (which is a standard assumption in condensed matter systems) and divide them into intervals $\{I_k = [E_k,E_{k+1})\}_{k=1}^{n^2}$ such that $E_0 = 0$, $E_n = \widehat{E}$ and $E_{k+1} - E_k = \Delta E = \widehat{E}/n^2$. Consider the projections $\{\Pi_{k}\}_{k=1}^n$ onto the direct sums of eigenspaces corresponding to all eigenvalues in $I_k$. The FI is 
\begin{equation}
    F(\rho_\theta^{(n)},\{\Pi_k\}) = \sum_{k=1}^n p_k E_k^2 - \bigg(\sum_\nu p_\nu E_\nu \bigg)^2, 
\end{equation}
where $p_k = \sum_{\nu:E_\nu \in I_k} p_\nu$ and $E_k = \frac{1}{p_k} \sum_{\nu:E_\nu \in I_k} p_\nu E_\nu$. Then we have $J(\rho_\theta^{(n)}) - F(\rho_\theta^{(n)},\{\Pi_k\}) \leq \sum_k p_k (E_{k+1}^2 - E_{k}^2) \leq 2 \widehat{E} \Delta E = \Theta(1)$. Combining with the regularity condition~(2), it implies that $F(\rho_\theta^{(n)},\{\Pi_k\})$ is equal to $J(\rho_\theta^{(n)})$ up to the leading order. 
\end{enumerate}

Next, let us explain the intuitions behind the regularity conditions: 
\begin{enumerate}[(1),wide, labelwidth=!, labelindent=0pt,nolistsep]
    \item {Regularity condition~(1)} states that when the probability of obtaining measurement outcome $i$ depends on $\theta$ (i.e., $\partial_\theta \lambda_i \neq 0$), it must be no smaller than an inverse of a subexponential function of $n$, that is, the probability to detect $i$ cannot be exponentially small. This is also a practically reasonable assumption as we would want to exclude the singular cases where an exponentially small signal provides a non-trivial contribution to the QFI. 
    \item {Regularity condition~(2)} requires that the QFI of $\rho_\theta^{(n)}$ does not decrease with $n$ asymptotically, which should be satisfied in any practically relevant cases. It also requires the QFI to be subexponential, which is a natural assumption in quantum sensing experiments (note that the Heisenberg limit implies $J(\rho_\theta^{(n)}) = O(n^2)$).  
\end{enumerate}

Lastly, we briefly comment on the resources required to implement optimal preprocessing controls. 
First, the total number of ancillary qubits required to implement the desired preprocessing channel $\Xi_E\circ\mT_n$ is at most $O(n)$, because in general $\log(D^n r_n)$ ancillary qubits are sufficient to implement the QFI-attainable q-c channel $\mT_n$ and another $\log(D^n r_n)$ ancillary qubits are sufficient to implement the encoding channel $\Xi_E$. The gate complexity to implement $\mT_n$ is expected to depend on the structure of the quantum state $\rho_\theta^{(n)}$. For example, for symmetric states, the Schur transform, efficiently implementable~\cite{bacon2006efficient}, can be an important step in $\mT_n$. Unitary gates that are used in aligning the output basis of $\mT_n$ to the input basis of the encoding channel $\Xi_E$ should also be taken into consideration. For example, in the special case where $\rho_n$ is a low-rank classically mixed state, $\mT_n$ should be a rotation that matches $\rho_n$ eigenstates to the input basis of $\Xi_E$. The gate complexity to implement the optimal encoding channel $\Xi_E$ is high in general. However, when $r_n$ is subexponential (as we discussed above), the encoding channel does not need to be capacity-achieving as it only needs to reliably transmit an exponentially small amount of information, potentially making it relatively easier to implement (the details are left for future discussion).  For example, when $r_n = 2$, a simple repetition code mapping $\ket{0}$ to $\ket{0}^{\otimes n}$ and $\ket{1}$ to $\ket{1}^{\otimes n}$ will be optimal.  Finally, note that although we have shown that $\Xi_E\circ\mT_n$ is optimal, other simpler optimal preprocessing channels may still exist. For example, for pure states, unitary controls are optimal according to \thmref{thm:unitary}, requiring no ancillas; and a design of an optimal preprocessing unitary is presented in~\cite{len2021quantum}.

\subsection{Examples}

Lastly, we present three simple but natural examples with powerful global preprocessing controls that can be efficiently implemented using $O(\log^2 n)$-depth circuits, assuming arbitrary two-qubit gates and all-to-all connectivity (see details in \appref{app:circuit}). In these three examples, we always assume $\mH = \mH' = {\rm span}\{\ket{0},\ket{1}\}$ are qubit systems and the  quantum measurement is $\{M_i\} = \{M_0,M_1\}$ where $M_0 = (1-m)\ket{0}\bra{0} + m\ket{1}\bra{1}$ ($0 < m < 1/2$) and $M_1 = \id - M_0$.

\begin{figure}[tbp]
    \centering
    \includegraphics[width=0.49\textwidth]{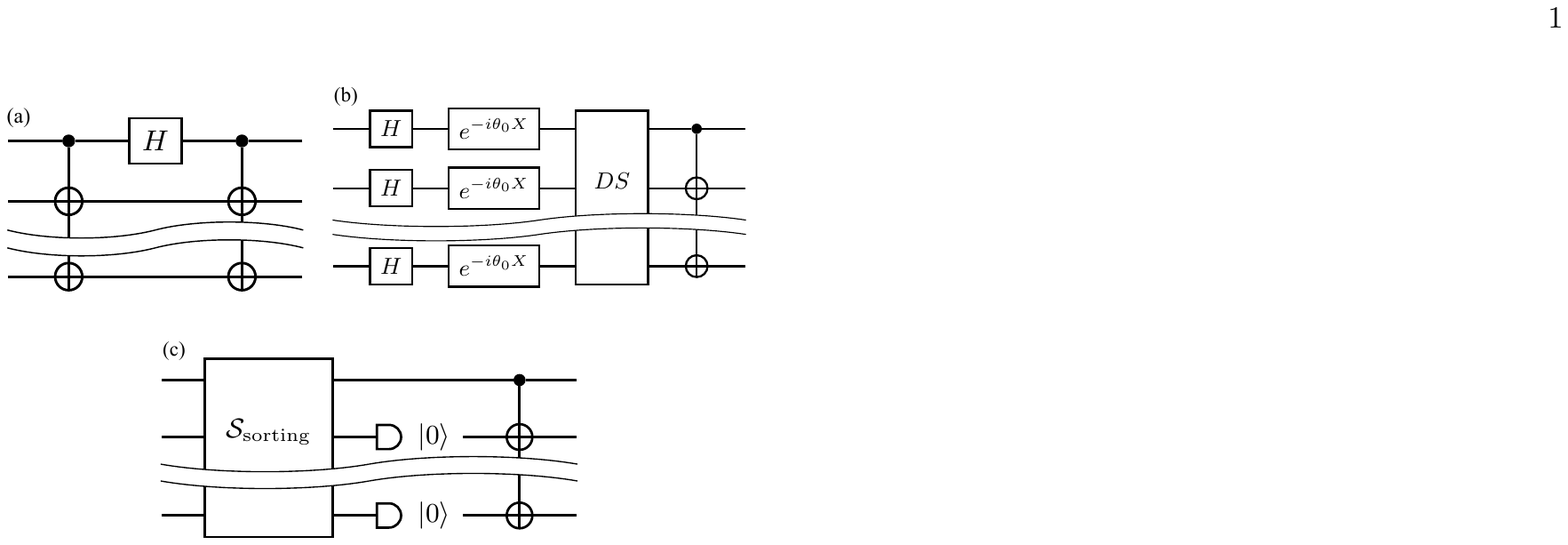}
\caption{Examples of preprocessing circuits ($U^{\textsc{G}}$ or $\mE^{\textsc{G}}$). 
(a) Phase sensing using GHZ states. The optimal circuit is composed of two C(NOT)$^{n-1}$ gates, which, conditioning on the first qubit being $\ket{1}$, performs $X^{\otimes {n-1}}$ ($X$ is the Pauli-X gate) on the remaining $n-1$ qubits, and a Hadamard gate $H$ acting on the first qubit between them. The circuit depth is $O(\log n)$. (b) Phase sensing using product states. The optimal circuit is composed of a global $H$ gate and a global Pauli-X rotation, a desymmetrization gate $DS$ that maps W state $\frac{1}{\sqrt{n}}(\ket{10\cdots 0} + \ket{010\cdots 0} + \cdots + \ket{0\cdots 01})$ to $\ket{10\cdots 0}$ and $\ket{0\cdots 0}$ to $\ket{0\cdots 0}$, and a C(NOT)$^{n-1}$ gate. The circuit depth is $O(\log n)$. (c) Phase sensing using classically mixed states. $\mS_{\rm sorting}$ is a sorting channel with circuit depth $O(\log^2 n)$, that uses $O(n\log^2 n)$ ancillary qubits and outputs one qubit in state \eqref{eq:classical-first-qubit}. It first sorts the bit-string and then swaps the first qubit with the $\lfloor n \sin^2\theta_0 \rfloor$-th qubit. (The D-shape detectors mean the qubits are completely discarded.)}
    \label{fig:circuit}
\end{figure}

In the first two examples, our preprocessing circuits manage to achieve a FI that is asymptotically equal to the QFI for any noise rate $m$. In the third example, our preprocessing circuit achieves a FI that is asymptotically equal to $2/\pi$ of the QFI, which still beats local controls in the noise regime $m \geq 0.1011$. The guideline to design these circuits is to convert the quantum state to a two-level state in ${\rm span}\{\ket{0}^{\otimes n},\ket{1}^{\otimes n}\}$ whose probability (or amplitude) distribution encodes $\theta$. Then a \emph{majority voting post-processing method} can be used to estimate $\theta$ with a vanishingly small measurement error. Specifically, in the majority voting post-processing method, we partition the measurement outcomes from measuring the two-level state using $\{M_0,M_1\}^{\otimes n}$, which are represented by $n$-bit strings in $\{0,1\}^n$, into two sets depending on whether the Hamming weight of the string is larger than $\lfloor n/2\rfloor$. The FI of this binary probability distribution achieves the desired value asymptotically.

The first example is phase sensing using GHZ states~\cite{giovannetti2006quantum}, where
\begin{equation}
\label{eq:example-GHZ}
    \ket{\psi_\theta^{(n)}} = \frac{e^{i n \theta}\ket{0}^{\otimes n} + e^{-i n \theta}\ket{1}^{\otimes n}}{\sqrt{2}}, 
\end{equation}
and an optimal preprocessing circuit $U^{\textsc{G}}$ that achieves 
\begin{equation}
F(U^{\textsc{G}}\psi_\theta^{(n)}(U^{\textsc{G}})^\dagger,\{M_i\}^{\otimes n}) \xrightarrow{n\rightarrow \infty} J(\psi_\theta^{(n)}) = 4n^2    
\end{equation}
is shown in \figaref{fig:circuit}, mapping  $\ket{\psi_\theta^{(n)}}$ to 
\begin{equation}
    \cos(n\theta)\ket{0}^{\otimes n} + i\sin(n\theta)\ket{1}^{\otimes n}. 
\end{equation}
The majority voting post-processing method gives an optimal estimator of $\theta$. 

The second example is phase sensing using product pure states (usually known as Ramsey interferometry~\cite{ramsey1950molecular}), where 
\begin{equation}
\label{eq:example-product-pure}
    \ket{\psi_\theta^{(n)}} = \left(\frac{e^{i \theta}\ket{0} + e^{-i \theta}\ket{1}}{\sqrt{2}}\right)^{\otimes n}, 
\end{equation}
and an optimal preprocessing circuit of depth $O(\log n)$ that achieves 
\begin{equation}
F(U^{\textsc{G}}\psi_\theta^{(n)}(U^{\textsc{G}})^\dagger,\{M_i\}^{\otimes n}) \xrightarrow{n\rightarrow \infty}  J(\psi_\theta^{(n)}) = 4n    
\end{equation}
is shown in \figbref{fig:circuit}. Here we assume $\theta_0$ is a rough estimate of $\theta$ such that $|\theta - \theta_0| \ll 1/\sqrt{n}$. The first step is to implement global Hadamard gates and Pauli-X rotations such that $\ket{\psi_\theta^{(n)}}$ is mapped to $(e^{-i\theta_0 X})^{\otimes n} H^{\otimes n} \ket{\psi_\theta^{(n)}} = \left(\cos(\theta-\theta_0)\ket{0} + i\sin(\theta-\theta_0)\ket{1}\right)^{\otimes n}$. The second step is to apply a desymmetrization gate $DS$ and a C(NOT)$^{n-1}$ gate such that the state is approximately mapped to 
\begin{equation}
\cos(\sqrt{n}(\theta-\theta_0))\ket{0}^{\otimes n} + i\sin(\sqrt{n}(\theta-\theta_0))\ket{1}^{\otimes n}, 
\end{equation}
with an error $O(n(\theta-\theta_0)^2)$. The majority voting post-processing method gives an optimal estimator of $\theta$.  

The third example is phase sensing using classically mixed states (which can be seen as \eqref{eq:example-product-pure} after global Hadamard gates and completely dephasing noise), where 
\begin{equation}
\label{eq:example-product-mixed}
    \rho_\theta^{(n)} = \left(\cos^2\theta \ket{0}\bra{0} + \sin^2\theta \ket{1}\bra{1}\right)^{\otimes n}. 
\end{equation}
We assume $\theta \in (0+\varepsilon,\pi/4-\varepsilon)$ for some constant $\varepsilon > 0$. 
We show a preprocessing channel $\mE^{\textsc{G}}$ (in \figcref{fig:circuit}) of circuit depth $O(\log^2 n)$ that achieves 
\begin{equation}
    F(\mE^{\textsc{G}}(\rho_\theta^{(n)}),\{M_i\}^{\otimes n}) \xrightarrow{n\rightarrow \infty}  \frac{8}{\pi}n = \frac{2}{\pi}J(\rho_\theta^{(n)}). 
\end{equation}
After a sorting channel $\mS_{\rm sorting}$ and discarding all qubits except the first qubit, the first qubit is in state 
\begin{equation}
\label{eq:classical-first-qubit}
    p_\theta\ket{0}\bra{0} + (1-p_\theta) \ket{1}\bra{1}, 
\end{equation}
where $p_\theta$ is the probability that after flipping $n$ coins whose probability of getting heads are $\sin^2\theta$, the number of heads are smaller than or equal to $\lfloor n \sin^2\theta_0 \rfloor$ ($\theta_0$ is a rough estimate of $\theta$ satisfying $\abs{\theta - \theta_0} \ll 1/\sqrt{n}$). A FI asymptotically equal to $\frac{8}{\pi}n$ can then be achieved using a C(NOT)$^{n-1}$ gate with $n-1$ ancillas initialized in $\ket{0}^{\otimes n-1}$ and the majority voting post-processing method. 

Note that $\rho_\theta^{(n)}$ is a symmetric state. According to the discussion in \secref{sec:discussion-asymptotic}, the QPFI should be asymptotically equal to the QFI, but whether there exists an efficient implementation of the optimal preprocessing circuits is unknown. 
Here we demonstrate the advantage of global controls by providing an efficient but suboptimal circuit in \figcref{fig:circuit}.
The first part ($\mS_{\rm sorting}$) of our circuit can be viewed as the optimal quantum-classical channel $\mT_n$ in \thmref{thm:metrology-capacity}. The second part that encodes one qubit into $n$ qubits is, however, suboptimal. (In order to faithfully transmit all probability distribution information, the encoding channel in the second part needs to encode $\log(n+1)$ qubits into $n$ qubits.)

Nonetheless, our circuit in \figcref{fig:circuit} is superior to any local preprocessing controls in the noise regime where 
\begin{equation}
m \geq \frac{1}{2}-\frac{1}{\sqrt{2\pi}} \approx 0.101.       
\end{equation}
This can be proven noting that the optimal FI achievable using arbitrary local channels satisfies
\begin{multline}
\max_{\mE = \mE_1\otimes \mE_2 \otimes\cdots \otimes \mE_n}F(\mE(\rho_\theta^{(n)}),\{M_i\}^{\otimes n}) \\ = F^\bP(\rho_\theta^{(1)},\{M_i\}) n \leq (1-2m)^2 n,  
\end{multline}
from \eqref{eq:qubit-binary-classical}. Thus it is always smaller than $\frac{8}{\pi}n$ when $m \geq \frac{1}{2}-\frac{1}{\sqrt{2\pi}}$. Specially, when $m \rightarrow \frac{1}{2}$, the linear constant of the locally optimized FI is vanishingly small, while the supoptimal global one is still above a positive number. 

\section{Conclusions and outlook}
\label{sec:conclusion}

We conducted a systematic study of the preprocessing optimization problem for noisy quantum measurements in quantum metrology. The QPFI (i.e., the FI of noisy measurement statistics optimized over all preprocessing quantum channels), that we defined and investigated in depth, sets an ultimate precision bound for noisy measurement of quantum states. Our results provide, in many cases, both numerically and analytically, approaches to identifying the optimal preprocessing controls that will be of great importance in alleviating the effect of measurement noise in quantum sensing experiments. 

We also considered, specifically, the asymptotic limit of the QPFI in multi-probe systems with individual measurement on each probe. We proved the convergence of the QPFI to the QFI when there exists an (asymptotically) QFI-attainable measurement with a sufficiently small number of measurement outcomes, by establishing a  connection to the classical channel capacity theorem. It would be interesting to explore, in future works, if the number of outcomes for QFI-attainable measurements can be easily bounded given a quantum state.

Although we've discussed only two types of quantum preprocessing controls, CPTP maps and unitary maps, our biconvex formulation might be generalized to cover other more restricted types of quantum controls. We also narrowed the analytical forms of optimal controls for pure states and classically mixed states down to rotations onto the span of two basis states and coarse-graining channels, respectively, but it remains open whether a simple method exists to help us determine the exact operations.  

Finally, there are a few important directions to extend our results to, e.g., incorporating the state preparation optimization into the QPFI optimization problem, considering the preprocessing optimization in multi-parameter estimation where the incompatibility of optimal preprocessings for different parameters might become an issue, and finding optimal preprocessings for other information processing tasks beyond quantum metrology such as state tomography and discrimination.

~~~

~~~

\section*{Acknowledgments}

We thank Senrui Chen, Kun Fang, Jan Ko{\l}ody{\'n}ski, Yaodong Li, Zi-Wen Liu, John Preskill, Alex Retzker, Mark Wilde, and Tianci Zhou for helpful discussions. 
The authors acknowledge funding provided by the Institute for Quantum Information and Matter, an NSF Physics Frontiers Center (NSF Grant PHY-1733907). T.G. further acknowledges funding provided by the Quantum Science and Technology Scholarship of the Israel Council for Higher Education.

\bibliography{refs-measurement}{}

\onecolumngrid
\newpage
\appendix

\setcounter{theorem}{0}
\setcounter{proposition}{0}
\setcounter{lemma}{0}
\setcounter{figure}{0}
\renewcommand{\thefigure}{S\arabic{figure}}
\renewcommand{\thelemma}{S\arabic{lemma}}
\renewcommand{\thetheorem}{S\arabic{theorem}}
\renewcommand{\thecorollary}{S\arabic{corollary}}
\renewcommand{\theproposition}{S\arabic{proposition}}
\renewcommand{\theHfigure}{Supplement.\arabic{figure}}
\renewcommand{\theHlemma}{Supplement.\arabic{lemma}}
\renewcommand{\theHtheorem}{Supplement.\arabic{theorem}}
\renewcommand{\theHcorollary}{Supplement.\arabic{corollary}}

\section{QFI-attainable measurements}
\label{app:QFI-attainable}

In this appendix, we provide several simple facts about the QFI-attainable measurements for quantum states, that will be useful in the main text. 

\begin{enumerate}[(1)]
    \item \emph{Necessary and sufficient condition~\cite{braunstein1994statistical}.} Given a quantum state $\rho_\theta$, $\{M_i\}$ attains the QFI if and only if 
    \begin{equation}
    \label{eq:QFI-attainable-condition}
        \forall i=1,\ldots,r,\;\exists c_i \in \bR,\text{  s.t.  } M_i^{1/2} \rho_\theta^{1/2} = c_i M_i^{1/2} L_\theta \rho_\theta^{1/2},
    \end{equation}
    where $L_\theta$ is the symmetric logarithmic derivative (SLD) operator which is a Herminian operator defined by $\partial_\theta \rho_\theta = \frac{1}{2}(L_\theta \rho_\theta +  \rho_\theta L_\theta)$ and the QFI $J(\rho_\theta) = \trace(\rho_\theta L_\theta^2)$. (Note that although the definition of $L_\theta$ is not unique when $\rho_\theta$ is not full rank, one can chose an arbitrary one and the QFI and the necessary and sufficient condition will be invariant.) Clearly, the necessary and sufficient condition \eqref{eq:QFI-attainable-condition} is satisfied when the measurement is chosen as the rank-one projection onto eigenstates of $L_\theta$. 
    \item  \emph{Low-rank states.} If $\rho_\theta$ and $\partial_\theta\rho_\theta$ are supported on a $D'$-dimensional subspace of $\mH_{S}$ ($D' \leq D$) (which should be true when $\rho_\theta$ is supported on a $\lfloor D'/2 \rfloor$-dimensional subspace), there exists a SLD $L_\theta$ supported on the $D'$-dimensional subspace. Moreover, let $L_\theta = \sum_{i=1}^{D'} \ell_i \ket{i}\bra{i}$ where $\ell_i > 0$ and $\{\ket{i}\}$ is an orthonormal basis for the $D'$-dimensional subspace, it can be verified that $\{M_i\}_{i=1}^{D'}$ (that has $D'$ measurement outcomes) is QFI-attainable when $M_i = \ket{i}\bra{i}$ for $i=1,\ldots,D'-1$ and $M_{D'} = \id - \sum_{i=1}^{D'} M_i$. That means, a state of a subexponential rank must have a QFI-attainable measurement with a subexponential number of measurement outcomes. In particular, for pure states, binary measurements are sufficient to attain the QFI.
    \item \emph{Classically mixed states.} For classically mixed states $\rho_\theta = \sum_{i=1}^D \lambda_i \ket{i}\bra{i}$, the SLD is also diagonal in the basis $\{\ket{i}\}$. It implies that the rank-one projection onto the basis is QFI-attainable for classically mixed states. 
    \item \emph{Block-diagonal states.} More generally, consider block-diagonal states $\rho_\theta = \bigoplus p_{\nu,\theta} \rho_{\nu,\theta}$, where $\rho_{\nu,\theta}$ are supported on different orthogonal subspaces. The SLD can also be block-diagonal, implying that there exists a QFI-attainable measurement of the form $\{\bigoplus_\nu (T_i)_\nu\}$, whose number of measurement outcomes is at most the rank of $\rho_\theta$, where $\{(T_i)_\nu\}$ is a quantum measurement for each fixed $\nu$. 
\end{enumerate}

\section{Mathematical properties of the QPFI}
\label{app:property}

Here we briefly discuss the mathematical properties of the QPFI. 
We will always assume quantum measurements are non-trivial in this appendix (i.e., $\exists M_i \not\propto \id$ for all $\{M_i\}$) . 

\begin{enumerate}[(1)]
    \item \emph{Faithfulness.} The QPFI is faithful, that means, \begin{equation}
        F^\bP(\rho_\theta,\{M_i\}) \geq 0,
    \end{equation}
    and the equality holds if and only if $\partial_\theta \rho_\theta = 0$. The non-negativity follows from the faithfulness of the classical FI. In order to see when $\partial_\theta \rho_\theta \neq 0$, $F^{\bP}(\rho_\theta,\{M_i\}) > 0$, we use \lemmaref{lemma:lower} in \secref{sec:general}. Assume $\partial_\theta \rho_\theta \neq 0$.  $F^{\bP}(\rho_\theta,\{M_i\}) > F^{\bP}(\mT(\rho_\theta),\{M_i\})$, where $\mT(\rho_\theta)$ is a classically mixed state that satisfies $J(\mT(\rho_\theta)) = J(\rho_\theta) > 0$ (from the faithfulness of the QFI). Consider the following simplification of measurements: dividing $\{M_i\}$ into two subsets and restricting them in a two-dimensional subspace such that the measurement becomes a non-trivial binary measurement on a qubit $\{\widetilde{M},\id-\widetilde{M}\}$. Then we see that $F^{\bP}(\mT(\rho_\theta),\{M_i\}) \geq F^{\bP}(\mT(\rho_\theta),\{\widetilde{M},\id-\widetilde{M}\}) > 0$ from the exact expression of the QPFI in \eqref{eq:classical-binary}. 
    \item \emph{QFI as an upper bound. } The QPFI is always no larger than the QFI. Note that 
    \begin{equation}
    F(\mE(\rho_\theta),\{M_i\}) = F(\rho_\theta,\{\mE^\dagger(M_i)\}) \leq J(\rho_\theta),    
    \end{equation} 
    where $\mE^\dagger$ is the dual map of $\mE$ and the second inequality follows from the definition of the QFI and the fact that $\{\mE^\dagger(M_i)\}$ is still a POVM when $\mE$ is a CPTP map. Taking the supremum over $\mE$ on both sides in the above inequality, we've proved $F^\bP(\rho_\theta,\{M_i\}) \leq J(\rho_\theta)$.
    \item \emph{Monotonicity.} The QPFI is monotonic, i.e., $F^\bP(\rho_\theta,\{M_i\}) \geq F^\bP(\mE(\rho_\theta),\{M_i\})$ for any CPTP map $\mE$ by definition. Or equivalently, $F^\bP(\rho_\theta,\{M_i\}) \geq F^\bP(\rho_\theta,\{\mE^\dagger(M_i)\})$ where $\mE^\dagger$ is the dual map of $\mE$.
    \item \emph{Convexity.} 
    The QPFI is convex in quantum states when $\{M_i\}$ is fixed. That is, for $p \in (0,1)$ independent of $\theta$, 
    \begin{align}
        F^\bP(p\rho_{\theta}+(1-p)\sigma_\theta,\{M_i\})
        &= \sup_{\mE} F(\mE(p\rho_{\theta}+(1-p)\sigma_\theta,\{M_i\}) \\
        &\leq \sup_{\mE} p F(\mE(\rho_{\theta}),\{M_i\})
        + (1-p) F(\mE(\sigma_\theta),\{M_i\}) \\
        &\leq p F^\bP(\rho_{\theta},\{M_i\}) + (1-p) F^\bP(\sigma_\theta,\{M_i\}), 
    \end{align}
    where we use the convexity of the classical FI in the second step. Similarly, we also have 
    \begin{equation}
        F^\bP(\rho_\theta,\{p M_i+ (1-p) M'_i\}) \leq  p F^\bP(\rho_\theta,\{M_i\}) + (1-p) F^\bP(\rho_\theta,\{M'_i\}).
    \end{equation}
    \item \emph{Additivity.} When quantum states under consideration are pure, the QPFI is additive because of \eqref{eq:gamma} and the additivity of the QFI, i.e., $F^\bP(\psi_\theta \otimes \psi_\theta',\{M_i\}) = F^\bP(\psi_\theta,\{M_i\}) + F^\bP(\psi_\theta',\{M_i\})$ when $\psi_\theta$ and $\psi_\theta'$ are pure. 
    
    For general mixed states, there is no general inequality relation between $F^\bP(\rho_\theta \otimes \sigma_\theta,\{M_i\})$ and $F^\bP(\rho_\theta,\{M_i\}) + F^\bP(\sigma_\theta,\{M_i\})$. Consider the simple example where one (or two identical) qubit state $\rho_\theta = \cos^2\theta \ket{1}\bra{1} + \sin^2\theta \ket{2}\bra{2}$ ($\theta \in (0,\pi/2)$), is measured using the binary measurement on a qubit: $M_1 = (1-m)\ket{1}\bra{1} + m\ket{2}\bra{2}$ and $M_2 = \id - M_1$ ($0 < m < 1/2$). We have, from \eqref{eq:classical-binary}, that 
    \begin{equation}
    F^\bP(\rho_\theta,\{M_i\}) = \frac{4(1-2m)^2\sin(2\theta)^2}{1-(1-2m)\cos(2\theta)^2}, \end{equation}
    and 
    \begin{equation}
    \begin{split}
    F^\bP(\rho_\theta\otimes\rho_\theta,\{M_i\}) &= \max\Big\{\frac{16(1-2m)^2\cos(\theta)^6\sin(\theta)^2}{(1-m)m+(1-2m)^2(\cos\theta)^4(1-(\cos\theta)^4)},\\ &\qquad\qquad \frac{16(1-2m)^2\sin(\theta)^2\cos(\theta)^6}{(1-m)m+(1-2m)^2(\sin\theta)^4(1-(\sin\theta)^4)}\Big\}. 
    \end{split}
    \end{equation}
    Consider the limit $m \rightarrow 0$, we have $  F^\bP(\rho_\theta\otimes\rho_\theta,\{M_i\}) < 2 F^\bP(\rho_\theta,\{M_i\})$ (which is expected because $\{M_i\}$ when $m=0$ is QFI-attainable for $\rho_\theta$ and the QFI is additive). On the other hand, one can immediately find cases where $  F^\bP(\rho_\theta\otimes\rho_\theta,\{M_i\}) > 2 F^\bP(\rho_\theta,\{M_i\})$, e.g., when $m = 0.1$ and $\theta = \pi/8$. For any fixed $m > 0$, there is a threshold of $\theta$ above which the sign of $F^\bP(\rho_\theta\otimes\rho_\theta,\{M_i\}) - 2 F^\bP(\rho_\theta,\{M_i\})$ changes from positive to negative. 

    Finally, we can consider multiple states under multiple measurements. We have, by definition, $F^\bP(\rho_\theta\otimes\sigma_\theta,\{M_i\}\otimes\{M_i'\}) \geq F^\bP(\rho_\theta,\{M_i\})+F^\bP(\sigma_\theta,\{M_i'\})$ and the inequality can be strict  (see the convergence to the QFI in the asymptotic limit in \secref{sec:asymptotic}).  
    
\end{enumerate}

\section{Attainability of the QPFI}
\label{app:singular}

Here we prove several results that are related to the attainability of the QPFI and the QUPFI. 

We first show the existence of the optimal controls (or unitaries) for generic noisy measurement that has non-zero noise in all subspaces. 
\begin{lemma}
\label{lemma:noisy}
For arbitrary quantum states $\rho_\theta$ and noisy measurements $\{M_i\}$
 such that $\min_i \lambda_{\min} (M_i) > 0$, the supremums in \eqref{eq:FI-A} and \eqref{eq:FI-U} are attainable. Here $\lambda_{\min}(\cdot)$ represents the minimum eigenvalue of an operator. 
\end{lemma}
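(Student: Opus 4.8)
The key observation that makes attainability work is that the hypothesis $\min_i \lambda_{\min}(M_i) > 0$ eliminates the only source of non-compactness in the problem---the singularity of the Fisher information where a denominator $\trace(\mE(\rho_\theta)M_i)$ vanishes. The plan is therefore threefold: (i) use the spectral lower bound on the $M_i$ to bound every denominator uniformly away from zero; (ii) observe that the objective is then a continuous function of the Choi matrix (for the QPFI) or of the unitary (for the QUPFI); and (iii) invoke the Weierstrass extreme value theorem on the relevant compact domain.

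First I would set $c := \min_i \lambda_{\min}(M_i) > 0$, so that $M_i \geq c\,\id$ for every $i$. Since $\mE(\rho_\theta)$ is a density operator for any CPTP map $\mE$, it follows that
\begin{equation}
p_{i,\theta} := \trace(\mE(\rho_\theta)M_i) \geq c\,\trace(\mE(\rho_\theta)) = c > 0
\end{equation}
uniformly over all channels $\mE$ and all outcomes $i$. This is the crucial uniform bound: every denominator appearing in $F(\mE(\rho_\theta),\{M_i\})$ is bounded below by the single positive constant $c$, so the Fisher information never approaches its singular regime.

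Next I would represent each channel by its Choi matrix $\Omega$, ranging over $\mathcal{K} = \{\Omega \geq 0 : \trace_{S'}(\Omega) = \id_S\}$, which is compact: it is closed (the intersection of the PSD cone with an affine constraint) and bounded (since $\trace(\Omega) = \trace(\id_S) = \dim(\mH_S)$ is fixed for $\Omega \geq 0$). Using $\mE(\sigma) = \trace_S((\id\otimes\sigma^T)\Omega)$ and $\partial_\theta\mE(\rho_\theta) = \mE(\partial_\theta\rho_\theta)$, both the numerators $\trace((M_i\otimes\partial_\theta\rho_\theta^T)\Omega)$ and the denominators $\trace((M_i\otimes\rho_\theta^T)\Omega)$ are linear, hence continuous, in $\Omega$; combined with the uniform bound, each of the finitely many summands in \eqref{eq:FI} is continuous, so $\Omega \mapsto F(\mE(\rho_\theta),\{M_i\})$ is continuous on $\mathcal{K}$. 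By the extreme value theorem the supremum in \eqref{eq:FI-A} is attained. The QUPFI case is identical: the unitary group is compact, and $U \mapsto F(U\rho_\theta U^\dagger,\{M_i\})$ is continuous with the same denominator bound $\geq c$, so the supremum in \eqref{eq:FI-U} is attained as well.

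I do not anticipate a serious obstacle here; the only point genuinely requiring care is that the denominator bound hold \emph{uniformly} over the entire (infinite) family of channels, which is exactly what trace preservation of $\mE$ together with $M_i \geq c\,\id$ guarantees. The remaining ingredients---compactness of $\mathcal{K}$ and inheritance of continuity by a finite sum of quotients with bounded-below denominators---are standard in finite dimension and I would treat them briefly.
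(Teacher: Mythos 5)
Your proof is correct and follows essentially the same route as the paper's: both arguments rest on compactness of the set of channels (respectively, unitaries) together with continuity of the Fisher information, which is guaranteed precisely because $\min_i \lambda_{\min}(M_i) > 0$ bounds every denominator $\trace(\mE(\rho_\theta)M_i)$ uniformly below. The paper phrases this as a maximizing-sequence/limit-point argument while you invoke the extreme value theorem on the compact set of Choi matrices, but these are the same argument in different packaging.
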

\begin{proof}
By definition, there exists a sequence of quantum channels $(\mE_1,\cdots,\mE_n,\cdots)$ such that 
\begin{equation}
\label{eq:eta-1}
    F(\mE_n(\rho_\theta),\{M_i\}) \geq F^{\bP}(\rho_\theta,\{M_i\}) - \eta_n,
\end{equation}
where $\lim_{n\rightarrow \infty} \eta_n = 0$. Since the set of quantum channels is bounded and closed, there exists a limiting point of $(\mE_1,\cdots,\mE_n,\cdots)$ that we denote by $\mE$. Without loss of generality, we assume the sequence converges and $\mE = \lim_{n\rightarrow \infty} \mE_n$. And 
\begin{align}
    \lim_{n\rightarrow \infty} F(\mE_n(\rho_\theta),\{M_i\}) 
    &= 
    \lim_{n\rightarrow \infty}  \sum_{i} \frac{(\trace(\mE_n(\partial_\theta \rho_\theta) M_i))^2}{\trace(\mE_n(\rho_\theta) M_i)} \\
    &= 
    \sum_{i} \frac{\lim_{n\rightarrow \infty} (\trace(\mE_n(\partial_\theta \rho_\theta) M_i))^2}{\lim_{n\rightarrow \infty} \trace(\mE_n(\rho_\theta) M_i)} \\
    &= \sum_{i} \frac{(\trace(\mE(\partial_\theta \rho_\theta) M_i))^2}{\trace(\mE(\rho_\theta) M_i)} = F(\mE(\rho_\theta),\{M_i\}),
\end{align}
where we use $\trace(\mE_n(\rho_\theta) M_i) > \min_i \lambda_{\min} (M_i) > 0$ for all $n$. Then we must have $F(\mE(\rho_\theta),\{M_i\}) \geq F^\bP(\rho_\theta,\{M_i\})$ using \eqref{eq:eta-1}. Since  $F(\mE(\rho_\theta),\{M_i\}) \leq F^\bP(\rho_\theta,\{M_i\})$ by definition, we have 
\begin{equation}
    F(\mE(\rho_\theta),\{M_i\}) = F^\bP(\rho_\theta,\{M_i\}),
\end{equation}
proving the existence of the optimal channels. The existence of the optimal unitaries can also be proven analogously. 
\end{proof}

As a corollary of \lemmaref{lemma:noisy}, we show that for any measurement whose QPFI (or QUPFI) may not be attainable, there always exists a measurement in its neighborhood such that its QPFI (or QUPFI) is attainable and close to that of the original measurement. 

\begin{lemma}
\label{lemma:dense}
For any quantum state $\rho_\theta$, quantum measurement $\{M_i\}$ and $\eta > 0$, there always exists $\{M_i^{(\epsilon)}\}$ and a constant $c > 0$ such that the corresponding QPFI and the QUPFI are attainable, and when $\epsilon < c$, 
\begin{gather}
\label{eq:eta-2}
    F^\bP(\rho_\theta,\{M_i^{(\epsilon)}\})  \geq F^\bP(\rho_\theta,\{M_i\}) - \eta,\\
\label{eq:eta-3}
    F^\bU(\rho_\theta,\{M_i^{(\epsilon)}\})  \geq F^\bU(\rho_\theta,\{M_i\}) - \eta.
\end{gather} 
\end{lemma}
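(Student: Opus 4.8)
The plan is to reuse the smeared POVM from \thmref{thm:attainability}, namely $M_i^{(\epsilon)} = (1-\epsilon)M_i + \epsilon\,\trace(M_i)\frac{\id}{d}$ with $d = \dim(\mH_{S'})$, and to establish the two inequalities by a continuity argument in $\epsilon$. First I would dispose of the attainability claim: after discarding any $M_i = 0$ (which never contribute to the FI), every remaining element satisfies $\lambda_{\min}(M_i^{(\epsilon)}) \geq \epsilon\,\trace(M_i)/d > 0$, so $\min_i \lambda_{\min}(M_i^{(\epsilon)}) > 0$ and \lemmaref{lemma:noisy} immediately gives attainability of both $F^\bP(\rho_\theta,\{M_i^{(\epsilon)}\})$ and $F^\bU(\rho_\theta,\{M_i^{(\epsilon)}\})$ for every $\epsilon \in (0,1]$.

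For \eqref{eq:eta-2}, I would fix a near-optimal channel $\mE$ with $F(\mE(\rho_\theta),\{M_i\}) \geq F^\bP(\rho_\theta,\{M_i\}) - \eta/2$ (possible since $F^\bP \leq J(\rho_\theta) < \infty$), and then show that $\lim_{\epsilon\to 0^+} F(\mE(\rho_\theta),\{M_i^{(\epsilon)}\}) = F(\mE(\rho_\theta),\{M_i\})$. Since $F^\bP(\rho_\theta,\{M_i^{(\epsilon)}\}) \geq F(\mE(\rho_\theta),\{M_i^{(\epsilon)}\})$ by definition, this limit together with the choice of $\mE$ produces a $c > 0$ such that \eqref{eq:eta-2} holds for $\epsilon < c$. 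The proof of \eqref{eq:eta-3} is identical upon replacing $\mE$ by a near-optimal unitary channel $U(\cdot)U^\dagger$ and using $F^\bU$ in place of $F^\bP$.

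The crux is the continuity statement $\lim_{\epsilon\to 0^+}F(\mE(\rho_\theta),\{M_i^{(\epsilon)}\}) = F(\mE(\rho_\theta),\{M_i\})$, where one must control the singular terms with vanishing denominator. The key observation is that whenever $\trace(\mE(\rho_\theta)M_i)=0$, the function $\theta \mapsto \trace(\mE(\rho_\theta)M_i)\geq 0$ attains a minimum at the working point, so its derivative $\trace(\mE(\partial_\theta\rho_\theta)M_i)$ vanishes there; moreover $\trace(\mE(\partial_\theta\rho_\theta)) = \trace(\partial_\theta\rho_\theta)=0$. Consequently the perturbed numerator $\trace(\mE(\partial_\theta\rho_\theta)M_i^{(\epsilon)})$ is \emph{exactly} zero for every such $i$, so these terms contribute $0$ to $F(\mE(\rho_\theta),\{M_i^{(\epsilon)}\})$ and cannot blow up despite the denominator $\epsilon\,\trace(M_i)/d$ being small. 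For the remaining terms the denominator stays bounded away from $0$ for small $\epsilon$, so each is continuous in $\epsilon$ and converges to its unperturbed value; summing gives the claimed limit.

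I expect this singular-term bookkeeping to be the main obstacle: one has to verify carefully that the working value of $\theta$ is genuinely a local minimum so the numerator vanishes, and that the identity-smearing part of $M_i^{(\epsilon)}$ contributes nothing to the numerator (via $\trace(\partial_\theta\rho_\theta)=0$), rather than merely being small. Everything else—the near-optimal selection of $\mE$ or $U$ and passing to the supremum that defines the QPFI and QUPFI—is routine.
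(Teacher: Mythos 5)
Your proposal is correct and takes essentially the same route as the paper: fix a near-optimal $\mE$ (or $U$), pass to the smeared POVM $M_i^{(\epsilon)}$, observe that $\trace(\mE(\partial_\theta\rho_\theta))=0$ makes each surviving numerator exactly $(1-\epsilon)^2(\trace(\mE(\partial_\theta\rho_\theta)M_i))^2$, and get attainability from \lemmaref{lemma:noisy}; the paper merely replaces your soft limit with a quantitative version, introducing $c'=\min_{i:\trace(\mE(\rho_\theta)M_i)\neq 0}\trace(\mE(\rho_\theta)M_i)/\trace(M_i)$ to bound the denominators and choosing an explicit threshold on $\epsilon$. One remark: your interior-minimum argument for the singular terms (showing their numerators vanish exactly) holds only when $\theta$ is an interior point of the parameter domain with two-sided differentiability, but it is also unnecessary for the lemma — those terms are non-negative, so for the one-sided inequality you need you may simply drop them, which is exactly what the paper does.
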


\begin{proof}
Assume $F^\bP(\rho_\theta,\{M_i\}) - \eta \geq F^\bU(\rho_\theta,\{M_i\}) - \eta > 0$. 
By definition, we can pick $\mE$ and $U$ such that 
\begin{gather}
    F(\mE(\rho_\theta),\{M_i\})  \geq F^\bP(\rho_\theta,\{M_i\}) - \eta/2,\\
    F(U\rho_\theta U^\dagger,\{M_i\})  \geq F^\bU(\rho_\theta,\{M_i\}) - \eta/2.
\end{gather}
Let 
\begin{equation}
    c' = \min\Big\{\min_{i:\trace(\mE(\rho_\theta) M_i) \neq 0} \frac{\trace(\mE(\rho_\theta) M_i)}{\trace(M_i)},\min_{i:\trace(U(\rho_\theta)U^\dagger M_i) \neq 0} \frac{\trace(U(\rho_\theta)U^\dagger M_i)}{\trace(M_i)}\Big\}, 
\end{equation}
define (assuming $d = \dim(\mH_{S'})$)
\begin{equation}
    M_i^{(\epsilon)} = (1-\epsilon) M_i + \epsilon \trace(M_i) \frac{\id}{d}, 
\end{equation}
and assume  $\epsilon$ is small enough such that 
\begin{equation}
    1-\epsilon-\frac{\epsilon}{dc'}(1-\epsilon) > \frac{F^\bU(\rho_\theta,\{M_i\}) - \eta}{F^\bU(\rho_\theta,\{M_i\}) - \eta/2}. 
\end{equation}
Note that a similar construction of $\rho^{(\epsilon)}$, instead of $M_i^{(\epsilon)}$, was used in \cite{vsafranek2018simple} to remove singularity of the QFI. 
Using \lemmaref{lemma:noisy}, it is clear that the QPFI and the QUPFI for $\{M_i^{(\epsilon)}\}$ are attainable. 
Furthermore, we have 
\begin{align}
     F^\bP(\rho_\theta,\{M_i^{(\epsilon)}\})
     &\geq \sum_{i:\trace(\mE(\rho_\theta) M_i^{(\epsilon)})\neq 0} \frac{(\trace(\mE(\partial_\theta \rho_\theta) M_i^{(\epsilon)}))^2}{\trace(\mE(\rho_\theta) M_i^{(\epsilon)})}  \geq 
     \sum_{i:\trace(\mE(\rho_\theta) M_i)\neq 0} \frac{(1-\epsilon)^2(\trace(\mE(\partial_\theta \rho_\theta) M_i))^2}{(1-\epsilon)\trace(\mE(\rho_\theta) M_i) +  \trace(M_i) \epsilon/r} \allowdisplaybreaks\\
     &\geq  \sum_{i:\trace(\mE(\rho_\theta) M_i)\neq 0} \frac{(\trace(\mE(\partial_\theta \rho_\theta) M_i))^2}{\trace(\mE(\rho_\theta) M_i)} \left(1-\epsilon-\frac{\epsilon}{dc'}(1-\epsilon)\right) \allowdisplaybreaks\\
     &\geq  F(\mE(\rho_\theta),\{M_i\}) \frac{F^\bU(\rho_\theta,\{M_i\}) - \eta}{F^\bU(\rho_\theta,\{M_i\}) - \eta/2} \geq F^\bP(\rho_\theta,\{M_i\}) - \eta, 
\end{align}
proving \eqref{eq:eta-2}. \eqref{eq:eta-3} is also true, similarly.
When $F^\bP(\rho_\theta,\{M_i\}) - \eta \leq 0$ or  $F^\bU(\rho_\theta,\{M_i\}) - \eta \leq 0$, the results also follow trivially. 
\end{proof}

Finally, we are ready to provide a proof of \thmref{thm:attainability}, which shows a way to calculate the QPFI by considering the limit of the QPFI for a set of generic noisy measurements in its neighborhood. (Note that the theorem stated below also holds for the QUPFI.)

\thmattain*

\begin{proof}

For any $\eta > 0$, following \lemmaref{lemma:dense}, we have 
\begin{equation}
    F^\bP(\rho_\theta,\{M_i\}) \leq F^\bP(\rho_\theta,\{M_i^{(\epsilon)}\}) + \eta, 
\end{equation}
when $\epsilon$ is small enough, where $M_i^{(\epsilon)} = (1-\epsilon) M_i + \epsilon \trace(M_i)\frac{\id}{d}$. Take the limit $\epsilon\rightarrow 0^+$ on both sides, we have $    F^\bP(\rho_\theta,\{M_i\}) \leq \liminf_{\epsilon \rightarrow 0^+} F^\bP(\rho_\theta,\{M_i^{(\epsilon)}\}) + \eta$ for any $\eta > 0$, implying 
\begin{equation}
\label{eq:eta-4}
    F^\bP(\rho_\theta,\{M_i\}) \leq \liminf_{\epsilon \rightarrow 0^+} F^\bP(\rho_\theta,\{M_i^{(\epsilon)}\}). 
\end{equation} 

On the other hand, 
consider a quantum channel for $0 < \epsilon < 1$, 
\begin{equation}
    \mE_{\epsilon}(\sigma) = (1-\epsilon)\sigma + \epsilon \trace(\sigma) \frac{\id}{d}, 
\end{equation}
Then we have $\trace(\mE_{\epsilon}(\sigma)M_i) = \trace(\sigma \mE_{\epsilon}^\dagger(M_i)) = \trace(\sigma M_i^{(\epsilon)})$ for any $\sigma$. By definition, we have 
\begin{equation}
    F^\bP(\rho_\theta,\{M_i\}) = \sup_\mE F(\mE(\rho_\theta),\{M_i\}) \geq \sup_\mE F(\mE_\epsilon(\mE(\rho_\theta)),\{M_i\}) = F^\bP(\rho_\theta,\{\mE_\epsilon^\dagger(M_i)\}) = F^\bP(\rho_\theta,\{M_i^{(\epsilon)}\}). 
\end{equation}
Take the limit $\epsilon\rightarrow 0^+$ on both sides, we have 
\begin{equation}
\label{eq:eta-5}
F^\bP(\rho_\theta,\{M_i\}) \geq \limsup_{\epsilon\rightarrow 0^+}F^\bP(\rho_\theta,\{M_i^{(\epsilon)}\}). 
\end{equation}
The theorem then follows from \eqref{eq:eta-4} and \eqref{eq:eta-5}. 

\end{proof}

\section{Global optimization algorithm for biconvex optimization problems}
\label{app:biconvex}

In \secref{sec:opt}, we showed that the QPFI can be obtained from the following biconvex optimization problem (\eqref{eq:opt-x-Omega}): 
\begin{align}
    F^\bP(\rho_\theta,\{M_i\})^{-1} = \inf_{(\vx,\Omega)} \;& \trace((X_2\otimes\rho_\theta^T)\Omega),\\
    \text{s.t.~}\; & \Omega\geq 0, \nonumber\\
    & \trace_{S'}(\Omega) = \id_S,\; \trace((X\otimes\rho_\theta^T)\Omega) = 0, \; \trace((X\otimes\partial_\theta\rho_\theta^T)\Omega) = 1.  \nonumber
\end{align}
The constraints on $\Omega$ guarantee any feasible $\Omega$ is contained in a convex compact set $R_2$ (the absolute value of each entry of $\Omega$ should not be larger than $\dim(\mH_{S})$). We could also set a convex compact region $R_1$ on $\vx$, so that the following optimization problem generates the same optimal value as \eqref{eq:opt-x-Omega}. 
\begin{align}
\label{eq:opt-x-Omega-app}
    \min_{(\vx,\Omega)} \;& \trace((X_2\otimes\rho_\theta^T)\Omega),\\
    \text{s.t.~}\; & \Omega\geq 0, \nonumber\\
    & \trace_{S'}(\Omega) = \id_S,\; \trace((X\otimes\rho_\theta^T)\Omega) = 0, \; \trace((X\otimes\partial_\theta\rho_\theta^T)\Omega) = 1, \nonumber\\
    & \vx \in R_1, \;\Omega \in R_2. \nonumber
\end{align}
As discussed in \secref{sec:opt}, this is possible when the size of $R_1$ is suffciently large, in normal cases when the infimum in \eqref{eq:opt-x-Omega} is attainable. Otherwise, the optimal value of \eqref{eq:opt-x-Omega-app} can still approach that of \eqref{eq:opt-x-Omega} for sufficiently large size of $R_1$. 

Here we describe the global optimization algorithm~\cite{floudas2013deterministic} for \eqref{eq:opt-x-Omega-app} that is guaranteed to converge to the global optimum of \eqref{eq:opt-x-Omega-app} in finite steps. One may seek \cite{gorski2007biconvex} for a general survey on algorithms from biconvex optimization. 

We first rewrite \eqref{eq:opt-x-Omega-app} as 
\begin{align}
    \min_{(\vx,\Omega)\in R_1\times R_2} \;& f(\vx,\Omega),\\
    \text{s.t.~~~~~}\; & \Omega\geq 0, \quad \forall i,\,h_i(\vx,\Omega) = 0,\nonumber
\end{align}
where $f(\vx,\Omega)$ is the biconvex target function and $h_i(\vx,\Omega)$ are bi-affine functions. The global optimization algorithm finds the global optimum of \eqref{eq:opt-x-Omega-app} by solving a set of primal problems and relaxed dual problems which generate upper and lower bounds on the optimum respectively. The upper and lower bounds converge to the global optimum up to a small error in finite steps. The algorithm is described as follows. 
\begin{enumerate}[Step~1:]
    \item \textbf{Initialization}. 
    
    Define initial upper and lower bounds $(f^U,f^L)$ on the global optimum, where $f^U$ and $-f^L$ can be chosen as two very large numbers. Set the counter $K = 1$. Set a convergence tolerance parameter $\varepsilon$. Choose a starting point $\vx^1$. Define three empty sets $\frakK^{feas}$ (set of feasible problems), $\frakK^{infeas}$ (set of infeasible problems), $\frakS$ (set of candidates of lower bound).
    \item \textbf{Primal problem}. 
    \begin{enumerate}[(1),leftmargin=*]
        \item Consider the primal problem for $\vx = \vx^K$ if it is feasible (that is, if there exists some $\Omega \in R_2$ that satisfies the constraints): 
    \begin{align}
    \label{eq:primal}
    P(\vx^K) = \min_{\Omega \in R_2} \;& f(\vx^K,\Omega),\\
    \text{s.t.~}\; & \Omega\geq 0, \quad \forall i,\,h_i(\vx^K,\Omega) = 0.\nonumber
    \end{align}
    The strong duality theorem~\cite{boyd2004convex} indicates that $P(\vx^K)$ can be solved through
    \begin{equation}
    \label{eq:primal-dual}
        P(\vx^K) = \max_{\vy, Z \geq 0}\min_{\Omega \in R_2}  L(\vx^K,\Omega,\vy,Z),
    \end{equation}
    where the Lagrange function 
    \begin{equation}
    L(\vx,\Omega,\vy,Z) := f(\vx,\Omega) + \sum_i y_i h_i(\vx,\Omega) - \trace(\Omega Z),    
    \end{equation}
    $Z$ is a semidefinite positive matrix acting on $\mH_{S'} \otimes \mH_S$ and $\vy$ is a vector of real numbers. 
    
    Solve \eqref{eq:primal-dual} and store the optimal values $(\Omega^K,\vy^K,Z^K)$. Set $f^U = \min\{f^U,P(\vx^K)\}$ and $\frakK^{feas} = \frakK^{feas} \cup \{K\}$. 
    \item 
    If \eqref{eq:primal} is infeasible, solve the relaxed primal problem for $\vx = \vx^K$ instead: 
    \begin{align}
    \label{eq:relaxed-primal}
    \delta(\vx^K) = \min_{\Omega \in R_2,\alpha \geq 0} \;& \alpha,\\
    \text{s.t.~~~~~}\; & \Omega  + \alpha \id_{S'S} \geq 0, \quad \forall i,\,h_i(\vx^K,\Omega) = 0.\nonumber
    \end{align}
    The strong duality theorem implies 
    \begin{align}
    \label{eq:relaxed-primal-dual}
    \delta(\vx^K) &= \max_{\vy, Z \geq 0}\min_{\Omega \in R_2,\alpha \geq 0} \alpha + \sum_i y_i h_i(\vx,\Omega) - \trace((\Omega+\alpha\id_{S'S}) Z)\\
    &= \max_{\vy, Z \geq 0, \trace(Z) \leq 1}\min_{\Omega \in R_2} L_1(\vx^K,\Omega,\vy,Z),  \nonumber
    \end{align}
    where the Lagrange function $L_1(\vx,\Omega,\vy,Z) := \sum_i y_i h_i(\vx,\Omega) - \trace(\Omega Z)$. 
    
    Solve \eqref{eq:relaxed-primal-dual} and store the optimal values $(\Omega^K,\vy^K,Z^K)$. Let $\frakK^{infeas} = \frakK^{infeas} \cup \{K\}$. 
    \end{enumerate}
    
    \item \textbf{Determine the current region of $\vx$}. 
    
    Suppose $\Omega$ is parameterized by a vector of real numbers $ \Omega_i$. Since $\Omega$ is contained in a compact set, $\Omega_i$ has upper and lower bounds that we denote by $\Omega_i^U$ and $\Omega_i^L$. Consider the partial derivatives of the Lagrange functions defined by $g^k_i(\vx) := \frac{\partial}{\partial\Omega_i} L(\vx,\Omega,\vy^k,Z^k)|_{\Omega^k}$ for $k \in \frakK^{feas}$ and $g^k_i(\vx) := \frac{\partial}{\partial\Omega_i} L_1(\vx,\Omega,\vy^k,Z^k)|_{\Omega^k}$ for $k \in \frakK^{infeas}$. Define the set of indices for connected variables $I_k:= \{i|g^k_i(\vx)\text{ is a nontrivial function of }\vx\} = \{i|g^k_i(\vx) = 0,\forall \vx\}$ (the last equality follows from the KKT conditions~\cite{boyd2004convex}) and $\Omega_i$ is called a connected variable of the Lagrange functions if $i \in I_k$.  
    We can also define the linearized Lagrange functions $L(\vx,\Omega,\vy^k,Z^k)|_{\Omega^k}^{lin} := L(\vx,\Omega^k,\vy^k,Z^k) + \sum_{i\in I_k} g_i^k(\vx)(\Omega_i-\Omega_i^k)$ and $L_1(\vx,\Omega,\vy^k,Z^k)|_{\Omega^k}^{lin} := L_1(\vx,\Omega^k,\vy^k,Z^k) + \sum_{i\in I_k} g_i^k(\vx)(\Omega_i-\Omega_i^k)$. The linearized functions $L(\vx,\Omega,\vy^k,Z^k)|_{\Omega^k}^{lin}$ and $L_1(\vx,\Omega,\vy^k,Z^k)|_{\Omega^k}^{lin}$ are functions of the connected variables only and independent of $\Omega_i$ if $i \notin I_k$. 
    
    Let $\frakB^k := \cross_{i\in I_k}\{\Omega_i^{L},\Omega_i^{U}\}$ be the set of combinations of upper and lower bounds on $\Omega_i$ for all $i \in I_k$. We abuse the notation a bit and use $\Omega \in \frakB^k$ to denote the case where the part of connected variables $\Omega_{I_k}$ in $\Omega$ is contained in $\frakB^k$ and the other part is arbitrary. We will see that the other part is irrelevant in our calculations and can be ignored. In this sense, there are in total $2^{|I_k|}$ number of $\Omega \in \frakB^k$ which is finite. We also define $R(k,\Omega)$ to be a region of $\vx$ as a function of $\Omega \in \frakB^k$ defined by 
    \begin{equation}
    R(k,\Omega) := \{\vx|\forall i \in I_k \;g_i^k(\vx) \leq_{\Omega_i} 0\},
    \end{equation}
    where ``$\leq_{\Omega_i}$'' represents ``$\leq$'' if $\Omega_i = \Omega_i^U$, and ``$\geq$'' if $\Omega_i = \Omega_i^L$.
    
    Let $\frakB^{(k,K)} = \{\Omega\in\frakB^k|\vx^K \in R(k,\Omega)\}$. The relaxed dual problem in the next step will be solved in the region of $\vx$ that is contained in $\bigcap_{k=1}^{K-1} \bigcap_{\Omega \in \frakB^{(k,K)}}R(k,\Omega)$.

    \item \textbf{Relaxed dual problem}. 
    
    Determine the set of indices for connected variables $I_K$. Note that for any $k$, $L(\vx,\Omega,\vy^k,Z^k)|^{lin}_{\Omega^k}$ is a function of the connected variables only and is fixed if the connected variables $\Omega_{I_k}$ of $\Omega$ is fixed. Therefore we will also write $L(\vx,\Omega,\vy^k,Z^k)|^{lin}_{\Omega^k} = L(\vx,\Omega_{I_k},\vy^k,Z^k)|^{lin}_{\Omega^k}$. 
    
    For each $\Omega_\star \in \frakB^K = \cross_{i \in I_K}\{\Omega_i^{L},\Omega_i^{U}\}$ (there are $2^{|I_K|}$ different $\Omega_\star$ in total), solve the following relaxed dual problem: 
    \begin{align}
    \label{eq:relaxed-dual}
    \min_{\vx \in R_1,\mu} &\;  \mu\\
    \text{s.t.~~} &\,  
    \begin{rcases}
    & \mu \geq L(\vx,\Omega,\vy^k,Z^k)|^{lin}_{\Omega^k},\\
    & \vx \in R(k,\Omega),
    \end{rcases}
    \quad \forall \, \Omega \in \frakB^{(k,K)}, \; 1 \leq k \leq K-1, \;  k \in \frakK^{feas}, 
    \nonumber\\
    &\,  
    \begin{rcases}
    & 0 \geq L_1(\vx,\Omega,\vy^k,Z^k)|^{lin}_{\Omega^k},\\
    & \vx \in R(k,\Omega),
    \end{rcases}
    \quad \forall \, \Omega \in \frakB^{(k,K)}, \; 1 \leq k \leq K-1, \;  k \in \frakK^{infeas}, 
    \nonumber\\
    &\,  
    \begin{rcases}
    & \mu \geq L(\vx,\Omega_\star,\vy^K,Z^K)|^{lin}_{\Omega^K},\\
    & \vx \in R(k,\Omega_\star),
    \end{rcases}
    \quad K \in \frakK^{feas}, 
    \nonumber\\
    &\,  
    \begin{rcases}
    & 0 \geq L_1(\vx,\Omega_\star,\vy^K,Z^K)|^{lin}_{\Omega^K},\\
    & \vx \in R(k,\Omega_\star),
    \end{rcases}
    \quad K \in \frakK^{infeas}, \nonumber
    \end{align}
    For each $\Omega_\star$, store the solution $(\mu_\star,\vx_\star)$ of \eqref{eq:relaxed-dual}  in $\frakS$. 
    \item \textbf{Select a new lower bound and determine $\vx^{K+1}$}. 
    
    From the set $\frakS$, select the minimum $\mu^{\min}$ and the corresponding $\vx^{\min}$. Set $f^L = \mu^{\min}$ and $\vx^{K+1} = \vx^{\min}$. Delete $(\mu^{\min},\vx^{\min})$ from the set of candidates of lower bound $\frakS$.  
    
    \item \textbf{Check for convergence}. 
    
    Check if $f^L > f^U - \varepsilon$, if yes, STOP; otherwise, set $K = K+1$ and return to Step 2. 
\end{enumerate}

The global optimization algorithm described above works in a branch-and-bound way where $\vx$ is partitioned into different regions and different candidates of lower bounds of the global optimum are explored in each iteration. The subproblems that are solved in each iteration are semidefinite programs (\eqref{eq:primal-dual} and \eqref{eq:relaxed-primal-dual}) and quadratically constrained quadratic programs (\eqref{eq:relaxed-dual}) which can be solved efficiently (for a moderate system dimension) using algorithms for convex optimization~\cite{boyd2004convex}. The running time of the entire algorithm depends largely on the number of subproblems that are solved in each iteration which is exponential in the number of connected variables. Methods that can reduce this complexity were discussed in~\cite{floudas2013deterministic}.

\section{Binary measurements on pure states}

\subsection{Measurement on a qubit}
\label{app:binary-qubit}

Here consider a binary measurement on a single qubit where $X = x_1 M_1+ x_2 M_2$, $M_1 = M$ and $M_2 = \id - M$. Without loss of generality, we assume
\begin{equation}
M = \begin{pmatrix}
m_1 & 0 \\
0 & m_2 \\
\end{pmatrix},\quad m_1,m_2\in[0,1] \text{~~and~~} m_1 > m_2.
\end{equation}
Our goal is to find $(\vx,U)$ such that the two necessary conditions in \lemmaref{lemma:necessary} are satisfied. 

Let $\ket{\phi} = \sqrt{p}\ket{1} + \sqrt{1-p}\ket{2}$ where $0 \leq p \leq 1$ (any additional phases in the amplitudes of $\ket{\phi}$ do not change the results). Condition (1) translates into
\begin{gather}
\label{eq:system-1}
\begin{pmatrix}
\sqrt{p} & \sqrt{1 - p}
\end{pmatrix}
\begin{pmatrix}
x_1 m_1 + x_2 (1 - m_1) & 0 \\
0 & x_1 m_2 + x_2 (1 - m_2)
\end{pmatrix}
\begin{pmatrix}
\sqrt{p}\\
\sqrt{1 - p}
\end{pmatrix} = 0, 
\\
\label{eq:system-2}
\begin{pmatrix}
\sqrt{p} & \sqrt{1 - p}
\end{pmatrix}
\begin{pmatrix}
(x_1 m_1 + x_2 (1 - m_1))^2 & 0 \\
0 & (x_1 m_2 + x_2 (1 - m_2))^2
\end{pmatrix}
\begin{pmatrix}
\sqrt{p}\\
\sqrt{1 - p}
\end{pmatrix} = \frac{1}{4\frakn }, 
\end{gather}
and Condition~(2) is trivially true when $p = 0,1$, otherwise translates into: 
\begin{equation}
\frac{\braket{\phi|X^2|\phi}}{\braket{\phi|X_2|\phi}} = \frac{(x_1 m_1 + x_2 (1 - m_1))^2}{x_1^2 m_1 + x_2^2 (1 - m_1)} = \frac{(x_1 m_2 + x_2 (1 - m_2))^2}{x_1^2 m_2 + x_2^2 (1 - m_2)} = \gamma(\{M_i\}) \leq 1,
\end{equation}
where we use 
\begin{equation}
\label{eq:system-3}
F^{\bU}(\psi_\theta,\{M_i\})^{-1}
= \braket{\phi|X_2|\phi} = \frac{\braket{\phi|X_2|\phi}}{\braket{\phi|X^2|\phi}} \cdot \braket{\phi|X^2|\phi} = (\gamma(\{M_i\}) J(\psi_\theta))^{-1},
\end{equation}
and $J(\psi_\theta) = 4\frakn$. 

Solving the system of equations (\eqref{eq:system-1}--\eqref{eq:system-3}), we find the following results. 
\begin{enumerate}[(1)]
    \item When $1 > m_1 > m_2 > 0$, according to \lemmaref{lemma:noisy}, the QPFI is attainable. Moreover, the only solution satisfying the necessary conditions (\eqref{eq:system-1}--\eqref{eq:system-3}) is 
\begin{equation}
\frac{x_2}{x_1} = -\sqrt{\frac{m_1m_2}{(1-m_1)(1-m_2)}},
\end{equation}
\begin{equation}
\label{eq:binary-1}
p = \frac{\sqrt{m_2(1-m_2)}}{\sqrt{m_1(1-m_1)}+\sqrt{m_2(1-m_2)}},
\end{equation}
which must be the optimal solution. It gives 
\begin{equation}
\label{eq:binary-0}
\gamma(\{M_i\}) = 1-\big(\sqrt{ m_1 m_2} +\sqrt{\left(1-m_{1}\right)\left(1-m_{2}\right)}\big)^{2}. 
\end{equation}

\item When $1 = m_1 > m_2 = 0$, we must have $X^2 = X_2$, implying $\gamma(\{M_i\}) = 1$. The QFI is achievable as long as a solution to $p x_1 + (1-p) x_2 = 0$ and $p x_1^2 + (1-p) x_2^2 = 1/(4\frakn)$ exists, which means that any $0 < p < 1$ is optimal. 

\item When $1 > m_1 > m_2 = 0$, the necessary conditions (\eqref{eq:system-1}--\eqref{eq:system-3}) have no solutions. Thus, this is a singular case where the QPFI is not attainable. And we have from \thmref{thm:attainability} that 
\begin{align}
    \gamma(\{M_i\}) &= \lim_{\epsilon\rightarrow 0^+}\gamma(\{M_i^{(\epsilon)}\}) \\&= \lim_{\epsilon\rightarrow 0^+} \left(1-\frac{\epsilon}{2}\right)m_1+\frac{\epsilon}{2} -\left(1-\frac{\epsilon}{2}\right)m_1\epsilon -2\sqrt{\frac{\epsilon}{2}\left(1-\frac{\epsilon}{2}\right) \left(1-\frac{\epsilon}{2}\right)m_1\left(1-\left(1-\frac{\epsilon}{2}\right)m_1\right)} = m_1. 
\end{align}

\end{enumerate}

\subsection{Measurement on a qudit}
\label{app:binary-qudit}

Now consider a $d$-dimensional system with $d>2$ and a binary measurement $M_1 = M$ and $M_2 = \id - M$ on it where  
\begin{equation}
M = \begin{pmatrix}
m_1 & & & \\
& m_2 & & \\
& & \ddots & \\
& & & m_d \\
\end{pmatrix},
\end{equation}
and $1 > m_1\geq \cdots \geq m_d > 0$. 

Let $\ket{\phi} = \sum_{i=1}^{d} \phi_i \ket{i}$. We now show that the support of $\ket{\phi}$: ${\rm supp}\{\ket{\phi}\} = \{i:\phi_i\neq 0\}$ must correspond to at most two different values of $m_i$ when $\ket{\phi}$ is optimal. We prove this by contradiction. Without loss of generality, assume $\abs{\phi_{1,2,3}} > 0$ and $m_1 > m_2 > m_3$. Condition~(2) implies that 
\begin{equation}
\frac{\braket{1|X^2|1}}{\braket{1|X_2|1}} = \frac{\braket{2|X^2|2}}{\braket{2|X_2|2}} = \frac{\braket{3|X^2|3}}{\braket{3|X_2|3}},
\end{equation}
\begin{equation}
\Rightarrow~~~ 
\frac{(x_1 m_1 + x_2 (1 - m_1))^2}{x_1^2 m_1 + x_2^2 (1 - m_1)} = \frac{(x_1 m_2 + x_2 (1 - m_2))^2}{x_1^2 m_2 + x_2^2 (1 - m_2)}
= \frac{(x_1 m_3 + x_2 (1 - m_3))^2}{x_1^2 m_3 + x_2^2 (1 - m_3)},
\end{equation}
\begin{equation}
\Rightarrow~~ 
-\sqrt{\frac{m_1m_2}{(1-m_1)(1-m_2)}} = -\sqrt{\frac{m_1m_3}{(1-m_1)(1-m_3)}} = -\sqrt{\frac{m_2m_3}{(1-m_2)(1-m_3)}}, 
\end{equation}
which contradicts $m_1 > m_2 > m_3$. Thus, we conclude that the support of $\ket{\phi}$ must correspond to at most two different values of $m_i$. 

Therefore we have
\begin{equation}
\gamma(\{M_i\}) = \max_{1 \leq k \leq l \leq d}\gamma_{kl}(\{M_i\}) 
= \max_{kl} 1 - \big(\sqrt{m_k m_l} + \sqrt{(1-m_k)(1-m_l)}\big)^2. 
\end{equation}
In fact, assume $m_1 \geq m_2 \geq \cdots \geq m_d$, we must have 
\begin{equation}
\gamma(\{M_i\}) = 
   1-\big(\sqrt{ m_1 m_d} +\sqrt{\left(1-m_{1}\right)\left(1-m_{d}\right)}\big)^{2}. 
\end{equation}
The reason is that when $m_k \geq m_l$, increasing $m_k$ or decreasing $m_l$ while the other element is fixed will only increases $\gamma_{kl}(\{M_i\})$. We see that by computing the derivative of $\gamma_{kl}(\{M_i\})$ with respect to $m_k$. We have 
\begin{equation}
\begin{split}
 \frac{\partial}{\partial m_k}\gamma_{kl}(\{M_i\}) = 1-2m_l - \frac{\sqrt{(1-m_l)m_l}}{\sqrt{(1-m_k)m_k}} (1-2m_k) \geq 0, 
\end{split} 
\end{equation}
when $m_k \geq m_l$ because $\frac{1-2m_l}{\sqrt{(1-m_l)m_l}} = \sqrt{\frac{1-m_l}{m_l}} - \sqrt{\frac{m_l}{1-m_l}} \geq \frac{1-2m_k}{\sqrt{(1-m_k)m_k}}$.

\section{Commuting-operator measurements on pure states}

We take one step further from binary measurements and consider the commuting-operator measurements where 
\begin{equation}
M_i = \begin{pmatrix}
m^{(i)}_1 &  & & \\
 & m^{(i)}_2 & & \\
 & & \ddots & \\
 & & & m^{(i)}_d
\end{pmatrix},
\end{equation}
and $\sum_i M_i = \id$. We also assume $m^{(i)}_j > 0$ for all $i,j$. 

\subsection{Proof of \texorpdfstring{\thmref{thm:2d}}{Theorem~6} }
\label{app:2d}

We first prove \thmref{thm:2d}: 

\thmtwod*

\begin{proof}

Assume $(\vx,\ket{\phi})$ satisfies Condition~(2) in \lemmaref{lemma:necessary}, where we write $x_i = y(x + a_i)$, $\vx \neq 0$ and the support of $\ket{\phi}$ contains $\ket{1}$ and $\ket{2}$.
Then according to Condition~(2), we must have 
\begin{equation}
\label{eq:xx-aa}
\frac{(x + \braket{a}_1 )^2}{x^2 + 2 \braket{a}_1 x + \braket{a^2}_1} = \frac{(x + \braket{a}_2 )^2}{x^2 + 2 \braket{a}_2 x + \braket{a^2}_2} = \frac{\braket{\phi|X^2|\phi}}{\braket{\phi|X_2|\phi}}, 
\end{equation}
where $\braket{a}_j = \sum_i a_i m_j^{(i)}$ and $\braket{a^2}_j = \sum_i a_i^2 m_j^{(i)} $. We also define $\braket{\Delta a^2}_j = \braket{a^2}_j - \braket{a}^2_j$. 

Assume $\braket{a}_1 > \braket{a}_2$, we have the following possible solutions of \eqref{eq:xx-aa}. 
\begin{enumerate}[(1)]
\item When $\braket{\Delta a^2}_1 = \braket{\Delta a^2}_2$, 
\begin{equation}
\label{eq:x-gamma-a-0}
    x = -\frac{1}{2}(\braket{a}_1+\braket{a}_2),\text{~~~and~~~} \frac{\braket{\phi|X_2|\phi}}{\braket{\phi|X^2|\phi}} = 1 + \left(\frac{\sqrt{\braket{\Delta a^2}_1} + \sqrt{\braket{\Delta a^2}_2 }}{\braket{a}_1 - \braket{a}_2}\right)^2. 
\end{equation}
\item When $\braket{\Delta a^2}_1 - \braket{\Delta a^2}_2 \neq 0$, we have either 
\begin{equation}
\begin{aligned}
\label{eq:x-a-1}
&x = \frac{-\braket{a}_2 \braket{\Delta a^2}_1 + \braket{a}_1 \braket{\Delta a^2}_2 - (\braket{a}_1 - \braket{a}_2) \sqrt{\braket{\Delta a^2}_1 \braket{\Delta a^2}_2}}{\braket{\Delta a^2}_1 - \braket{\Delta a^2}_2}
\end{aligned}
\end{equation}
and 
\begin{equation}
\label{eq:gamma-a-1}
\frac{\braket{\phi|X_2|\phi}}{\braket{\phi|X^2|\phi}} = 
1 + \left(\frac{\sqrt{\braket{\Delta a^2}_1} + \sqrt{\braket{\Delta a^2}_2 }}{\braket{a}_1 - \braket{a}_2}\right)^2,
\end{equation}
or 
\begin{equation}
\begin{aligned}
\label{eq:x-a-2}
&x = \frac{-\braket{a}_2 \braket{\Delta a^2}_1 + \braket{a}_1 \braket{\Delta a^2}_2 + (\braket{a}_1 - \braket{a}_2) \sqrt{\braket{\Delta a^2}_1 \braket{\Delta a^2}_2}}{\braket{\Delta a^2}_1 - \braket{\Delta a^2}_2}
\end{aligned}
\end{equation}
and 
\begin{equation}
\label{eq:gamma-a-2}
\frac{\braket{\phi|X_2|\phi}}{\braket{\phi|X^2|\phi}} = 
1 + \left(\frac{\sqrt{\braket{\Delta a^2}_1} - \sqrt{\braket{\Delta a^2}_2 }}{\braket{a}_1 - \braket{a}_2}\right)^2.
\end{equation}
\end{enumerate}

Next we show that there always is an optimal solution such that its support contains only two elements. 
Without loss of generality, assume $\ket{\phi^*} = \sum_{i=1}^{d} \phi^*_i \ket{i}$ is an optimal solution (which is guaranteed to exist thanks to \lemmaref{lemma:noisy}). The corresponding error vector $\vx^*$ is written as $x^*_i = y^*(x^* + a^*_i)$. We have from Condition~(1) in \lemmaref{lemma:necessary} that $\sum_{i=1}^{d} |\phi^*_i|^2 y^*(x^* + \braket{a^*}_i) = 0$ and $\sum_{i=1}^{d} |\phi^*_i|^2 (y^*(x^* + \braket{a^*}_i))^2 = 1/(4\frakn)$. 
Clearly, we must have some $i \neq j$, such that $\abs{\phi^*_{i,j}} > 0$, $\braket{a^*}_{i} \neq \braket{a^*}_{j}$ and $(x^* + \braket{a^*}_{i})(x^* + \braket{a^*}_{j}) < 0$. Without loss of generality, we assume $i=1$, $j=2$ and $\braket{a^*}_{1} > \braket{a^*}_{2}$. 
Then $(\vx^*,\ket{\phi^*})$ must satisfy either \eqref{eq:x-gamma-a-0} or \eqref{eq:x-a-1} and \eqref{eq:gamma-a-1} from the previous discussion. (Note that if \eqref{eq:x-a-2} and \eqref{eq:gamma-a-2} cannot be true because $(x^* + \braket{a^*}_{i})(x^* + \braket{a^*}_{j}) < 0$.) 

We then assert that $\ket{\phi^{**}} = \sqrt{p} \ket{1} + \sqrt{1 - p} \ket{2}$ ($0 \leq p \leq 1$) is also an optimal solution, when $p$ satisfies
\begin{equation}
\begin{pmatrix}
\sqrt{p} & \sqrt{1 - p}
\end{pmatrix}
\begin{pmatrix}
x^* + \braket{a^*}_1 & 0 \\
0 & x^* + \braket{a^*}_2
\end{pmatrix}
\begin{pmatrix}
\sqrt{p}\\
\sqrt{1 - p}
\end{pmatrix} = 0. 
\end{equation}
Using \eqref{eq:x-gamma-a-0}--\eqref{eq:gamma-a-1}, it is easy to see that 
\begin{equation}
p = \frac{\sqrt{\braket{\Delta (a^*)^2}_2}}{\sqrt{\braket{\Delta (a^*)^2}_2} + \sqrt{\braket{\Delta (a^*)^2}_1}}. 
\end{equation}
We take $x^{**}_i = y^{**}(x^* + a^*_i)$, where $y^{**}$ is solved from 
\begin{equation}
\begin{pmatrix}
\sqrt{p} & \sqrt{1 - p}
\end{pmatrix}
\begin{pmatrix}
(x^* + \braket{a^*}_1)^2 & 0\\
0 & (x^* + \braket{a^*}_2)^2
\end{pmatrix}
\begin{pmatrix}
\sqrt{p}\\
\sqrt{1 - p}
\end{pmatrix} = \frac{1}{4\frakn  (y^{**})^2}. 
\end{equation}
Both equations are derived from Condition~(1). Now we have a new solution $(\vx^{**},\ket{\phi^{**}})$ such that $y^{**}$ and $\ket{\phi^{**}}$ are solved by the equations above. Note that we still let $a_i^{**} = a_i^*$ and $x^{**} = x^*$. The new solution have the same FI as the original, i.e., $\braket{\phi|X_2|\phi}$ does not change, because $\braket{\phi|X_2|\phi}/\braket{\phi|X^2|\phi}$ is independent of $y$ (due to \eqref{eq:x-gamma-a-0} and \eqref{eq:gamma-a-1}) and $\braket{\phi|X^2|\phi} = 1/(4\frakn)$ is invariant. 
The new solution is thus supported on a two-dimensional subspace spanned by $\{\ket{1},\ket{2}\}$, proving \thmref{thm:2d}. 
\end{proof}

\subsection{Optimal solution for commuting-operator measurements}
\label{app:optimal-commuting}

Now we proceed to compute general $\gamma(\{M_i\})$ for commuting-operator measurements. First, consider the optimization for measurements restricted in a two-dimensional subspace spanned by $\ket{k},\ket{l}$ for some $k\neq l$, i.e., 
\begin{equation}
    (M_i)_{kl} = m_{k}^{(i)}\ket{k}\bra{k} + m_{l}^{(i)}\ket{l}\bra{l},
\end{equation}
and $\sum_i (M_i)_{kl} = \id_{\mathrm{span}\{\ket{k},\ket{l}\}}$. 

Let $(\vx^*,\ket{\phi^*})$ be an optimal solution when $\ket{\phi},\ket{\phi^\perp}$ are restricted in $\mathrm{span}\{\ket{k},\ket{l}\}$ and $\ket{\phi^*} = \sqrt{p_{kl}}\ket{k} + \sqrt{1-p_{kl}}\ket{l}$ (we also assume $\braket{a^*}_k > \braket{a^*}_l$). Using \eqref{eq:x}, we see that the optimal $a_i^*$ 
\begin{equation}
\label{eq:commuting-00}
\begin{split}
y^*(x^* +a_i^*) &= x_i^* = \frac{y^*}{\gamma_{kl}(\{M_i\})} \frac{\braket{\phi^*|M_i X^*|\phi^*}}{\braket{\phi^*|M_i|\phi^*}} \\
&= \frac{y^*}{\gamma_{kl}(\{M_i\})} \frac{p_{kl} m^{(i)}_k (x^* + \braket{a^*}_k) + (1-p_{kl}) m^{(i)}_l (x^* + \braket{a^*}_l)}{p_{kl} m^{(i)}_k + (1-p_{kl}) m^{(i)}_l}\\
&=  \frac{p_{kl} y^* (x^* + \braket{a^*}_k)}{\gamma_{kl}(\{M_i\})} \frac{m^{(i)}_k  - m^{(i)}_l}{p_{kl} m^{(i)}_k + (1-p_{kl}) m^{(i)}_l},\\
\end{split}
\end{equation}
where we use 
\begin{equation}
\label{eq:commuting-0}
    \braket{\phi^*|X^*|\phi^*} = p_{kl} y^*(x^* + \braket{a^*}_k) + (1-p_{kl}) y^*(x^* + \braket{a^*}_l) = 0, 
\end{equation}
in the last step. From \eqref{eq:commuting-00}, we have, 
\begin{gather}
(y^*)^2((x^*)^2 + 2 \braket{a^*}_k x^* + \braket{(a^*)^2}_k) = \left(\frac{p_{kl}y^* (x^* + \braket{a^*}_k )}{\gamma_{kl}(\{M_i\})}\right)^2 \sum_i \frac{(m^{(i)}_k  - m^{(i)}_l)^2 m^{(i)}_k}{(p_{kl} m^{(i)}_k + (1-p_{kl}) m^{(i)}_l)^2},
\allowdisplaybreaks \label{eq:commuting-3}\\
(y^*)^2((x^*)^2 + 2 \braket{a^*}_l x^* + \braket{(a^*)^2}_l) = \left(\frac{p_{kl}y^* (x^* + \braket{a^*}_k )}{\gamma_{kl}(\{M_i\})}\right)^2 \sum_i \frac{(m^{(i)}_k  - m^{(i)}_l)^2 m^{(i)}_l}{(p_{kl} m^{(i)}_k + (1-p_{kl}) m^{(i)}_l)^2}. \label{eq:commuting-4}
\end{gather}
According to Condition~(2), 
\begin{equation}
\label{eq:commuting-5}
\frac{(x^*)^2 + 2 \braket{a^*}_k x^* + \braket{(a^*)^2}_k}{(x^*+\braket{a^*}_k)^2} = \frac{(x^*)^2 + 2 \braket{a^*}_l x^* + \braket{(a^*)^2}_l}{(x^*+\braket{a^*}_l)^2}.
\end{equation}
From \eqref{eq:commuting-0}--\eqref{eq:commuting-5},
we have 
\begin{equation}
\label{eq:commuting-7}
p_{kl}^2 \left({\sum_i \frac{(m^{(i)}_k  - m^{(i)}_l)^2 m^{(i)}_k}{(p_{kl} m^{(i)}_k + (1-p_{kl}) m^{(i)}_l)^2}}\right) = (1-p_{kl})^2 \left({\sum_i \frac{(m^{(i)}_k  - m^{(i)}_l)^2 m^{(i)}_l}{(p_{kl} m^{(i)}_k + (1-p_{kl}) m^{(i)}_l)^2}}\right),  
\end{equation}
It will give us a unique solution to $p_{kl}$ because the left-hand side is a monotonically increasing function in $[0,\sum_i (m_k^{(i)}-m_l^{(i)})^2]$ of $p_{kl} \in [0,1]$ and the right-hand side is a monotonically decreasing function in $[0,\sum_i (m_k^{(i)}-m_l^{(i)})^2]$ of $p_{kl} \in [0,1]$. However, a simple analytical solution to $p_{kl}$ from \eqref{eq:commuting-7} might not exist because it is a root of a high degree polynomial. Then we have 
\begin{equation}
\label{eq:commuting-6}
    \gamma_{kl}(\{M_i\}) = \sum_i \frac{(\Re[\braket{\phi^*|M_i|\phi^{\perp*}}])^2}{\braket{\phi^*|M_i|\phi^*}} = \sum_i \frac{p^*_{kl} (1 - p^*_{kl}) (m_k^{(i)}-m_l^{(i)})^2 }{p^*_{kl} m_k^{(i)} + (1 - p^*_{kl}) m_l^{(i)}},
\end{equation}
where $p^*_{kl}$ is the unique solution to \eqref{eq:commuting-7}. 

Finally,
\begin{equation}
\gamma(\{M_i\}) = \max_{kl} \gamma_{kl}(\{M_i\}),
\end{equation}
using \thmref{thm:2d}. Note that although \eqref{eq:commuting-7} might only be solvable numerically in practice for a multiple-outcome measurement. Our solution for pure states and commuting-operator measurements still has a huge simplification compared to the original biconvex problem for general states and measurements. 

\subsection{Proof of \texorpdfstring{\thmref{thm:uppgamma}}{Theorem~7} }
\label{app:uppgamma}

Here we prove a simple upper bound on the normalized QPFI: 

\thmuppgamma*

\begin{proof}
From the discussions in \appref{app:2d} and \appref{app:optimal-commuting}, we have that 
\begin{equation}
    \gamma(\{M_i\}) = \max_{kl} \left(\max_{0\leq p\leq 1}\sum_i \frac{p (1 - p) (m_k^{(i)}-m_l^{(i)})^2 }{p m_k^{(i)} + (1 - p) m_l^{(i)}} \right)= \max_{kl} \sum_i \frac{p^*_{kl} (1 - p^*_{kl}) (m_k^{(i)}-m_l^{(i)})^2 }{p^*_{kl} m_k^{(i)} + (1 - p^*_{kl}) m_l^{(i)}}. 
\end{equation}
\eqref{eq:uppgamma} is then proven, noting that for any $p \in [0,1]$, 
\begin{align}
&\quad \sum_i \frac{p(1-p)({m^{(i)}_k}-{m^{(i)}_l})^2}{p {m^{(i)}_k} + (1-p) {m^{(i)}_l}} - 1 \\
&= \sum_i \frac{p(1-p)({m^{(i)}_k}-{m^{(i)}_l})^2-(p{m^{(i)}_k} + (1-p){m^{(i)}_l})^2}{p {m^{(i)}_k} + (1-p) {m^{(i)}_l}}  \\
&= \sum_i \frac{(1-2p)(p {m^{(i)}_k}^2-(1-p){m^{(i)}_l}^2) - 4p(1-p) {m^{(i)}_k}{m^{(i)}_l}}{p {m^{(i)}_k} + (1-p) {m^{(i)}_l}}\\
&= \sum_i \frac{(1-2p)\left( (p {m^{(i)}_k} + (1-p){m^{(i)}_l})({m^{(i)}_k} - {m^{(i)}_l}) - (2p - 1){m^{(i)}_k} {m^{(i)}_l}\right)- 4p(1-p) {m^{(i)}_k}{m^{(i)}_l}}{p {m^{(i)}_k} + (1-p) {m^{(i)}_l}}\\
&= - \left(\sum_i \frac{{m^{(i)}_k}{m^{(i)}_l}}{p {m^{(i)}_k} + (1-p) {m^{(i)}_l}}\right)\left(\sum_i {p {m^{(i)}_k} + (1-p) {m^{(i)}_l}}\right) \leq - \left(\sum_i \sqrt{{m^{(i)}_k} {m^{(i)}_l}}\right)^2,
\end{align}
where in the first equality we use $\sum_i p m_k^{(i)} + (1-p) m_l^{(i)} = 1$, in the last equality we multiply the expression by a factor of $1 = \sum_i p m_k^{(i)} + (1-p) m_l^{(i)}$, and the last inequality follows from Cauchy--Schwarz. 

Assume $(k,l)$ minimizes $\sum_i \sqrt{m_k^{(i)} m_l^{(i)}}$. Then the equality above holds when $\exists p$, such that for any $i,j$, 
\begin{equation}
    \frac{m^{(i)}_km^{(i)}_l}{(pm^{(i)}_k + (1-p)m^{(i)}_l)^2} = \frac{m^{(j)}_km^{(j)}_l}{(pm^{(j)}_k + (1-p)m^{(j)}_l)^2}. 
\end{equation}
When there are at most two different $i$ and $j$ (i.e., $r = 2$), such a $p$ always exists, and the upper bound is tight (which also follows directly from \eqref{eq:gamma-qudit-binary}). In general, when the set
\begin{equation}
    \left\{ \frac{m^{(i)}_k}{m^{(i)}_l},\, 1 \leq i \leq r\right\}
\end{equation}
contains at most two distinct elements, the upper bound is tight 
and the optimal preprocessed state can be chosen as 
\begin{equation}
    \ket{\phi^*} = \sqrt{p^*_{kl}}\ket{k} + \sqrt{1-p^*_{kl}}\ket{l},\quad 
    \ket{\phi^{\perp*}} = \sqrt{1-p^*_{kl}}\ket{1} - \sqrt{p^*_{kl}}\ket{l},
\end{equation}
where $\big\{ \frac{m^{(i)}_k}{m^{(i)}_l},\, 1 \leq i \leq r\big\} = \{\frakm_{kl,1},\frakm_{kl,2}\}$ and $p^*_{kl} = \frac{1}{1 + \sqrt{\frakm_{kl,1} \frakm_{kl,2}}}$. 
Note that in this case $\big\{ \frac{m^{(i)}_k}{m^{(i)}_l},\, 1 \leq i \leq r\big\}$ must contain at least two distinct elements---otherwise, $\{M_i\}$ must be trivial.

Alternatively, we can also prove
\eqref{eq:uppgamma} directly from its original definition (\eqref{eq:gamma-no-x}) without using \thmref{thm:2d}. We have 
\begin{equation}
    \gamma(\{M_i\}) = \max_{\substack{\braket{\phi|\phi}=1,\,\braket{\phi^\perp|\phi^\perp}=1, \\ \braket{\phi|\phi^\perp}=0}} \sum_i \frac{(\Re[\braket{\phi|M_i|\phi^\perp}])^2}{\braket{\phi|M_i|\phi}} = \max_{\substack{\sum_{k} a_k^2 = 1,\,\sum_{k} b_k^2 \leq 1, \\ \sum_{k} a_k b_k = 0}} \sum_i \frac{\left(\sum_k m_k^{(i)}a_kb_k\right)^2}{\sum_k m_k^{(i)} a_k^2},
\end{equation}
where $a_k = \braket{k|\phi}$ (which we assume to be real, without loss of generality) and $b_k = \Re[\braket{k|\phi^\perp}]$ for all $1 \leq k \leq d$. Clearly, the maximum is reached at $\sum_k b_k^2 = 1$. For any $\{a_k\}$ and $\{b_k\}$ satisfying $\sum_{k} a_k^2 = 1$, $\sum_{k} b_k^2 = 1$ and $\sum_{k} a_k b_k = 0$, we have 
\begin{align}
    \sum_i \frac{\left(\sum_k m_k^{(i)}a_kb_k\right)^2}{\sum_k m_k^{(i)} a_k^2} 
    &= \sum_i \frac{ \sum_k (m_k^{(i)})^2 a_k^2b_k^2  + \sum_{k\neq l} m_k^{(i)} m_l^{(i)} a_k b_k a_l b_l }{\sum_k m_k^{(i)} a_k^2}\\
    &= \sum_i \frac{ \left(\sum_k m_k^{(i)} a_k^2\right)\left(\sum_l m_l^{(i)} b_l^2\right) + \frac{1}{2}\sum_{k\neq l} m_k^{(i)} m_l^{(i)} (2 a_k b_k a_l b_l - a_k^2 b_l^2 - a_l^2 b_k^2)  }{\sum_k m_k^{(i)} a_k^2} \\
    &= 1 - \frac{1}{2}\sum_{k\neq l} (a_k b_l - a_l b_k)^2  \sum_i \frac{ m_k^{(i)} m_l^{(i)}  }{ \sum_k m_k^{(i)} a_k^2} \\
    &\leq 1 - \min_{k \neq l} \left(\sum_i \frac{ m_k^{(i)} m_l^{(i)}  }{ \sum_k m_k^{(i)} a_k^2}\right)\left( \sum_{i,k} m_k^{(i)} a_k^2 \right) \leq 1 - \min_{k l}  \left( \sum_i \sqrt{ m_k^{(i)} m_l^{(i)}  } \right)^2,
\end{align}
where we use $\sum_{i,l} m_l^{(i)} b_l^2 = 1$ in the second step, along with the identities $\sum_{k} a_k^2 = \sum_{l} b_l^2 = 1$, and $\sum_{k} a_k b_k = 0$ to simplify $\frac{1}{2}\sum_{k\neq l} (a_k b_l - a_l b_k)^2 = \frac{1}{2}(\sum_{kl} a_k^2 b_l^2 + a_l^2 b_k^2 - 2a_k b_k a_l b_l) = 1$ in the third step, which allows us to minimize the remaining expression, multiplying by $1=\sum_{i,k} m_k^{(i)} a_k^2$ to leverage the Cauchy--Schwarz inequality in the last step. \eqref{eq:uppgamma} is then proven.

\end{proof}

\section{Classical capacity of quantum channels}
\label{app:capacity}

In this section, we prove the following lemma, which is generally known to be true in quantum information theory (see e.g.~\cite{hayashi2007error}): 
\begin{lemma}
\label{lemma:capacity}
Consider a quantum channel $\Phi$, its classical channel capacity $C(\Phi)$ and a constant $\alpha$ satisfying $0 < \alpha < C(\Phi)$. Then for all but finitely many positive integers $n$, there exist channels $\Xi_E$ and $\Xi_D$, such that $$\|\Xi_D \circ \Phi^{\otimes n} \circ \Xi_E - \mD_2^{\otimes \lfloor \alpha n \rfloor}\|_\diamond \leq e^{-\beta n},$$ for some $\beta > 0$, with $\mD_2$ and $\|\|_\diamond$ defined in the proof of {\thmref{thm:metrology-capacity}}.
\end{lemma}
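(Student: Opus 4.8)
The plan is to realize the dephasing channel $\mD_2^{\otimes \lfloor \alpha n\rfloor}$ as precisely the code promised by the Holevo--Schumacher--Westmoreland theorem, identifying the $2^{\lfloor \alpha n\rfloor}$ computational basis states of the qubit register with codewords. First I would fix a rate $R$ with $\alpha < R < C(\Phi)$ and invoke the achievability part of the classical capacity theorem \emph{with a positive error exponent}: for all large $n$ there is a code with $2^{\lfloor \alpha n\rfloor} \le 2^{Rn}$ messages, i.e. codeword input states $\{\rho_m\}$ for $\Phi^{\otimes n}$ and a decoding POVM $\{E_m\}$, whose maximal error probability $P_{\max} := \max_m\big(1 - \trace(E_m \Phi^{\otimes n}(\rho_m))\big)$ decays as $e^{-\beta' n}$ for some $\beta'>0$ (cf.~\cite{hayashi2007error}). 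A standard expurgation step converts the average-error random-coding bound into a maximal-error bound at the cost of a vanishing rate, which the margin $R<C(\Phi)$ absorbs; retaining only $2^{\lfloor\alpha n\rfloor}$ of the codewords can only lower the maximal error.

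Next I would define the encoder and decoder as \emph{measure-and-prepare} (entanglement-breaking) channels matched to this code,
\begin{gather*}
\Xi_E(\sigma) = \sum_m \braket{m|\sigma|m}\,\rho_m, \qquad
\Xi_D(\tau) = \sum_{m'} \trace(E_{m'}\tau)\,\ket{m'}\bra{m'}.
\end{gather*}
Composing them yields a channel that depends only on the diagonal of its input,
$\Xi_D\circ\Phi^{\otimes n}\circ\Xi_E(\sigma) = \sum_{m,m'} \braket{m|\sigma|m}\, W_{m'|m}\,\ket{m'}\bra{m'}$,
where $W_{m'|m} = \trace(E_{m'}\Phi^{\otimes n}(\rho_m))$ is a classical (stochastic) channel, while the target $\mD_2^{\otimes\lfloor\alpha n\rfloor}(\sigma) = \sum_m \braket{m|\sigma|m}\,\ket{m}\bra{m}$ is exactly the identity classical channel.

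The key simplification is that both channels factor through the complete dephasing in the computational basis and output only diagonal states. For such channels a reference system cannot enlarge the diamond norm: writing the difference of the two maps acting on $\sigma_A\otimes(\cdot)_R$ as $\sum_m (\omega_m^{(1)}-\omega_m^{(2)})\otimes \braket{m|_A \rho_{AR}|m}_A$ and applying the triangle inequality shows the optimum is attained on a single computational basis input, with no entanglement. Hence the diamond distance collapses to the classical maximal error,
\begin{equation*}
\big\|\Xi_D\circ\Phi^{\otimes n}\circ\Xi_E - \mD_2^{\otimes\lfloor\alpha n\rfloor}\big\|_\diamond = \max_m \sum_{m'}\big|W_{m'|m}-\delta_{mm'}\big| = 2\,P_{\max} \le 2\,e^{-\beta' n},
\end{equation*}
and choosing e.g.~$\beta = \beta'/2$ makes the right-hand side at most $e^{-\beta n}$ for all large $n$, proving the claim.

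I expect the main obstacle to be the first step. The bare HSW theorem only yields a vanishing error, whereas the lemma demands exponential decay; this forces reliance on the sharper reliability-function (error-exponent) form of the classical--quantum channel coding theorem, which guarantees $P_{\max}=e^{-\Omega(n)}$ strictly below capacity. I would cite this result rather than reprove it. The remaining steps---the measure-and-prepare construction and the reduction of the diamond norm to a total-variation distance---are routine once one observes that both channels are entanglement-breaking and classical on their outputs.
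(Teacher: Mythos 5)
Your proposal is correct, and its skeleton --- realizing $\mD_2^{\otimes\lfloor\alpha n\rfloor}$ by sandwiching $\Phi^{\otimes n}$ between a prepare-codeword encoder and a measure-and-reprepare decoder built from an HSW code --- is the same as the paper's, which follows the proof of Theorem~8.27 in~\cite{watrous2018theory}. The difference lies in what is proved versus what is cited. The entire technical content of the paper's proof is precisely the step you outsource: since textbook HSW gives only vanishing error, the paper re-runs the random-coding argument and upgrades the law-of-large-numbers estimates on the typical projectors $\Pi$ and $\Lambda_{\va}$ to Hoeffding bounds, obtaining $\delta \leq 16 e^{-2n\varepsilon^2/x_\upp^2} + 8 e^{-2n\varepsilon^2/y_\upp^2} + 2^{4-n\varepsilon}$, while it inherits the encoder/decoder construction and the diamond-norm bookkeeping from~\cite{watrous2018theory} without spelling them out. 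You do the converse: you cite the reliability-function (error-exponent) form of the classical--quantum coding theorem --- the same reference~\cite{hayashi2007error} the paper itself invokes to say the lemma is folklore --- and instead make explicit the measure-and-prepare construction and the collapse of the diamond norm to $2P_{\max}$; that computation is right, since both maps dephase their input in the computational basis, so writing $(\Lambda_1-\Lambda_2)\otimes\id$ acting on $\rho_{AR}$ as $\sum_m (\omega_m^{(1)}-\omega_m^{(2)})\otimes\bra{m}\rho_{AR}\ket{m}$ and applying the triangle inequality shows the supremum over entangled inputs is attained on a single basis state. The paper's route buys self-containedness, which matters because the exponential decay is the whole point of the lemma and is absent from standard statements of HSW; yours buys brevity and makes rigorous the channel-theoretic reduction the paper leaves implicit. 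Two points to tighten if you write this up: (i) the cited error exponents hold for rates below the single-letter Holevo quantity of a classical--quantum channel, whereas $C(\Phi)$ is a regularized quantity, so to reach an arbitrary $\alpha < C(\Phi)$ you must first fix $k$ with $\chi^*(\Phi^{\otimes k})/k > \alpha$ and code over blocks of $k$ channel uses; (ii) after expurgation and after discarding codewords down to $2^{\lfloor\alpha n\rfloor}$ messages, the decoding POVM must be completed (e.g., lump all discarded outcomes into one retained label) so that $\Xi_D$ remains trace preserving. Both are routine, and both are absorbed in the paper by its reduction to the ensemble-level statement refining Theorem~8.26 of~\cite{watrous2018theory}.
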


\lemmaref{lemma:capacity} essentially states that, fixing any $\alpha$ that is smaller than the classical channel capacity of $\Phi$, in the large $n$ limit,  $\Phi^{\otimes n}$ with suitable encoding and decoding channels can be used to transmit classical binary information reliably at a rate $\alpha$ with an exponentially small error with respect to $n$. The proof of \lemmaref{lemma:capacity} follows almost exactly from the proof of the HSW theorem~\cite{holevo1998capacity,schumacher1997sending}, with a slight refinement in the error analysis using Hoeffding's inequality, i.e., 
\begin{lemma}[Hoeffding's inequality~\cite{hoeffding1994probability}]
\label{lemma:hoeffding}
Let $X_1,\ldots,X_n$ be independent random variables such that $0 \leq X_i \leq 1$ for $i = 1,\ldots,n$. Then for any $\varepsilon > 0$, 
\begin{equation}
    Y_n = \frac{X_1+\cdots+X_n}{n}\quad \text{satisfies} \quad  \prob\left(\abs{Y_n - \bE\left[Y_n\right]} \geq \varepsilon\right) \leq 2 e^{-2n\varepsilon^2}, 
\end{equation}
\end{lemma}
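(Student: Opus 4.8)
The plan is to prove Hoeffding's inequality by the exponential-moment (Chernoff) method, the one technical input being a bound on the moment generating function of a bounded, mean-zero random variable.

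First I would reduce the two-sided bound to a one-sided one. Since replacing each $X_i$ by $1-X_i$ preserves the hypothesis $0 \le X_i \le 1$ and sends $Y_n - \bE[Y_n]$ to $-(Y_n - \bE[Y_n])$, it suffices to prove $\prob(Y_n - \bE[Y_n] \ge \varepsilon) \le e^{-2n\varepsilon^2}$; applying this bound to both $\{X_i\}$ and $\{1-X_i\}$ and adding the two tail probabilities via a union bound produces the factor of $2$ in the claimed estimate.

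For the one-sided tail, set $S_n = \sum_{i=1}^n (X_i - \bE[X_i])$, so that the event $\{Y_n - \bE[Y_n] \ge \varepsilon\}$ equals $\{S_n \ge n\varepsilon\}$. For any $t > 0$, Markov's inequality applied to the nonnegative variable $e^{tS_n}$ gives $\prob(S_n \ge n\varepsilon) \le e^{-tn\varepsilon}\,\bE[e^{tS_n}]$, and independence factorizes $\bE[e^{tS_n}] = \prod_{i=1}^n \bE[e^{t(X_i - \bE[X_i])}]$. Each centered variable $Z_i := X_i - \bE[X_i]$ has mean zero and takes values in an interval of length at most $1$. The key estimate (Hoeffding's lemma) states that any mean-zero $Z \in [a,b]$ satisfies $\bE[e^{tZ}] \le e^{t^2(b-a)^2/8}$; applied here it yields $\bE[e^{tZ_i}] \le e^{t^2/8}$, hence $\bE[e^{tS_n}] \le e^{nt^2/8}$. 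Combining, $\prob(S_n \ge n\varepsilon) \le \exp\!\big(-tn\varepsilon + nt^2/8\big)$, and minimizing the exponent over $t > 0$ at $t = 4\varepsilon$ gives the exponent $-2n\varepsilon^2$, completing the one-sided bound.

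The main obstacle is establishing Hoeffding's lemma, which I would prove as follows. Convexity of $z \mapsto e^{tz}$ on $[a,b]$ gives $e^{tz} \le \frac{b-z}{b-a}e^{ta} + \frac{z-a}{b-a}e^{tb}$; taking expectations and using $\bE[Z]=0$ gives $\bE[e^{tZ}] \le (1-\theta)e^{ta} + \theta e^{tb} = e^{\psi(u)}$, where $\theta = -a/(b-a) \in [0,1]$, $u = t(b-a)$, and $\psi(u) = -\theta u + \log\!\big((1-\theta) + \theta e^{u}\big)$. A direct computation shows $\psi(0) = \psi'(0) = 0$ and $\psi''(u) = p(u)\big(1-p(u)\big) \le \tfrac14$, where $p(u) = \theta e^{u}/\big((1-\theta)+\theta e^{u}\big) \in [0,1]$ (the product of a number in $[0,1]$ with its complement never exceeds $1/4$). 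Taylor's theorem with remainder then yields $\psi(u) \le u^2/8 = t^2(b-a)^2/8$, which is exactly the lemma. I expect this moment-generating-function bound to be the one nontrivial computation; the remaining steps are standard applications of Markov's inequality, independence, and one-variable optimization.
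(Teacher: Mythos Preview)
Your proof is the standard and correct Chernoff-method derivation of Hoeffding's inequality, complete with a clean proof of Hoeffding's lemma via convexity and a second-derivative bound on the log-moment-generating function. There is nothing to fault in it.

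That said, the paper does not actually prove this lemma at all: it is stated with a citation to Hoeffding's original work and then invoked as a black box in the error analysis for the channel-capacity argument (\appref{app:capacity}) and in the examples of global preprocessing (\appref{app:circuit}). So there is no ``paper's own proof'' to compare against; your write-up simply supplies what the authors chose to quote.
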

\noindent to show that the error is exponentially small. Therefore, we will only provide the error analysis part for the proof of \lemmaref{lemma:capacity} that is different from the standard proof of the HSW theorem and skip the steps that can readily be found in standard quantum information theory textbooks~\cite{watrous2018theory,wilde2013quantum}.

Following the proof of Theorem~8.27 in~\cite{watrous2018theory}, it is clear that in order to prove \lemmaref{lemma:capacity}, it is sufficient to prove the following lemma (which is a refinement of Theorem~8.26 in \cite{watrous2018theory}):
\begin{lemma}
Let $\eta = (p(a),\sigma_a)$ be an ensemble of quantum states satisfying $\sum_a p(a) = 1$, where $\sigma_a$ are density operators and $a \in \Sigma$ ($\Sigma$ is an alphabet whose order is equal to the square of the dimension of the system the $\sigma_a$ act on). Let 
\begin{equation}
    \alpha < \chi(\eta) := H\Big(\sum_{a\in\Sigma} p(a)\sigma_a\Big) - \sum_{a\in\Sigma} p(a)H(\sigma_a), 
\end{equation}
where $H(\sigma_a)$ is the von Neumann entropy of $\sigma_a$ and $m = \lfloor \alpha n \rfloor$. For all but finite number of $n$, there exists a function $f:\{0,1\}^m \rightarrow \Sigma^n$ and a quantum measurement $\{M_\vb\}_{\vb\in\{0,1\}^m}$ such that 
\begin{equation}
\trace(M_{\vb}\sigma_{f(\vb)}) > 1- e^{-\beta n}, 
\end{equation}
for every $\vb = b_1\cdots b_m \in \{0,1\}^m$, $\sigma_{f(\vb)} = \sigma_{f(\vb)_1}\otimes \cdots \sigma_{f(\vb)_n}$ and some $\beta > 0$. 
\end{lemma}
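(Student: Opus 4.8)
The plan is to prove this coding theorem by the random-coding argument underlying the HSW theorem, but with every typicality estimate sharpened from $o(1)$ to $e^{-\Omega(n)}$ by means of Hoeffding's inequality (\lemmaref{lemma:hoeffding}). Write $\rho = \sum_a p(a)\sigma_a$. First I would introduce the typical projector $\Pi_n$ onto the $\delta$-typical subspace of $\rho^{\otimes n}$ and, for each sequence $a^n = a_1\cdots a_n \in \Sigma^n$, the conditionally typical projector $\Pi_{a^n}$ onto the $\delta$-typical subspace of $\sigma_{a^n} = \sigma_{a_1}\otimes\cdots\otimes\sigma_{a_n}$. The negative-log-eigenvalue statistics defining these projectors are sums of independent, bounded random variables, so Hoeffding's inequality yields the four key properties with exponentially small slack: $\trace(\Pi_n\rho^{\otimes n}) \geq 1 - e^{-\Omega(n)}$, an averaged conditional typicality $\bE_{a^n}[\trace(\Pi_{a^n}\sigma_{a^n})] \geq 1 - e^{-\Omega(n)}$ (expectation over $a^n \sim p^{\times n}$), the eigenvalue bound $\Pi_n \rho^{\otimes n}\Pi_n \leq 2^{-n(H(\rho)-\delta)}\Pi_n$, and the dimension bound $\trace(\Pi_{a^n}) \leq 2^{n(\sum_a p(a)H(\sigma_a)+\delta)}$ on average. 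The boundedness needed for Hoeffding can be arranged by truncating the rare large-deviation eigenvalues at negligible cost.

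Second I would run the random coding. Draw $2^{m}$ codewords $X(1),\dots,X(2^{m})$ independently from $p^{\times n}$, set $\Lambda_{a^n} = \Pi_n\Pi_{a^n}\Pi_n$, and define the decoding POVM $\{M_j\}$ by the square-root (pretty-good) measurement built from $\{\Lambda_{X(k)}\}$. The decoding error for codeword $j$ is controlled by the Hayashi--Nagaoka operator inequality
\begin{equation}
\id - M_j \leq 2\,(\id - \Lambda_{X(j)}) + 4\sum_{k\neq j}\Lambda_{X(k)}.
\end{equation}
Taking the expectation of $\trace\big((\id - M_j)\sigma_{X(j)}\big)$ over the random code, the first term is bounded by the exponential typicality of step one, while the second term factorizes by independence: $\bE[\Lambda_{X(k)}]$ is evaluated against $\bE[\sigma_{X(j)}] = \rho^{\otimes n}$, and combining the eigenvalue and dimension bounds yields a contribution at most $2^{m}\,2^{-n(\chi(\eta)-2\delta)}$. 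Since $\alpha < \chi(\eta)$, I may fix $\delta$ so small that $\alpha < \chi(\eta) - 2\delta$, whence this term is $e^{-\Omega(n)}$ as well.

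Hence the expected average decoding error over random codes is $e^{-\Omega(n)}$, so some fixed assignment $f\colon\{0,1\}^{m}\to\Sigma^n$ achieves average error $e^{-\Omega(n)}$. Finally I would expurgate: discarding the worst half of the codewords converts the average-error guarantee into a maximum-error guarantee $\max_{\vb}\big(1-\trace(M_{\vb}\sigma_{f(\vb)})\big) \leq 2\,e^{-\Omega(n)}$, at the cost of one bit of rate. Running the argument at rate $\lfloor\alpha n\rfloor+1$ (still below $\chi(\eta)$ for large $n$) absorbs this loss and leaves exactly $2^{\lfloor \alpha n\rfloor}$ surviving codewords, giving $\trace(M_{\vb}\sigma_{f(\vb)}) > 1 - e^{-\beta n}$ for a suitable $\beta > 0$.

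The main obstacle is precisely the passage from $o(1)$ to $e^{-\Omega(n)}$ in the typicality estimates: the conventional HSW proof invokes only the weak law of large numbers, so I must instead recast each typicality event as a concentration statement for a sum of bounded independent random variables and apply Hoeffding, being careful that (i) the relevant information random variables are genuinely bounded after the truncation step, and (ii) the two multiplicative factors of $2$ introduced by Hayashi--Nagaoka and by expurgation, together with the additive $2\delta$ in the exponent, do not erode the strict inequality $\alpha < \chi(\eta)$ that makes the second error term exponentially small.
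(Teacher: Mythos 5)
Your proposal is correct and follows essentially the same route as the paper: both proofs are the standard HSW random-coding argument with the pretty-good measurement, where the only new ingredient is replacing the weak-law-of-large-numbers typicality estimates by Hoeffding's inequality applied to the bounded log-eigenvalue random variables, and observing that $\alpha < \chi(\eta)$ makes the union-bound term $2^{m-n(\chi(\eta)-2\varepsilon)}$ exponentially small. The only cosmetic difference is that the paper plugs the Hoeffding bounds directly into the explicit error formula quoted from Watrous's proof of the coding theorem (which already yields a per-codeword guarantee), whereas you re-derive that machinery via Hayashi--Nagaoka and explicit expurgation; also note that no truncation of eigenvalues is needed for boundedness, since the finite alphabet and finite dimension already make the relevant random variables bounded once zero-probability outcomes are excluded.
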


\begin{proof}
Choose a sufficiently small $\varepsilon$ such that $\alpha < \chi(\eta) - 3\varepsilon$. Following the proof of Theorem~8.26 in \cite{watrous2018theory}, there exists a function $f:\{0,1\}^m \rightarrow \Sigma^n$ and a quantum measurement $\{M_\vb\}_{\vb\in\{0,1\}^m}$ such that 
\begin{equation}
\trace(M_{\vb}\sigma_{f(\vb)}) > 1- \delta, 
\end{equation}
for every $\vb = b_1\cdots b_m \in \{0,1\}^m$ where 
\begin{equation}
\label{eq:capacity-0}
    \delta = 4\bigg(3 - 2\trace(\Pi\sigma^{\otimes n}) - \sum_{\va \in \Sigma^{ n}} p(a_1) \cdots p(a_n) \trace(\Lambda_{\va}\sigma_{\va})\bigg) + 2^{m+4-n(\chi(\eta) - 2\varepsilon)}.
\end{equation}
Here $\va = a_1\cdots a_n \in \Sigma^n$, $\sigma = \sum_{a\in\Sigma}p(a)\sigma_a$, $\sigma_{\va} = \sigma_{a_1}\otimes\cdots\otimes \sigma_{a_m}$, $\Pi$ is the projection onto the $\varepsilon$-typical subspace with respect to $\sigma$, $\Lambda_{\va}$ is the projection onto the $\varepsilon$-typical subspace conditioned on $\va = a_1\cdots a_n$. 
Specifically, let $\sigma = \sum_a p'(a) \ket{u_a}\bra{u_a}$ where $\{\ket{u_a},a\in\Sigma\}$ is an orthonormal basis and $p(a)\sigma_a = \sum_{c\in\Gamma} p(a,c) \ket{u_{ac}}\bra{u_{ac}}$ where $\{\ket{u_{ac}},a\in\Gamma\}$ is an orthonormal basis for each $a \in \Sigma$. Let $p(a) = \sum_{c\in\Gamma} p(a,c)$ and $H(p(a))$ be the Shannon entropy of $p(a)$. Then the definitions of $\Pi$ and $\Lambda_{\va}$ are 
\begin{gather}
    \Pi = \sum_{\va \in T_\varepsilon} \ket{u_{a_1}}\bra{u_{a_1}}\otimes\cdots\otimes\ket{u_{a_n}}\bra{u_{a_n}},\\
    \Lambda_{\va} = \sum_{\vc \in K_{\va,\varepsilon}} \ket{u_{a_1c_1}}\bra{u_{a_1c_1}}\otimes\cdots\otimes\ket{u_{a_nc_n}}\bra{u_{a_nc_n}},
\end{gather}
where $T_\varepsilon$ is the set of $\va$ satisfying $2^{-n(H(p'(a)) + \varepsilon)} < p'(a_1)\cdots p'(a_n) < 2^{-n(H(p'(a)) - \varepsilon)}$ and $K_{\va,\varepsilon}$ is the set of $\vc$ satisfying $2^{-n(H(p(a,c)) - H(p(a)) + \varepsilon)} < \frac{p(a_1,c_1) \cdots p(a_n,c_n)}{p(a_1)\cdots p(a_n)} < 2^{-n(H(p(a,c)) - H(p(a)) - \varepsilon)}$ for any $\va$ satisfying $p(a_1)\cdots p(a_n) > 0$. We have 
\begin{gather}
\label{eq:capacity-1}
    \trace(\Pi\sigma^{\otimes n})  = \sum_{\va\in T_\varepsilon} p'(a_1)\cdots p'(a_n),\\
\label{eq:capacity-2}
    \sum_{\va\in\Sigma^n}\trace(\Lambda_{\va}\sigma_{\va}) = \sum_{\va\in\Sigma^n}\sum_{\vc \in K_{\va,\varepsilon}} p(a_1,c_1)\cdots p(a_n,c_n). 
\end{gather}
Define two random variables $X: \Sigma \rightarrow [0,x_\upp]$ and $Y: \Sigma \times \Gamma \rightarrow [0,y_\upp]$, where $x_\upp = \max_{a:p'(a)\neq 0} -\log(p'(a))$ and $y_\upp = \max_{a,b:p(a,c)\neq 0} -\log(p(a,c))+\log(p(a))$, as  
\begin{gather}
    X(a) = -\log(p'(a)) \text{ if } p'(a) > 0, \text{ and } 0 \text{ otherwise},\\
    Y(a,b) = -\log(p(a,c))+\log(p(a)) \text{ if } p(a,c) > 0, \text{ and } 0 \text{ otherwise}. 
\end{gather}
Let $X_1,\ldots,X_n$ be $n$ independent random variables each identically distributed to $X$. Using the Hoeffding inequality, we have 
\begin{gather}
    \prob\left(\left|\frac{X_1+\cdots +X_n}{n} - H(p'(a))\right| \geq \varepsilon \right) \leq 2\exp\left(-{2n\varepsilon^2}/{x_\upp^2}\right).
\end{gather}
On the other hand, according to the definition of $T_\varepsilon$, we have
\begin{equation}
    \prob\left(\left|\frac{X_1+\cdots +X_n}{n} - H(p'(a))\right| \geq \varepsilon \right) = 1 - \sum_{\va\in T_\varepsilon} p'(a_1)\cdots p'(a_n). 
\end{equation}
It implies $\trace(\Pi \sigma^{\otimes n}) \geq 1 - 2\exp\left(-{2n\varepsilon^2}/{x_\upp^2}\right)$ using \eqref{eq:capacity-1}. Similarly, using the Hoeffding inequality for independent random variables distributed to $Y$ and \eqref{eq:capacity-2}, we have $\sum_{\va\in\Sigma^n}\trace(\Lambda_{\va} \sigma_{\va}) \geq 1 - 2\exp\left(-{2n\varepsilon^2}/{y_\upp^2}\right)$. Plugging in these bounds in \eqref{eq:capacity-0}, we have 
\begin{equation}
    \delta \leq 16 e^{-2n\varepsilon^2/x_{\upp}^2} + 8 e^{-2n\varepsilon^2/y_{\upp}^2} + 2^{4-n\varepsilon},
\end{equation}
proving the lemma. 
\end{proof}

\section{Examples of global preprocessing controls}
\label{app:circuit}

\subsection{Achievable FIs}

We first calculate the asymptotic limits of the achievable FI using circuits shown in \figref{fig:circuit}. 

The first example is phase sensing using GHZ states. The initial state is 
\begin{equation}
    \ket{\psi_\theta^{(n)}} = \frac{e^{i n \theta}\ket{0}^{\otimes n} + e^{-i n \theta}\ket{1}^{\otimes n}}{\sqrt{2}}. 
\end{equation}
The optimal circuit $U^{\textsc{G}}$ is composed of a C(NOT)$^{n-1}$ gate, a Hadamard gate and another C(NOT)$^{n-1}$ gate. The final state is 
\begin{equation}
    \ket{\psi_\theta^{(n),{\rm final}}} = U^{\textsc{G}} \ket{\psi_\theta^{(n)}} = \cos(n\theta)\ket{0}^{\otimes n} + i\sin(n\theta)\ket{1}^{\otimes n}. 
\end{equation}
The noisy measurement is $\{M_i\}^{\otimes n} = \{M_0,M_1\}^{\otimes n}$ where $M_0 = (1-m)\ket{0}\bra{0} + m\ket{1}\bra{1}$ ($0 < m < 1/2$) and $M_1 = \id - M_0$. The probability of getting a measurement result $\vb \in \{0,1\}^n$ is 
\begin{equation}
    \prob(\vb) = \prob(b_1b_2\cdots b_n) = \cos(n\theta)^2  m^{\abs{\vb}}(1-m)^{n-\abs{\vb}} + \sin(n\theta)^2 m^{n-\abs{\vb}}(1-m)^{\abs{\vb}},
\end{equation}
where $\abs{\vb}$ is the weight of $\vb$. We divide the measurement outcomes into two sets where the outcome is $1$ when $\abs{\vb}$ is larger than $\lfloor n/2\rfloor$ and the outcome is $0$ when $\abs{\vb}$ is smaller than or equal to $\lfloor n/2\rfloor$. We call this a \emph{majority voting post-processing method}. Let 
\begin{equation}
    M_{0,{\mj}} = \sum_{\abs{\vb}\leq \lfloor n/2\rfloor} M_{b_1}\otimes \cdots \otimes M_{b_n},\quad \text{and}\quad M_{1,{\mj}} = \id - M_{0,{\mj}} = \sum_{\abs{\vb} > \lfloor n/2\rfloor} M_{b_1}\otimes \cdots \otimes M_{b_n}. 
\end{equation}
Then the probability of getting outcome $0$ is 
\begin{equation}
    \prob(0)_{\mj} = \trace(\psi_\theta^{(n),{\rm final}} M_{0,\mj}) 
    = \cos(n\theta)^2 \prob\left(Y_n \leq \frac{\lfloor n/2\rfloor}{n}\right) + \sin(n\theta)^2  \prob\left(Y_n > 1 - \frac{\lfloor n/2\rfloor}{n}\right). 
\end{equation}
Here we let $X_1,\ldots,X_n$ be i.i.d. random variables such that $X_i = 0$ with probability $1-m$ and $X_i = 1$ with probability $m$ and $Y_n = (X_1+\cdots+X_n)/n$. According to the Hoeffding's inequality (\lemmaref{lemma:hoeffding}), 
\begin{equation}
    \prob\left(\abs{Y_n - m} > \varepsilon \right) \leq 2 e^{-2n\varepsilon^2}.
\end{equation}
For a sufficiently large $n$, we have $\lfloor n/2\rfloor/n - m = \Omega(1)$, and 
\begin{gather}
    1 - 2 e^{-2n(\lfloor n/2\rfloor/n - m)^2} \leq \prob\left(Y_n \leq \frac{\lfloor n/2\rfloor}{n}\right) \leq 1, \\
    0 \leq \prob\left(Y_n > 1 - \frac{\lfloor n/2\rfloor}{n}\right) \leq 2 e^{-2n(1 - \lfloor n/2\rfloor/n - m)^2}.
\end{gather}

Without loss of generality, we assume that $\frac{\pi}{6n} \leq \theta \leq \frac{\pi}{3n}$. Otherwise, we can always first find a rough estimate of  $\theta_0$ such that $\abs{\theta - \theta_0} \leq \frac{\pi}{12n}$ and then insert a Pauli-X rotation $e^{-in\theta_0 X}$ after the Hadamard gate, such that the final state becomes 
\begin{equation}
    \cos(n(\theta-\theta_0))\ket{0}^{\otimes n} + i\sin(n(\theta-\theta_0))\ket{1}^{\otimes n},
\end{equation}
and $\frac{\pi}{6n} \leq \theta-\theta_0 \leq \frac{\pi}{3n}$. Then one can estimate $\theta - \theta_0$ using the majority voting post-processing method. 

Assuming $\frac{\pi}{6n} \leq \theta \leq \frac{\pi}{3n}$, the achievable FI is  
\begin{align}
    F(\psi_\theta^{(n),{\rm final}},\{M_{0,\mj},M_{1,\mj}\}) 
    &= \frac{(\partial_\theta \prob(0)_{\mj})^2}{\prob(0)_{\mj}(1-\prob(0)_{\mj})}\\
    &= \frac{4 n^2   }{1 + \frac{(2-b_n)b_n-a_n^2}{a_n^2 \sin^2(2n\theta)} + \frac{(2-2b_na_n) \cos(2n\theta)}{a_n^2 \sin^2(2n\theta)}  }, 
\end{align}
where $a_n = \prob\left(Y_n \leq \frac{\lfloor n/2\rfloor}{n}\right) - \prob\left(Y_n > 1 - \frac{\lfloor n/2\rfloor}{n}\right)$ and $b_n = \prob\left(Y_n \leq \frac{\lfloor n/2\rfloor}{n}\right) + \prob\left(Y_n > 1 - \frac{\lfloor n/2\rfloor}{n}\right)$. Noting that $\sin^2(2n\theta) \in [3/4,1]$, $\lim_{n\rightarrow \infty} a_n = \lim_{n\rightarrow \infty} b_n = 1$, we have 
\begin{equation}
    F(\psi_\theta^{(n),{\rm final}},\{M_{0,\mj},M_{1,\mj}\}) \xrightarrow{n\rightarrow \infty} 4n^2. 
\end{equation}

To illustrate the importance of global preprocessing controls, we also compute the FI when we measure $\psi_\theta^{(n)}$ using a noisy local Pauli-X operator measurement, which is equivalent to applying a transversal Hadamard gate $H^{\otimes n}$ as the preprocessing control. Let $M_0' = (1-m)\ket{+}\bra{+}+m\ket{-}\bra{-}$ and $M_1' = m\ket{+}\bra{+}+(1-m)\ket{-}\bra{-}$, where $\ket{\pm} = \frac{\ket{0} \pm \ket{1}}{\sqrt{2}}$ is the basis of Pauli-X operator. We have 
\begin{align}
F(\psi_\theta^{(n)} ,\{M_0', M_1'\}^{\otimes n}) &= F(H^{\otimes n}\psi_\theta^{(n)} H^{\otimes n},\{M_i\}^{\otimes n}) \\
&= \frac{(\partial_\theta p_{\rm odd})^2}{p_{\rm odd}} + 
\frac{(\partial_\theta p_{\rm even})^2}{p_{\rm even}}= \frac{4n^2 \sin(2n\theta)^2}{\sin(2n\theta)^2 - 1 + (1-2m)^{-2n}},
\end{align}
where $p_{\rm odd}$ (or $p_{\rm even}$) is the probability of obtaining measurement outcomes that form an odd (or even) parity bit string, and 
\begin{align}
    p_{\rm odd}  = \frac{1}{2}\left(1 + (1-2m)^n\right)\sin^2 n\theta + \frac{1}{2}\left(1 - (1-2m)^n\right)\cos^2 n\theta, \\
    p_{\rm even} = \frac{1}{2}\left(1 - (1-2m)^n\right)\sin^2 n\theta + \frac{1}{2}\left(1 + (1-2m)^n\right)\cos^2 n\theta. 
\end{align}
Note that when $m = 0$, the FI $F(H^{\otimes n}\psi_\theta^{(n)} H^{\otimes n},\{M_i\}^{\otimes n}) = 4n^2$ is equal to the QFI. However, when $m > 0$,  the FI $F(H^{\otimes n}\psi_\theta^{(n)} H^{\otimes n},\{M_i\}^{\otimes n}) = e^{-\Omega(n)}$ is exponentially small, demonstrating the necessity of performing global preprocessing controls. In general, it was illustrated in Ref.~\cite{len2021quantum} that in the presence of noisy measurements, Heisenberg scaling cannot be achieved with local control.

The second example is phase sensing using product states, where 
\begin{equation}
    \ket{\psi_\theta^{(n)}} = \left(\frac{e^{i \theta} \ket{0} + e^{-i \theta}\ket{1}}{\sqrt{2}}\right)^{\otimes n}.
\end{equation}
The optimal circuit is composed of a global Hadamard gate, a global Pauli-X rotation of angle $\theta_0/2$, a desymmetrization gate $DS$ and a C(NOT)$^{n-1}$ gate. Here $\theta_0$ is chosen such that $1/n < \abs{\theta - \theta_0} \ll 1/\sqrt{n}$ (assuming $n$ is large enough). After the first two gates, the quantum state becomes 
\begin{equation}
    \left(\cos(\theta-\theta_0)\ket{0} + i\sin(\theta-\theta_0)\ket{1}\right)^{\otimes n}.
\end{equation}
The desymmetrization gate $DS$ is a unitary gate such that 
\begin{equation}
    \ket{0}^{\otimes n} \mapsto \ket{0}^{\otimes n}, \quad 
    \ket{W} = \frac{1}{\sqrt{n}}(\ket{10\cdots 0} + \ket{010\cdots 0} + \cdots + \ket{0\cdots 01}) \mapsto \ket{10\cdots 0}. 
\end{equation}
After $DS$ and C(NOT)$^{n-1}$ gates, the final quantum state is 
\begin{equation}
    \ket{\psi_\theta^{(n),\mathrm{final}}} = \cos^n(\theta-\theta_0) \ket{0}^{\otimes n} + i\sqrt{n}\sin(\theta-\theta_0)\cos^{n-1}(\theta-\theta_0)  \ket{1}^{\otimes n} + \cdots,
\end{equation}
where we omit in ``$\cdots$'' a state perpendicular to $\ket{0}^{\otimes n}$ and $\ket{1}^{\otimes n}$ whose norm is of magnitude $O(n(\theta-\theta_0)^2)$. Using the majority voting method as in the first example, we have 
\begin{gather}
    \prob(0)_{\mj} = \trace(\psi_\theta^{(n),{\rm final}} M_{0,\mj}) 
    = \cos^{2n}(\theta-\theta_0) \prob\left(Y_n \leq \frac{\lfloor n/2\rfloor}{n}\right) ~~~~~~~~~~~~~~~~~~~~~~~~~~\\~~~~~~~~~~~~~+ n \sin(\theta-\theta_0)^2 \cos^{2(n-1)}(\theta-\theta_0) \prob\left(Y_n > 1 - \frac{\lfloor n/2\rfloor}{n}\right) + O(n^2(\theta-\theta_0)^4),\\
    \prob(1)_{\mj} = \trace(\psi_\theta^{(n),{\rm final}} M_{1,\mj}) 
    = \cos^{2n}(\theta-\theta_0) \prob\left(Y_n > \frac{\lfloor n/2\rfloor}{n}\right) ~~~~~~~~~~~~~~~~~~~~~~~~~~\\~~~~~~~~~~~~~+ n \sin(\theta-\theta_0)^2 \cos^{2(n-1)}(\theta-\theta_0) \prob\left(Y_n \leq 1 - \frac{\lfloor n/2\rfloor}{n}\right) + O(n^2(\theta-\theta_0)^4). 
\end{gather}

For a sufficiently large $n$, we have $\lfloor n/2\rfloor/n - m = \Omega(1)$, and the achievable FI is 
\begin{align}
    F(\psi_\theta^{(n),{\rm final}},\{M_{0,\mj},M_{1,\mj}\}) 
    &= \frac{(\partial_\theta \prob(0)_{\mj})^2}{\prob(0)_{\mj}(1-\prob(0)_{\mj})}\\
    &= \frac{(2n\sin(\theta-\theta_0)\cos^{2n-1}(\theta-\theta_0) + e^{-\Omega(n)} +  O(n^2(\theta-\theta_0)^3))^2}{n\cos^{4n-2}(\theta-\theta_0)\sin(\theta-\theta_0)^2 + e^{-\Omega(n)}  + O(n^2(\theta-\theta_0)^4) }. 
\end{align}
Note that the first terms in the numerator and the denominator must be dominant terms because we have assumed $\abs{\theta - \theta_0} \ll 1/n$. Therefore we have, 
\begin{equation}
    F(\psi_\theta^{(n),{\rm final}},\{M_{0,\mj},M_{1,\mj}\}) \xrightarrow{n\rightarrow \infty} 4n. 
\end{equation}

The third example is phase sensing using classically mixed states, where 
\begin{equation}
    \rho_\theta^{(n)} = \left(\cos^2\theta \ket{0}\bra{0} + \sin^2\theta \ket{1}\bra{1}\right)^{\otimes n}. 
\end{equation}
Our preprocessing circuit is composed of a sorting channel $\mS_{\rm sorting}$, discarding $n-1$ qubits and preparing $\ket{0}^{\otimes n-1}$ and a C(NOT)$^{n-1}$ gate. In particular, the sorting channel $\mS_{\rm sorting}$ first performs a sorting network such that 
\begin{equation}
    \ket{b_1b_2\cdots b_n} \mapsto \ket{1^{\abs{\vb}} 0^{n-\abs{\vb}}},
\end{equation}
where $\vb \in \{0,1\}^n$. 
The second step of the sorting channel is to swap the first qubit with the $\lfloor n \sin^2\theta_0 \rfloor$-th qubit, where $\theta_0$ is chosen such that $\abs{\theta - \theta_0} \ll 1/\sqrt{n}$, i.e., $\abs{\theta - \theta_0}\sqrt{n} = o(1)$.  
Note that $\mS_{\rm sorting}$ is not a unitary channel. It can be viewed as an optimal quantum-classical channel as in \thmref{thm:metrology-capacity} that converts the state into a $(n+1)$-level quantum system with no sensitivity loss. We will later specify a circuit implementation of $\mS_{\rm sorting}$ using $O(n\log^2 n)$ ancillary qubits. 

After discarding all but the first probe qubit, we have 
\begin{equation}
    p_\theta\ket{0}\bra{0} + (1-p_\theta) \ket{1}\bra{1}, 
\end{equation}
where $p_\theta = \prob(Y^{\theta}_n \leq \lfloor n \sin^2\theta_0 \rfloor/n)$ is the probability that after flipping $n$ biased coins whose head probability is $\sin^2\theta$, the number of heads are smaller than or equal to $\lfloor n \sin^2\theta_0 \rfloor$. 
Here we let $X^{\theta}_1,\ldots,X^{\theta}_n$ be i.i.d. random variables such that $X^{\theta}_i = 0$ with probability $\cos^2(\theta)$ and $X^{\theta}_i = 1$ with probability $\sin^2(\theta)$ and $Y^\theta_n = (X^\theta_1+\cdots+X^\theta_n)/n$. 
In particular, we use the Berry--Esseen theorem which states that 
\begin{lemma}[Berry--Esseen theorem~\cite{berry1941accuracy,esseen1956moment}]
Let $B_1,\cdots,B_n$ be i.i.d. random variables such that $\bE[B_i] = 0$, $\bE[B_i^2] = \sigma^2 > 0$, and $\bE[\abs{B_i}^3] < \infty$. $F_n$ is the cumulative distribution function of $A_n\sqrt{n}/\sigma$, where $A_n = (B_1+B_2+\cdots+B_n)/n$ and $\Phi$ is the cumulative distribution function of the standard normal distribution, i.e., $\Phi(x) = \int_{-\infty}^x \frac{1}{\sqrt{2\pi}} e^{-\frac{1}{2}y^2} dy$. Then for all $x$ and $n$,
\begin{equation}
    \abs{F_n(x) - \Phi(x)} \leq \frac{C}{\sigma^3\sqrt{n}}
\end{equation}
for some constant $C$. 
\end{lemma}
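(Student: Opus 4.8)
The plan is to prove this quoted result by the classical characteristic-function (Fourier-analytic) route due to Esseen, since the statement is exactly the i.i.d.\ Berry--Esseen bound. First I would reduce to the standardized setting: replacing each $B_i$ by $B_i/\sigma$ costs nothing, because it only rescales $A_n\sqrt{n}/\sigma$ into the normalized sum $n^{-1/2}\sum_i B_i$ of unit-variance summands. Hence I may assume $\bE[B_i]=0$ and $\bE[B_i^2]=1$, set $\rho := \bE[\abs{B_i}^3]$ (which in the original variables is $\bE[\abs{B_i}^3]/\sigma^3$), and aim to show $\sup_x\abs{F_n(x)-\Phi(x)}\leq C\rho/\sqrt{n}$.

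The key tool is Esseen's smoothing inequality, which bounds the sup-distance between two distribution functions by an integral of the difference of their characteristic functions over a finite frequency window. Writing $\psi_n(t)$ and $e^{-t^2/2}$ for the characteristic functions of $n^{-1/2}\sum_i B_i$ and of the standard normal respectively, the smoothing lemma gives, for any $T>0$,
\begin{equation}
\sup_x \abs{F_n(x)-\Phi(x)} \leq \frac{1}{\pi}\int_{-T}^{T} \abs{\frac{\psi_n(t)-e^{-t^2/2}}{t}}\,dt + \frac{c_0}{T},
\end{equation}
where $c_0$ is an absolute constant arising from $\sup_x\Phi'(x)=1/\sqrt{2\pi}$. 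This converts the problem into a pointwise estimate on $\psi_n(t)-e^{-t^2/2}$.

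Next I would estimate the characteristic function. Letting $\varphi$ be the characteristic function of a single standardized summand, we have $\psi_n(t)=\varphi(t/\sqrt{n})^n$. A third-order Taylor expansion with integral remainder yields $\varphi(u)=1-u^2/2+\theta(u)$ with $\abs{\theta(u)}\leq \rho\abs{u}^3/6$, valid since the third absolute moment is finite. The heart of the argument is the factorization bound $\abs{z^n-w^n}\leq n\abs{z-w}\max(\abs{z},\abs{w})^{n-1}$ combined with the elementary estimate $\abs{e^{z}-1-z}\leq \abs{z}^2 e^{\abs{z}}/2$, which let me compare $\varphi(t/\sqrt{n})^n$ to $(e^{-t^2/(2n)})^n=e^{-t^2/2}$. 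Restricting to $\abs{t}\leq T$ with the standard choice $T=c_1\sqrt{n}/\rho$ keeps $\abs{t/\sqrt{n}}$ small enough to control $\abs{\varphi}$ and its logarithm, producing a pointwise bound of the form $\abs{\psi_n(t)-e^{-t^2/2}}\leq C'(\rho/\sqrt{n})\abs{t}^3 e^{-t^2/4}$ on this range.

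Finally I would assemble the pieces: inserting this pointwise bound into the smoothing integral, the factor $\abs{t}^3 e^{-t^2/4}/\abs{t}=\abs{t}^2 e^{-t^2/4}$ integrates to an absolute constant over $\bR$, so the first term is $O(\rho/\sqrt{n})$, while the residual $c_0/T=c_0\rho/(c_1\sqrt{n})$ is also $O(\rho/\sqrt{n})$; balancing through the choice of $c_1$ and undoing the standardization yields the claimed $C/(\sigma^3\sqrt{n})$. The main obstacle I anticipate is the uniform control of $\abs{\varphi(t/\sqrt{n})}$ over the \emph{entire} window $\abs{t}\leq T$ rather than merely for small $t$: one must establish an estimate like $\abs{\varphi(u)}\leq e^{-u^2/3}$ for $\abs{u}$ up to order $1/\rho$, which requires trading the quadratic Taylor term against the cubic remainder and does \emph{not} follow from the trivial bound $\abs{\varphi}\leq 1$; pinning down the constants in this step so that the two error contributions balance is the delicate part.
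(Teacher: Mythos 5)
There is no proof in the paper to compare against: the statement is quoted as a classical result (Berry 1941, Esseen 1942/1956) and used as a black box in the appendix to analyze the sorting-channel example, so the paper's ``proof'' is the citation itself. Your proposal is the standard Fourier-analytic proof of exactly this theorem, and it is sound in outline: standardization, Esseen's smoothing inequality, the expansion $\varphi(u)=1-u^2/2+\theta(u)$ with $\abs{\theta(u)}\leq \rho\abs{u}^3/6$, the factorization bound $\abs{z^n-w^n}\leq n\abs{z-w}\max(\abs{z},\abs{w})^{n-1}$, and the cutoff $T=c_1\sqrt{n}/\rho$ are precisely the ingredients of the classical argument. The obstacle you flag at the end is real but resolves exactly as you anticipate: for $\abs{u}\leq 1/\rho$ one has $\rho\abs{u}^3/6\leq u^2/6$, hence $\abs{\varphi(u)}\leq 1-u^2/2+u^2/6=1-u^2/3\leq e^{-u^2/3}$, which (together with $\rho\geq\sigma^3$, so that $T\leq c_1\sqrt{n}$ in standardized variables) gives the needed uniform decay over the whole window; small $n$ is then absorbed into $C$ since $\abs{F_n-\Phi}\leq 1$ always. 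One point worth making explicit: as stated in the paper the bound reads $C/(\sigma^3\sqrt{n})$ with no third-moment factor, so the constant $C$ is necessarily distribution-dependent (it absorbs $\bE[\abs{B_i}^3]$); your derivation correctly produces the sharper form $C_0\,\bE[\abs{B_i}^3]/(\sigma^3\sqrt{n})$ with $C_0$ absolute, which implies the paper's version. In short, your route is not a different route from the paper's --- it is the canonical proof of a result the paper deliberately imports without proof.
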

\noindent Then we have 
\begin{align}
    p_\theta 
    &= \prob\left(Y^{\theta}_n \leq \lfloor n \sin^2\theta_0 \rfloor/n\right) = F_n\left( \frac{\sqrt{n}\left( \frac{\lfloor n \sin^2\theta_0\rfloor}{n} - \sin^2\theta\right)}{\abs{\cos(\theta)\sin(\theta)}}\right) 
    \\ 
    &= \Phi\left( \frac{\sqrt{n}\left( \frac{\lfloor n \sin^2\theta_0\rfloor}{n} - \sin^2\theta\right)}{\abs{\cos(\theta)\sin(\theta)}}\right) + O\left(\frac{1}{\sqrt{n}}\right) = \frac{1}{2} + o\left(1\right),
\end{align}
where we use $\Phi(0) = 1/2$ and $\frac{\lfloor n \sin^2\theta_0\rfloor}{n} - \sin^2\theta = o(1/\sqrt{n})$. 
On the other hand, 
\begin{align}
    \partial_\theta p_\theta 
    &= \partial_\theta\,\prob\left(Y^{\theta}_n \leq \lfloor n \sin^2\theta_0 \rfloor/n\right) \\
    &= \partial_\theta\,\sum_{0 \leq n_h \leq \lfloor n \sin^2\theta_0 \rfloor } \binom{n}{n_h} \sin^{2n_h}\theta\cos^{2(n-n_h)}\theta\\
    &= \sum_{0 \leq n_h \leq \lfloor n \sin^2\theta_0 \rfloor } \binom{n}{n_h}  \left( 2n_h\sin^{2n_h-1}\theta\cos^{2(n-n_h)+1}\theta - 2(n-n_h) \sin^{2n_h+1}\theta\cos^{2(n-n_h)-1}\theta \right)\\
    &= -2n \binom{n-1}{\lfloor n \sin^2\theta_0 \rfloor} \sin^{2\lfloor n \sin^2\theta_0 \rfloor+1}\theta\cos^{2(n-\lfloor n \sin^2\theta_0 \rfloor)-1}\theta,
\end{align}
and asymptotically we have (using $\simeq$ to mean asymptotically equivalence), 
{\small 
\begin{align}
    \partial_\theta p_\theta  & \simeq -\frac{2n}{\sqrt{2\pi}} \frac{(n-1)^{n-\frac{1}{2}}}{\lfloor n \sin^2\theta_0 \rfloor^{\lfloor n \sin^2\theta_0 \rfloor + \frac{1}{2}}(n-1-\lfloor n \sin^2\theta_0 \rfloor)^{n-\frac{1}{2}-\lfloor n \sin^2\theta_0 \rfloor}}
\\
&\simeq
-\frac{2\sqrt{n}}{\sqrt{2\pi}} \left(\frac{(n-1)\sin^2\theta}{(n-1)\sin^2\theta + c_n}\right)^{(n-1)\sin^2\theta + c_n}\left(\frac{(n-1)\cos^2\theta}{(n-1)\cos^2\theta - c_n}\right)^{(n-1)\cos^2\theta - c_n}\\
&=
-\frac{2\sqrt{n}}{\sqrt{2\pi}} \exp\left( -\left((n-1)\sin^2\theta + c_n\right)\ln\left(1 + \frac{c_n}{(n-1)\sin^2\theta} \right) -\left((n-1)\cos^2\theta - c_n\right)\ln\left(1 - \frac{c_n}{(n-1)\cos^2\theta} \right) \right)\\
&= 
-\frac{2\sqrt{n}}{\sqrt{2\pi}} \exp\left(- \frac{2c_n^2}{(n-1)\sin^2(2\theta)} + O\left(\frac{c_n^3}{n^2}\right)\right)
\simeq
-\frac{2\sqrt{n}}{\sqrt{2\pi}}, 
\end{align}}
\hspace{-0.085in} where we use Stirling's formula~\cite{spencer2014asymptopia} in the first step, use $c_n:= -(n-1)\sin^2\theta + \lfloor n\sin^2\theta_0\rfloor$ in the second step, use the Taylor expansion $\ln(1+x) = x - x^2/2 + O(x^3)$ and $c_n = o(\sqrt{n})$ in the last step. 

After resetting the discard qubits to be $\ket{0}^{\otimes n-1}$ and performing the C(NOT)$^{n-1}$ gate, we have 
\begin{equation}
    \rho_\theta^{(n),\mathrm{final}} = p_\theta\ket{0}^{\otimes n}\bra{0}^{\otimes n} + (1-p_\theta) \ket{1}^{\otimes n}\bra{1}^{\otimes n}. 
\end{equation}
Using the majority voting method, we have 
\begin{gather}
    \prob(0)_{\mj} = \trace(\rho_\theta^{(n),{\rm final}} M_{0,\mj}) 
    = p_\theta \prob\left(Y_n \leq \frac{\lfloor n/2\rfloor}{n}\right) + (1-p_\theta) \prob\left(Y_n > 1 - \frac{\lfloor n/2\rfloor}{n}\right),\\
    \prob(1)_{\mj} = \trace(\rho_\theta^{(n),{\rm final}} M_{1,\mj}) 
    = p_\theta \prob\left(Y_n > \frac{\lfloor n/2\rfloor}{n}\right) + (1-p_\theta) \prob\left(Y_n \leq 1 - \frac{\lfloor n/2\rfloor}{n}\right). 
\end{gather}
For a sufficiently large $n$, we have $\lfloor n/2\rfloor/n - m = \Omega(1)$, and the achievable FI is 
\begin{align}
    F(\rho_\theta^{(n),{\rm final}},\{M_{0,\mj},M_{1,\mj}\}) 
    &= \frac{(\partial_\theta \prob(0)_{\mj})^2}{\prob(0)_{\mj}(1-\prob(0)_{\mj})}\\
    &= \frac{(\partial_\theta p_\theta + e^{-\Omega(n)})^2}{(p_\theta + e^{-\Omega(n)})(1-p_\theta + e^{-\Omega(n)})}, 
\end{align}
Therefore we have, 
\begin{equation}
    F(\rho_\theta^{(n),{\rm final}},\{M_{0,\mj},M_{1,\mj}\}) \xrightarrow{n\rightarrow \infty} \frac{8}{\pi}n. 
\end{equation}
Note that the reason that the QFI is not achieved here lies in the second step where all but one probe qubit are discarded. If, instead, we can perfectly encode the entire $(n+1)$-level system that we obtain in the first step to a $n$-qubit state that is immune to measurement errors, the QFI will be achieved.


\subsection{Gate complexity}

Finally, we discuss the gate complexities of implementing C(NOT)$^{n-1}$, $DS$, and $\mS_{\rm sorting}$. We will show that, assuming arbitrary two-qubit gates and all-to-all connectivity, the C(NOT)$^{n-1}$ and the desymmetrization gate $DS$ can be implemented using $O(n)$ gates in depth $O(\log n)$. The $\mS_{\rm sorting}$ channel can be implemented using $O(n\log^2 n)$ gates and $O(n\log^2 n)$ ancillary qubits in depth $O(\log^2 n)$. 

We first investigate the implementation of C(NOT)$^{n-1}$ gates and $DS$ gates. Note that in experimental platforms like Rydberg atoms where long-range interactions are available, the C(NOT)$^{n-1}$ gate can be implemented in a single step~\cite{muller2009mesoscopic}. However, we focus on the standard quantum circuit model here where only two-qubit gates are allowed. 

Ref.~\cite{cruz2019efficient} included detailed quantum circuits for C(NOT)$^{n-1}$ gates and $DS$ gates using $O(n)$ gates in depth $O(\log n)$. For completeness, we briefly discuss these circuits here, in the case where $n = 2^k$. 

To implement C(NOT)$^{n-1}$, one starts with a Hadamard gate on the first qubit, and then implement C$_1$(NOT)$_2$ (which means a CNOT gate where the control qubit is the first qubit and the target qubit is the second) in the first step, C$_1$(NOT)$_3$ and C$_2$(NOT)$_4$ in the second step, and so on. The circuit continues in the same way. In the final step, i.e., the $k$-th step, C$_l$(NOT)$_{2^{k-1}+l}$ for $l = 1,2,\ldots,2^{k-1}$ are implemented. One can verifies the above $O(\log n)$-depth circuit implements a C(NOT)$^{n-1}$ gate using $O(n)$ single- or two-qubit gates. 

To implement $DS$, one can equivalently consider the circuit implementation of $DS^\dagger$ and then conjugate and reverse the orders of each gate. $DS^\dagger$ is a gate that prepares $W$ states, where  
\begin{equation}
    \ket{10\cdots 0} \mapsto \frac{1}{\sqrt{n}}\big(\ket{10\cdots 0} + \ket{010\cdots 0} + \cdots + \ket{0\cdots 01}\big),\quad \ket{00\cdots 0} \mapsto \ket{00\cdots 0}. 
\end{equation}
To implement $DS^\dagger$, one starts with a Pauli-X gate on the first qubit, then performs a two-qubit gate that is a composition of a C$_1$H$_2$ (controlled-Hadamard) gate and then a C$_2$NOT$_1$ gate (again, we use subscripts $l$ to denote the $l$-th qubit) in the first step. C$_1$H$_3$+C$_3$NOT$_1$ and C$_2$H$_4$+C$_4$NOT$_2$ in the second step, and so on. The circuit continues in the same way. In the final step, i.e., the $k$-th step, C$_l$H$_{2^{k-1}+l}$+C$_{2^{k-1}+l}$NOT$_l$ for $l = 1,2,\ldots,2^{k-1}$ are implemented. One can verifies the above $O(\log n)$-depth circuit implements a $DS^\dagger$ gate using $O(n)$ single- or two-qubit gates. 

Finally, we discuss the implementation of $\mS_{\rm{sorting}}$ which can be decomposed into a sorting network that implements 
\begin{equation}
\label{eq:sorting}
    \ket{b_1b_2\cdots b_n} \mapsto \ket{1^{\abs{\vb}} 0^{n-\abs{\vb}}},
\end{equation}
for $\vb \in \{0,1\}^n$ and a SWAP gate that swaps the first qubit with the $\lfloor n \sin^2\theta_0 \rfloor$-th qubit. Now we discuss the implementation of the sorting network (\eqref{eq:sorting}), which directly follows from a classical sorting network because our input state is a classically mixed state and the sorting channel is incoherent. To be specific, we define a comparator to be a two-qubit quantum channel such that 
\begin{equation}
    \ket{ij} \mapsto 
    \begin{cases}
    \ket{ij} & \text{when }i \geq j, \\
    \ket{ji} & \text{when }j > i. 
    \end{cases}
\end{equation}
It can be implemented using a unitary gate acting on two probe qubits and one ancillary qubit such that 
\begin{equation}
    \ket{ij}\ket{0} \mapsto 
    \begin{cases}
    \ket{ij}\ket{0} & \text{when }i \geq j, \\
    \ket{ji}\ket{1} & \text{when }j > i, 
    \end{cases}
\end{equation}
and discarding the ancillary qubit afterwards. Our sorting network (\eqref{eq:sorting}) then follows from a classical sorting network, replacing all its classical comparators with the two-qubit sorting channels described above. 

Here we use a classical sorting network called a bitonic sorter~\cite{parhami2006introduction} that uses $O(n \log^2 n)$ comparators in depth $O(\log^2 n)$. Note that it is also possible to construct sorting networks of depth $O(\log n)$ (and size $O(n \log^2 n)$)~\cite{ajtai1983O}, although the linear constant is large, making it impractical. We briefly summarize the bitonic sort algorithm in the following pseudocode. Note that here we assume $n = 2^k$ (we can always add more qubits in prepared in $\ket{0}$ to make $n$ a power of $2$).  
\begin{algorithm}[H]
\caption{Bitonic Sort Algorithm}
\begin{algorithmic}[1]
\Require $arr$: the array to be sort. 
\Ensure $arr$: the sorted array in descending order. 
\State $arr$ = \textsc{BitonicSort}($arr$,``descending'')
\Function{BitonicSort}{$arr$, direction}
  \State $n$ = the length of $arr$
  \If{$n > 1$}
    \State \textsc{BitonicSort}($arr$[1:$n/2$],``ascending'')
    \State \textsc{BitonicSort}($arr$[$n/2$+1:$n$],``descending'')
    \State \textsc{Merge}($arr$, direction)
  \EndIf
\EndFunction
\Function{Merge}{$arr$, direction}
  \State $n$ = the length of $arr$
  \If{$n > 1$}
    \For{$i=1,2,\ldots,n/2$}
    \State Exchange $arr$[i] and $arr$[i+n/2] if they are not in the right order
    \EndFor
    \State \textsc{Merge}($arr$[1:$n/2$], direction)
    \State \textsc{Merge}($arr$[$n/2$+1:$n$], direction)
  \EndIf
\EndFunction
\end{algorithmic}
\end{algorithm}
The bitonic sort algorithm can be divided into two steps: (1) forming a bitonic sequence; (2) sorting a bitonic sequence. A bitonic sequence of length $n$ is defined to be a sequence $\vb$ where there is an index $i$ such that $(b_1,\ldots,b_i)$ is monotonically non-decreasing, and $(b_{i+1},\ldots,b_n)$ is monotonically non-increasing, or a sequence that can be cyclically shifted into the above sequence. \textsc{Merge}($arr$, direction) sorts a bitonic sequence $arr$ in the required order. It can be proven that the \textsc{Merge} function contains $O(\log n)$ parallel computing steps; and the  \textsc{BitonicSort} function contains $O(\log^2 n)$ parallel computing steps. As a result, one can see that the bitonic sorter uses $O(n \log^2 n)$ comparators in depth $O(\log^2 n)$.

\end{document}